%File: formatting-instructions-latex-2024.tex
%release 2024.0
\documentclass[letterpaper]{article} % DO NOT CHANGE THIS
\usepackage{aaai24}  % DO NOT CHANGE THIS
\usepackage{times}  % DO NOT CHANGE THIS
\usepackage{helvet}  % DO NOT CHANGE THIS
\usepackage{courier}  % DO NOT CHANGE THIS
\usepackage[hyphens]{url}  % DO NOT CHANGE THIS
\usepackage{graphicx} % DO NOT CHANGE THIS
\urlstyle{rm} % DO NOT CHANGE THIS
  % DO NOT CHANGE THIS
\usepackage{natbib}  % DO NOT CHANGE THIS AND DO NOT ADD ANY OPTIONS TO IT
\usepackage{caption} % DO NOT CHANGE THIS AND DO NOT ADD ANY OPTIONS TO IT
\frenchspacing  % DO NOT CHANGE THIS
\setlength{\pdfpagewidth}{8.5in}  % DO NOT CHANGE THIS
\setlength{\pdfpageheight}{11in}  % DO NOT CHANGE THIS
%
% These are recommended to typeset algorithms but not required. See the subsubsection on algorithms. Remove them if you don't have algorithms in your paper.
\usepackage{algorithm}
\usepackage{algorithmic}
\usepackage{amsmath,amsthm}
\usepackage{amsmath}
\usepackage{amssymb}
\usepackage{epsfig}
\usepackage{booktabs}
\usepackage{subfigure}
% \usepackage{subcaption}
% Comment out this line in the camera-ready submission
\usepackage{threeparttable}
% \usepackage{subfigure}
% \graphicspath{figs}
\usepackage{makecell}
\usepackage{multirow}
\usepackage{color}
\usepackage{IEEEtrantools} 

\newtheorem{theorem}{Theorem}
\newtheorem{definition}{Definition}
\newtheorem{lemma}{Lemma}
\newtheorem{corollary}{Corollary}
\newtheorem{proposition}{Proposition}
\newtheorem{assumption}{Assumption}
\usepackage{IEEEtrantools} 
\def\ie{\textit{i.e.}}
 
\def\eg{\textit{e.g.}}

%
% These are are recommended to typeset listings but not required. See the subsubsection on listing. Remove this block if you don't have listings in your paper.
\usepackage{newfloat}
\usepackage{listings}
\DeclareCaptionStyle{ruled}{labelfont=normalfont,labelsep=colon,strut=off} % DO NOT CHANGE THIS
\lstset{%
	basicstyle={\footnotesize\ttfamily},% footnotesize acceptable for monospace
	numbers=left,numberstyle=\footnotesize,xleftmargin=2em,% show line numbers, remove this entire line if you don't want the numbers.
	aboveskip=0pt,belowskip=0pt,%
	showstringspaces=false,tabsize=2,breaklines=true}
\floatstyle{ruled}
\newfloat{listing}{tb}{lst}{}
\floatname{listing}{Listing}
%
% Keep the \pdfinfo as shown here. There's no need
% for you to add the /Title and /Author tags.
\pdfinfo{
/TemplateVersion (2024.1)
}

\setcounter{secnumdepth}{2} %May be changed to 1 or 2 if section numbers are desired.

% The file aaai24.sty is the style file for AAAI Press
% proceedings, working notes, and technical reports.
%

% Title

% Your title must be in mixed case, not sentence case.
% That means all verbs (including short verbs like be, is, using,and go),
% nouns, adverbs, adjectives should be capitalized, including both words in hyphenated terms, while
% articles, conjunctions, and prepositions are lower case unless they
% directly follow a colon or long dash
\title{Revisiting Gradient Pruning: A Dual Realization for Defending against Gradient Attacks}
\author {
    % Authors
Lulu Xue\textsuperscript{\rm1},
Shengshan Hu\textsuperscript{\rm1}\thanks{Corresponding Author.},
Ruizhi Zhao\textsuperscript{\rm1},
 Leo Yu Zhang\textsuperscript{\rm2},
Shengqing Hu\textsuperscript{\rm3},
Lichao Sun\textsuperscript{\rm4},
Dezhong Yao\textsuperscript{\rm5}
}
\affiliations {
    % Affiliations
    \textsuperscript{\rm 1} School of Cyber Science and Engineering, Huazhong University of Science and Technology\\
\textsuperscript{\rm 2} School of Information and Communication Technology, Griffith University\\
 \textsuperscript{\rm 3} Department of Nuclear Medicine, Union Hospital, Tongji Medical College, Huazhong University of Science and Technology\\

 \textsuperscript{\rm 4} Department of Computer Science and Engineering, Lehigh University\\
    \textsuperscript{\rm 5} School of Computer Science and Technology, Huazhong University of Science and Technology\\
$\{$lluxue,hushengshan,zhaoruizhi,dyao$\}$@hust.edu.cn,  hsqha@126.com, leo.zhang@griffith.edu.au, james.lichao.sun@gmail.com
}
\usepackage{bibentry}
% END REMOVE bibentry

\begin{document}

\maketitle

\begin{abstract}
Collaborative learning (CL) is a distributed learning framework that aims to protect user privacy by allowing users to jointly train a model by sharing their gradient updates only. 
However, gradient inversion attacks (GIAs), which recover users' training data from shared gradients, impose severe privacy threats to CL. 
Existing defense methods adopt different techniques, e.g., differential privacy, cryptography, and perturbation defenses, to defend against the GIAs. 
Nevertheless, all current defense methods suffer from a poor trade-off between privacy, utility, and efficiency.
To mitigate the weaknesses of existing solutions, we propose a novel defense method, Dual Gradient Pruning (DGP), based on gradient pruning, which can improve communication efficiency while preserving the utility and privacy of CL. Specifically, DGP slightly changes gradient pruning with a stronger privacy guarantee. And DGP can also significantly improve communication efficiency with a theoretical analysis of its convergence and generalization. Our extensive experiments show that DGP can effectively defend against the most powerful GIAs and reduce the communication cost without sacrificing the model's utility.
\end{abstract}
\section{Introduction}\label{sec:intro}

Collaborative learning (CL)~\cite{CL} is a distributed learning framework, where multiple users train a model locally and share their gradients among the peers or to a centralized server. 
CL claims to protect user privacy since users do not need to share their local (private) data directly. 
However, recent studies reveal that gradients can be used to recover the original training data information via gradient inversion attacks (GIAs)~\cite{dlg,IVG}.
To against GIAs, a large number of studies have been proposed, where they leverage the advanced privacy protection techniques, such as differential privacy (DP)~\cite{dp4}, cryptography~\cite{encrypt1,encrypt2,encrypt0} and perturbation defense~\cite{ATS,soteria,Precode}. However, none of the existing defense methods could take care of all privacy, utility, and efficiency difficulties in the CL framework. 
% For example, DP and cryptography-based methods could effectively defend GIAs, but sacrifice either the utility or efficiency respectively~\cite{dp4,encrypt1, encrypt2,encrypt0}.
% To achieve better utility and efficiency in CL, various perturbations-based methods have been proposed~\cite{ATS,soteria1,Precode}.

For example, traditional defenses such as DP and cryptography-based methods strike a balance among privacy protection, model performance, and efficiency simultaneously ~\cite{dp4,encrypt1, encrypt2,encrypt0}. To address this challenge, various perturbations-based methods have been proposed~\cite{ATS,soteria,Precode}.
But they all rely on auxiliary optimization modules to reduce certain privacy leakage and 
cannot defend against all GIAs in practice (see {Sec.~\ref{sec:privacy}} for details).
For instance, perturbation-based defense methods (\ie, Precode~\cite{Precode}, Soteria~\cite{soteria}) can effectively defend against passive GIAs~\cite{IVG,sapag,framework}, but fail to work against the active GIAs~\cite{curious,exploring}, {which is considered as the state-of-the-art attack method}. 
On the contrary, the classic Top-$k$ based gradient pruning method~\cite{DGC,Top-k} is  generally ineffective for enhancing privacy against passive GIAs, and corresponding defenses (\eg,~Outpost~\cite{outpost}) offer limited protection.
%generally considered as a poor privacy protection solution on passive GIAs, \blue{and the Top-$k$ based defenses (\eg~Outpost~\cite{outpost}) provide limited privacy protection against passive GIAs.} 
%
But we find that they significantly outperform recent defense methods under the active attack. Tab.~\ref{tab:four attacks} gives a detailed experimental result for this observation. The new findings inspire us to seek a more practical and effective defense against both passive and active GIAs.
In this paper, we propose a new gradient pruning-based method, \textbf{D}ual \textbf{G}radient \textbf{P}runing (DGP). Dual gradient pruning is a novel gradient pruning technique, which removes top-$k_1$ largest gradient parameters and the bottom-$k_2$ smallest gradient parameters from the local model. DGP leads to a strong privacy protection against both passive GIAs and active GIAs.

To measure the level of protection, we present the theoretical analysis of reconstruction error from pruned gradients, showing that the error is proportional to gradient distance. So removing larger gradient parameters can rapidly enlarge the gradient distance, resulting in a significant reconstruction error. However, removing many larger parameters will significantly impact the model's utility. Thus, to improve the pruning ratio, which is essential to robustness against active attack~\cite{curious,robbing}, we also remove smaller gradient parameters. In this way, our method could significantly mitigate GIAs without affecting the model's utility.

We conduct extensive experiments to evaluate our method. The quantitative and visualized results show that our design can effectively make recovered images unrecognizable under different attacks, and reduce the communication cost. 
Our contributions are as follows: 1) We revisit gradient pruning to show its potential for mitigating GIAs; 2) We propose an improved gradient pruning strategy to provide sufficient privacy guarantee while balancing the model accuracy and the system efficiency; 3) We conduct extensive experiments to show that our design outperforms existing defense methods \textit{w.r.t.} privacy protection, model accuracy, and system efficiency.
\section{Related Work}\label{sec:relatedwork}
Collaborative learning~\cite{CL} is considered to be a privacy-preserving framework for distributed machine learning as the training data is not directly outsourced. However, the emerging of GIAs~\cite{dlg,a1,idlg,IVG,can,curious,see,R-GAP,robbing} shatters this conception. It has been proven that the attacker (\eg, a curious server) can easily recover the private data from gradient to a great extent. 
The privacy guarantee of collaborative learning urgently needs to be strengthened.

\textbf{Traditional Defense.}~Traditionally, there are two approaches to construct privacy-preserving collaborative learning: using DP to disturb gradients~\cite{dp4,dp2,dp0,dp3,dp1} or using cryptographic tools to perform secure aggregation~\cite{encrypt3,encrypt1,encrypt2,encrypt4,shuffle1,encrypt0}.
DP~\cite{dp4} is a popular and effective privacy protection mechanism by adding random noise to the raw data, but it is well known that the noise introduced by DP can greatly degrade the model accuracy when meaningful privacy is enforced~\cite{dp5}. Cryptographic-based secure aggregation can guarantee both privacy and accuracy simultaneously, but it incurs expensive computation and communication costs~\cite{op}. Using the shuffle model~\cite{shuffle0,shuffle1} can only provide anonymity. Moreover, it totally changes the system model of collaborative learning since an additional semi-trusted third party is introduced to work cooperatively with the server.

\textbf{Perturbation Defense.}~Recently, researchers have begun to explore the possibility of constructing new gradient perturbation mechanisms to better balance privacy and accuracy. \cite{soteria} proposed Soteria, a scheme that perturbs the representation of inputs by pruning the gradients of a single layer.
 \cite{ATS} proposed ATS, an optimized training data augmentation policy by transforming original sensitive images into alternative inputs, to reduce the visibility of reconstructed images.
 \cite{Precode} presented Precode to extend the model architecture by using variational bottleneck (VB)~\cite{VB} to prevent attackers from obtaining optimal solutions to reconstructed data.
{
These works focus on the semi-honest setting~\cite{dlg,dlg0,framework} but fail to protect privacy when an active server modifies the model to launch GIAs~\cite{robbing}.}
% These defenses work well against GIAs in the semi-honest setting~\cite{IVG,dlg0,framework,dlg}, but fail to protect privacy when an active server modifies the model to launch GIAs~\cite{curious,robbing}.
Moreover, these works suffer from high computation costs or a huge communication burden. 

% Moreover, these works suffer from high computation cost or huge communication burden. 

\textbf{Gradient Pruning Defense.}~From an independent research domain, gradient pruning has been commonly used for  saving communication bandwidth.
The most common pruning strategy is Top-$k$ selection, which retains top $k$ gradient parameters with the largest absolute values \cite{DGC,Top-k}. 
It has been widely proved that gradient pruning provides very limited privacy protection ability~\cite{dlg,ATS,evaluating,soteria,Precode} unless a high pruning ratio (\eg, removing 99\% of the gradients) is used at the cost of 10\% accuracy drop~\cite{evaluating}. 
However, we emphasize that this is misunderstood as they only consider the Top-$k$ selection strategy and it has never received an in-depth investigation in the field of security. It is originally designed for improving system efficiency, thus a direct application inherently suffers from many weaknesses. 
{Recently, \cite{outpost} proposed Outpost, a privacy-preserving method that combines Top-$k$ gradient pruning with adaptive noise addition. However, our experiments indicate that Outpost cannot  effectively defend against passive GIAs.} In contrast, our work shows that a slight modification can unleash the potential of gradient pruning to provide a strong privacy guarantee, as shown in Sec.~\ref{sec:OurMethod}.
\section{Threat Model and Gradient Attacks}\label{Sec:ThreatModel}
In this work, we consider a strong threat scenario, where an active server, after receiving gradients from users, tries to reconstruct the local training data and is motivated to modify model parameters in each iteration to strengthen the attack effect. 
As will be shown in Sec.~\ref{Sec:Theory} and Sec.~\ref{experiments}, our method provides a theoretical guarantee against passive attacks and empirical protection against active attacks. So we briefly discuss both kinds of attacks below. 

\noindent\textbf{Analytical Attack (Passive).} Analytical attack exploits the structure of the gradients to recover the inputs, such as using gradient bias terms~\cite{a0}. Recently proposed R-gap attack~\cite{R-GAP} exploits the recursive relationship between gradient layers to solve the input. An effective analytical attack depends on the specific structure and parameters of gradients.

\noindent\textbf{Optimization Attack (Passive).} Optimization attack is firstly proposed in \cite{dlg}, which approximates the desired data $(\mathbf{x}, \mathbf{y})$ with dummy data  $(\mathbf{x^*},\mathbf{y^*})$ by  optimizing 
 the euclidean distance between the gradients $\textbf{g}^*$ (generated by dummy data $(\mathbf{x^*},\mathbf{y^*})$) and the original gradients $\nabla \mathbf{W}$ (produced by real private data $(\mathbf{x}, \mathbf{y})$) with L-BFGS optimizer.
{\cite{IVG} proposed IG, optimizing the cosine distance with Adam optimizer, and \cite{see} proposed GI, optimizing the Euclidean distance with Adam optimizer. These methods are state-of-the-art optimization attacks. Furthermore, recent works~\cite{deep,GGL} utilize GANs to generate data approximating the input. 
However, these attacks are impractical as they necessitate training GANs with vast amounts of data that closely resemble private data.}

Despite different optimizers can be used to achieve better attack quality~\cite{IVG,sapag,framework},  the existing attacks are all measured by the distance between the virtual gradients $\textbf{g}^*$ and the original gradients $\nabla \mathbf{W}$.
We therefore propose a general definition for passive attacks to better evaluate their performance. From Definition~\ref{Def:attack}, for a given  success probability $(1-\delta)$,  a smaller $\varepsilon$ indicates a better attack strategy $\mathcal{A}$.
%As shown in \cref{Def:attack} below, a smaller $\varepsilon$ indicates a stronger optimization attack.
\begin{definition}
\label{Def:attack}
A passive attack $\mathcal{A}$ is a $(\varepsilon,\delta)$-passive attack, if it satisfies:
\begin{equation}
\mathbb{P}(\mathbb{E}(\mathcal{D}_\mathcal{A}(\nabla \mathbf{W}, \mathbf{g}^*)) 
\le \varepsilon )\ge 1-\delta.
\end{equation}
where $\mathbb{P}$ represents the probability, $\mathbb{E}$ represents the expectation,  $\mathcal{D}_\mathcal{A}$ is the distance  (commonly instantiated with Euclidean or cosine distance) estimated under $\mathcal{A}$.
\end{definition}

\noindent\textbf{Active Server Attack.}~In this kind of attack, the server can  actively modify the global model  to realize a better attack result rather than honestly executing the protocols~\cite{curious,exploring,fishing}. 
Recently proposed   Rob  attack~\cite{robbing}  adds imprint modules to the model and uses the difference between the gradient parameters in adjacent rows of the imprint module to recover the data, achieving the best attack effect in the literature.
\section{Dual Gradient Pruning }\label{sec:OurMethod}
% Aligned Dual Gradient Pruning
\subsection{Analysis of Gradient Pruning}
\label{Sec:rob_ivg}
We owe the failure of common Top-$k$ gradient selection methods to two reasons: 1) the distance between the Top-$k$ pruned gradient {$\mathbf{g}$} and the real gradient $\nabla \mathbf{W}$ is small; and 2) large gradient parameters in $\nabla \mathbf{W}$ also reveal label information about user data. 
\begin{figure}[!t]
  \centering
  \subfigure[PSNR ($\downarrow$)]{
    \includegraphics[width=0.2\textwidth]{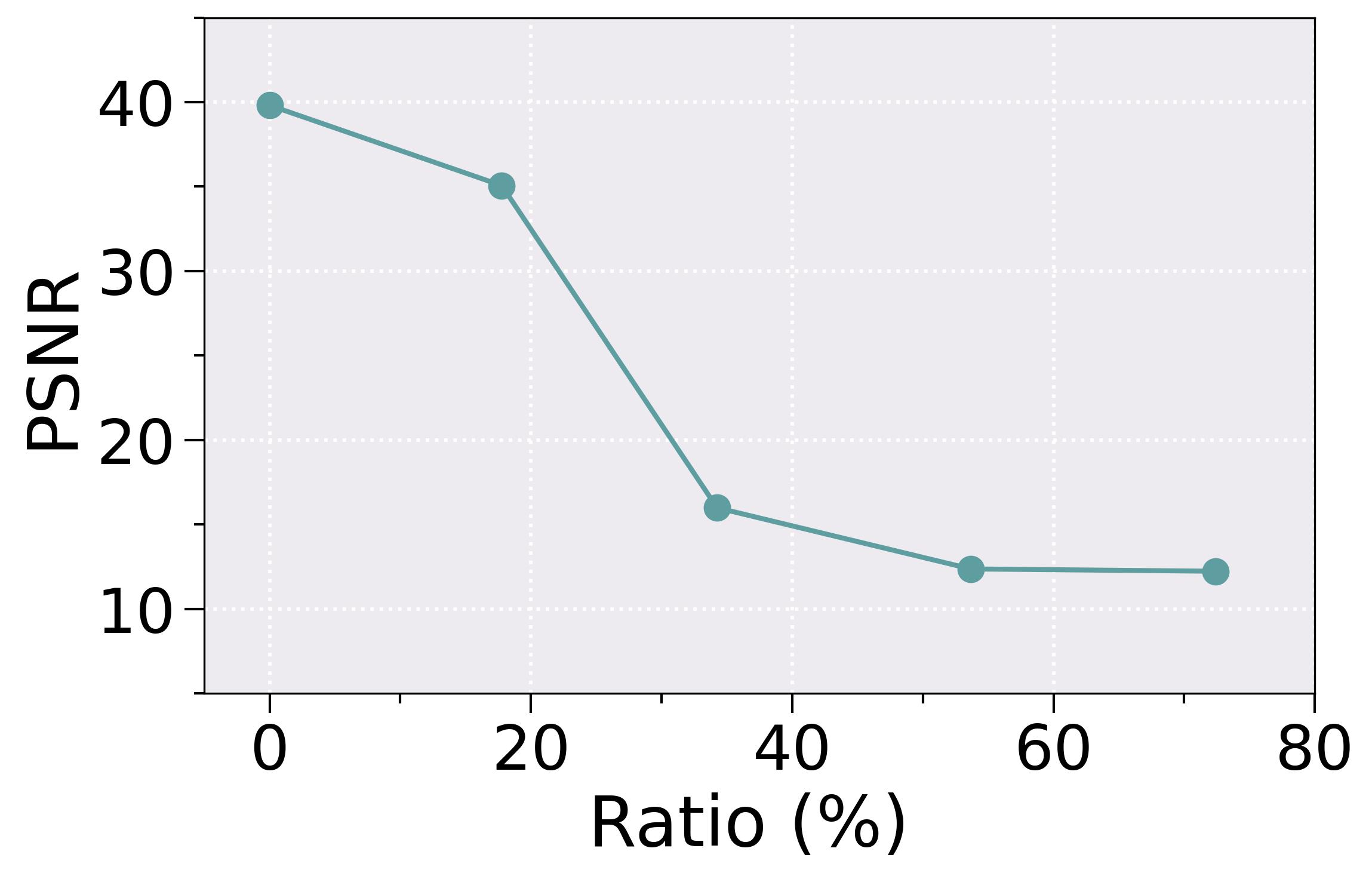}}
  \subfigure[MSE ($\uparrow$)]{
   \includegraphics[width=0.2\textwidth]{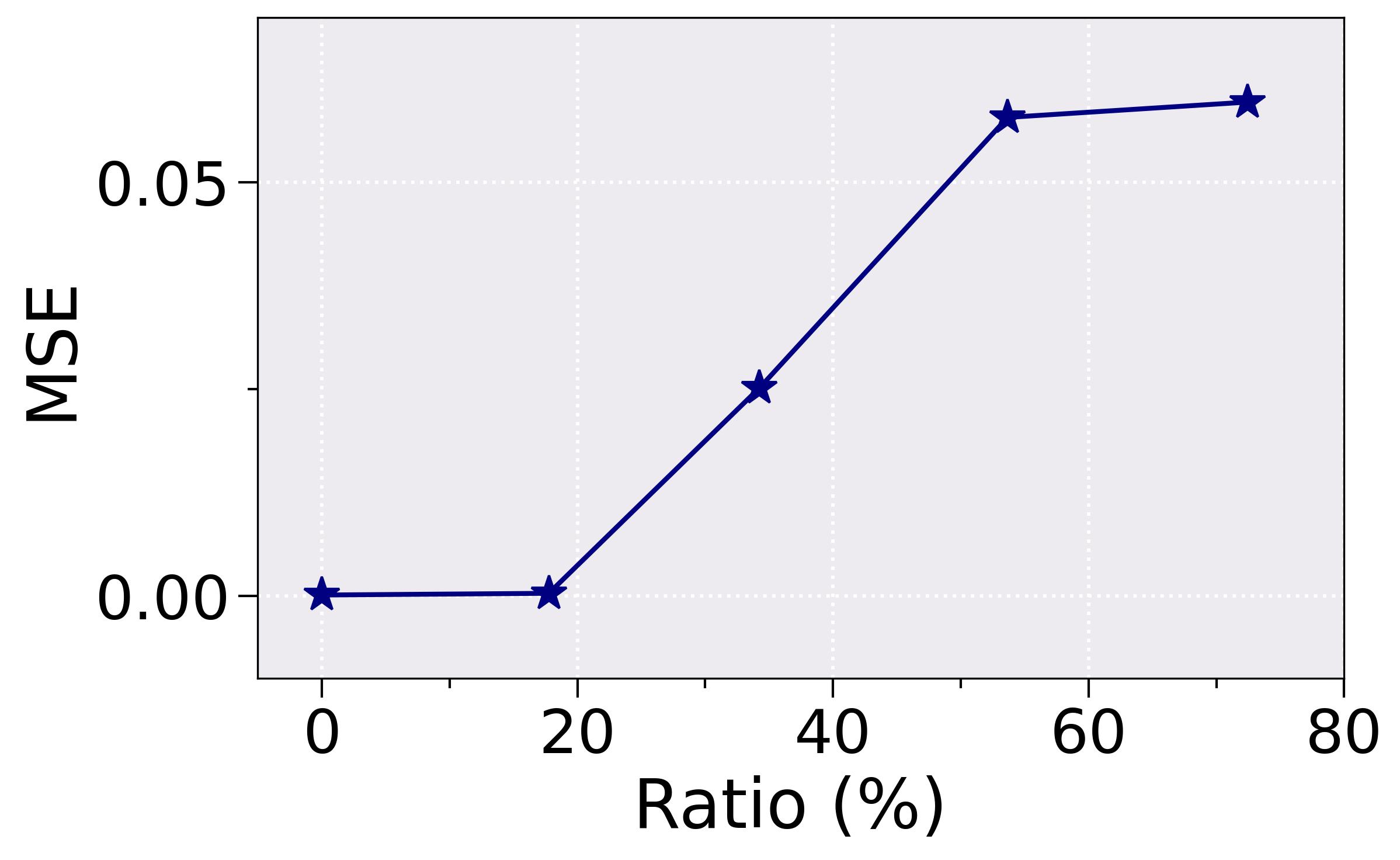}}
\subfigure[SSIM ($\downarrow$)]{
   \includegraphics[width=0.2\textwidth]{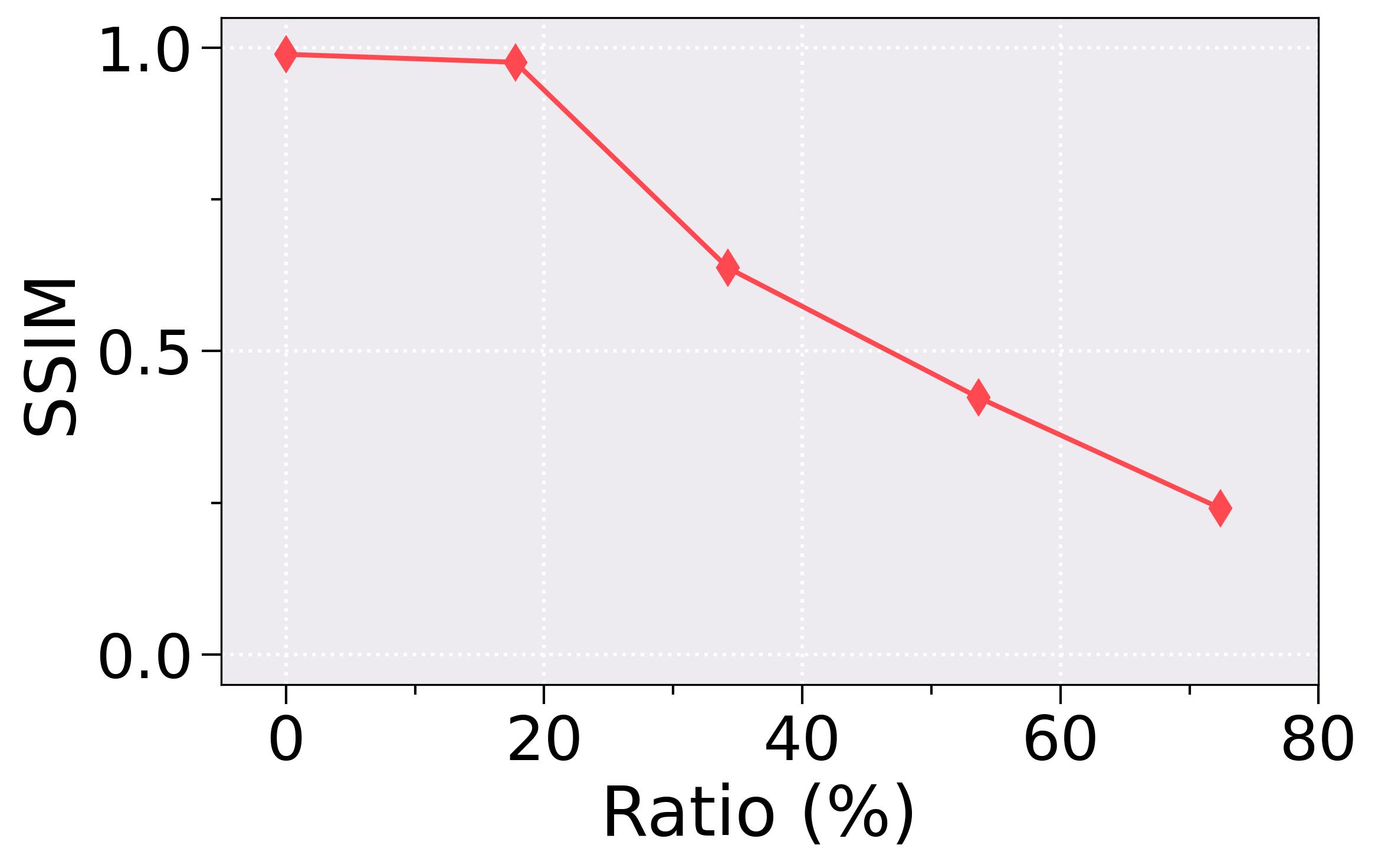}}
\subfigure[LPIPS ($\uparrow$)]{
   \includegraphics[width=0.2\textwidth]{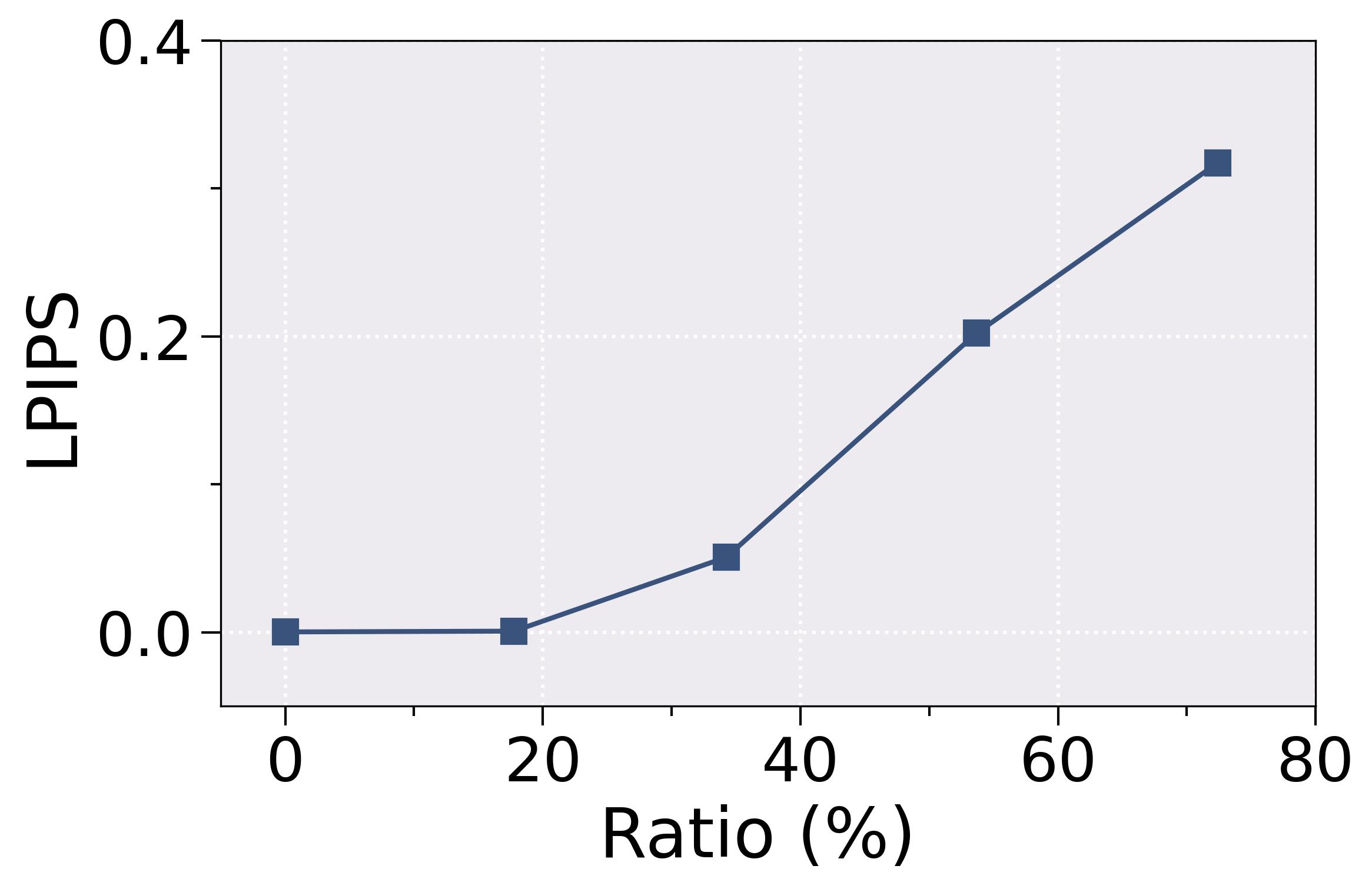}}
\subfigure[Visualization of original and reconstructed data at various ratios]{\includegraphics[width=0.41\textwidth]{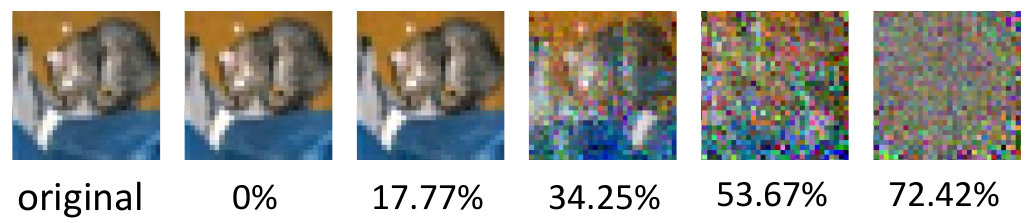}}
   \caption{Relationship between relative gradient distance and reconstruction quality under IG (CIFAR10\cite{cifar} with ResNet18~\cite{resnet18}).}
        \label{fig:ratio}
        
\end{figure}
{The first reason stems from the intuitive observation that when the perturbed gradient is close to the true gradient, it becomes easier for the attacker to infer sensitive information about the true gradient.  And we give a specific example to illustrate this point. In particular, Fig.~\ref{fig:ratio} plots the recovery results of IG attack (in terms of PSNR ($\downarrow$), MSE ($\uparrow$), LPIPS ~\cite{lpips}~($\uparrow$), SSIM~\cite{ssim}~($\downarrow$) metrics) under various relative gradient distance ${||\nabla \mathbf{W}- \mathbf{g}||_2}/{||\nabla \mathbf{W}||_2}$ (measured in ratio). It is clear from the figure that greater distance leads to worse reconstruction for all metrics.
To better support this observation, we propose the following  non-rigorous proposition.}

\begin{proposition}
\label{theo:errorbound}
For any given input $\mathbf{x}$ and shared model $\mathbf{W}$, the distance between the recovered data  $\mathbf{x'}$ and the real data $\mathbf{x}$ is bounded by:
\begin{equation}
||\mathbf{x} -\mathbf{x'}||_2  \ge \frac{||\varphi (\mathbf{x},\mathbf{W})- \varphi (\mathbf{x'},\mathbf{W})||_2}
{||\partial \varphi (\mathbf{x},\mathbf{W})/\partial\mathbf{x}||_2},
\end{equation}
where {$\varphi$ is the mapping from input to the gradient}, \ie, the reconstruction quality is limited by $||\varphi (\mathbf{x},\mathbf{W})- \varphi (\mathbf{x'},\mathbf{W})||_2=||\nabla\mathbf{W }- \mathbf{g}  ||_2$.
\end{proposition}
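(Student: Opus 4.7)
My plan is to treat this as a standard operator-norm / Lipschitz bound applied to the input-to-gradient map $\varphi$, which I expect is what the authors intend given the ``non-rigorous'' disclaimer. The basic idea is that if $\varphi$ changes by at most its derivative times the input displacement, then the input displacement must be at least the gradient change divided by the derivative norm. I will derive this with a first-order Taylor expansion and then invoke the definition of the operator norm.

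First, I would expand $\varphi(\mathbf{x'},\mathbf{W})$ around $\mathbf{x}$ to first order in the input, writing
\begin{equation*}
\varphi(\mathbf{x'},\mathbf{W}) \;=\; \varphi(\mathbf{x},\mathbf{W}) \;+\; \frac{\partial \varphi(\mathbf{x},\mathbf{W})}{\partial \mathbf{x}}\,(\mathbf{x'}-\mathbf{x}) \;+\; R(\mathbf{x},\mathbf{x'}),
\end{equation*}
where $R$ collects the higher-order terms. The ``non-rigorous'' character of the statement enters here: I would argue (or simply assume, as the proposition does) that $R$ is negligible compared with the linear term, so that the gradient difference is well-approximated by a Jacobian-vector product.

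Next, I would take $\|\cdot\|_2$ of both sides of the (approximate) identity $\varphi(\mathbf{x},\mathbf{W})-\varphi(\mathbf{x'},\mathbf{W}) = -\tfrac{\partial \varphi(\mathbf{x},\mathbf{W})}{\partial \mathbf{x}}(\mathbf{x'}-\mathbf{x})$ and apply the submultiplicative property of the operator/spectral norm to get
\begin{equation*}
\bigl\|\varphi(\mathbf{x},\mathbf{W})-\varphi(\mathbf{x'},\mathbf{W})\bigr\|_2 \;\le\; \left\|\frac{\partial \varphi(\mathbf{x},\mathbf{W})}{\partial \mathbf{x}}\right\|_2 \cdot \|\mathbf{x}-\mathbf{x'}\|_2.
\end{equation*}
Dividing by the Jacobian norm (assumed nonzero) rearranges directly into the stated bound, and identifying $\varphi(\mathbf{x},\mathbf{W})=\nabla\mathbf{W}$ and $\varphi(\mathbf{x'},\mathbf{W})=\mathbf{g}$ gives the interpretation highlighted in the proposition that reconstruction quality is limited by the gradient distance $\|\nabla\mathbf{W}-\mathbf{g}\|_2$.

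The main obstacle is exactly the step I glossed over: for a deep network, $\varphi$ is highly nonlinear, so the first-order expansion is only locally valid and the remainder $R$ is not genuinely small for the potentially large displacement $\mathbf{x'}-\mathbf{x}$ encountered during inversion. A fully rigorous version would require either a mean-value-type estimate replacing $\partial \varphi(\mathbf{x},\mathbf{W})/\partial \mathbf{x}$ by a supremum of $\|\partial\varphi(\mathbf{\xi},\mathbf{W})/\partial\mathbf{x}\|_2$ along the segment from $\mathbf{x}$ to $\mathbf{x'}$, or a global Lipschitz assumption on $\varphi(\cdot,\mathbf{W})$. I would flag this in the proof as the price of the linearization and state the result as a first-order approximation, matching the authors' framing.
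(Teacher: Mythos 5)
Your proposal matches the paper's proof essentially exactly: the authors likewise apply a first-order Taylor expansion to $\varphi(\mathbf{x},\mathbf{W})-\varphi(\mathbf{x'},\mathbf{W})$, bound the resulting Jacobian--vector product by the product of norms, and rearrange. Your added caveat about the remainder term and the need for a mean-value or global Lipschitz argument is a fair and accurate characterization of why the authors themselves label the proposition non-rigorous.
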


{Referring to the proof technique of Lemma 1 in \cite{soteria}, we employ the first-order Taylor expansion in our proof. The specific proof of the above proposition is moved to the {appendix} due to space limit (the same hereinafter). And we will present a more rigorous analysis in our follow-up study.} According to the above example and this proposition, it is clear that the reconstruction error is proportional to the gradient distance $||\nabla\mathbf{W} - \mathbf{g}||_2$, i.e., effective defense methods should enlarge the gradient distance as much as possible. 
% As shown in Figure~\cref{fig:ratio}, by increasing the gradient distance $||\nabla\mathbf{W} - \mathbf{g}||_2$, the reconstruction results from optimization attacks \cite{dlg,IVG} become consistently worse (i.e, larger reconstruction error) when measured with different metrics.
However, for the Top-$k$ gradient selection {\cite{DGC,Top-k}}, {the $k$ largest parameters are retained, making the gradient distance small by nature. }
To explain the second reason, we consider a $L$-layer perceptron model trained with cross-entropy loss for classification. Let a column vector $\mathbf{r} =[r_1,r_2,…,r_n]$ be the logits (the output of the $L$-th linear layer) that input to the softmax layer, 
the confidence score probability vector is thus $\left[ \frac{e^{r_1}} {\sum_{j}e^{r_j}}, \frac{e^{r_2}} {\sum_{j}e^{r_j}}, \cdots, \frac{e^{r_n}} {\sum_{j}e^{r_j}} \right]$ and the succinct form of the cross-entropy loss becomes $\ell(\mathbf{x},y) =-\log (\frac{e^{r_y}} {\sum_{j}e^{r_j}})$. Focus on the $L$-th layer  $\mathbf{W}^L\mathbf{x}+\mathbf{b}^L=\mathbf{r}$, it is easy to find
\begin{IEEEeqnarray*}{rCL}
\frac{\partial \ell(\mathbf{x} ,y)}{\partial b_i} = 
\frac{\partial \ell(\mathbf{x} ,y)}{\partial r_i} \cdot \frac{\partial r_i}{\partial b_i} =\frac{\partial \ell(\mathbf{x} ,y)}{\partial r_i} =
\frac{e^{r_i}}{\sum_{j}{e^{r_j}} }-\mathbb{I}_{i=y}, 
\end{IEEEeqnarray*}
and
\begin{IEEEeqnarray*}{rCL}
\nabla \mathbf{W}^L =
\frac{\partial \ell (\mathbf{x} ,y)}{\partial r} \cdot \mathbf{x}^T 
= [\frac{\partial \ell(\mathbf{x},y)}{\partial r_1},\cdots,\frac{\partial \ell(\mathbf{x},y)}{\partial r_n}] \cdot \mathbf{x}^T.
\end{IEEEeqnarray*}
For a given $\mathbf{x}$ (and so $\mathbf{x}_t$ is fixed), the magnitude of certain elements of the gradient matrix  $\nabla \mathbf{W}^L$ ({i.e., the $i$-th row}) is {particularly large} if $i$ is the true label of the training data $\mathbf{x}$ due to reason that $|\frac{\partial \ell (\mathbf{x}, y)}{\partial r_i}|=\sum_{j\ne i}|\frac{\partial \ell (\mathbf{x}, y)}{\partial r_j}|$.
   
To summarize, due to the above two reasons, we conclude that common Top-$k$ gradient selection cannot provide sufficient protection for user data against passive optimization attacks.
From another point of view, a sufficient gradient pruning ratio also plays an important role in defending against active server attacks. 
As mentioned in Sec.~\ref{Sec:ThreatModel}, active attackers can exploit the correspondence of partial gradient parameters to recover the real data. So, the gradient pruning will directly destroy the relationship among gradient parameters constructed by the active attacker. Intuitively, the higher the pruning rate, the stronger the impact. As will be validated in Sec.{\ref{experiments}}, a {high} pruning rate can prevent the attacker from obtaining useful gradient information.
\begin{figure}
  \centering
  \subfigure[Recovered data under IG]{
\label{fig:IVG-Topk}
    \includegraphics[width=0.35\textwidth, height=0.15\textwidth]{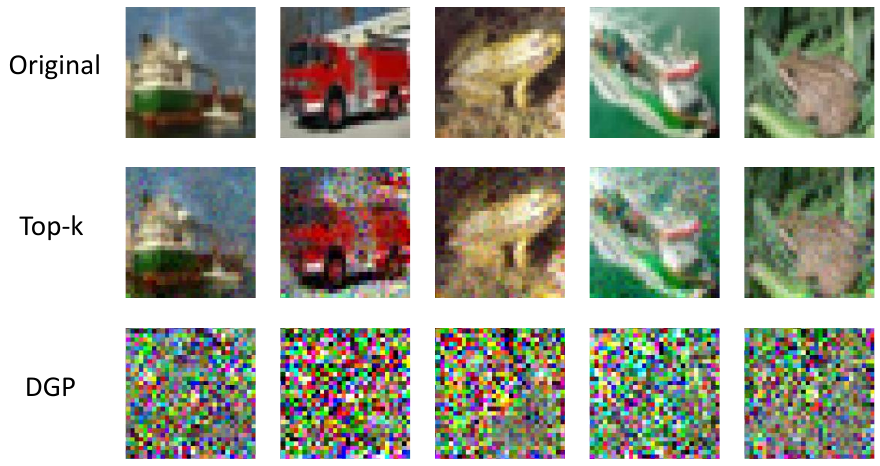}}
  \subfigure[ResNet18 on CIFAR dataset]{
  \label{fig:ACC_topk_ours}
 \includegraphics[width=0.35\textwidth, height=0.21\textwidth]{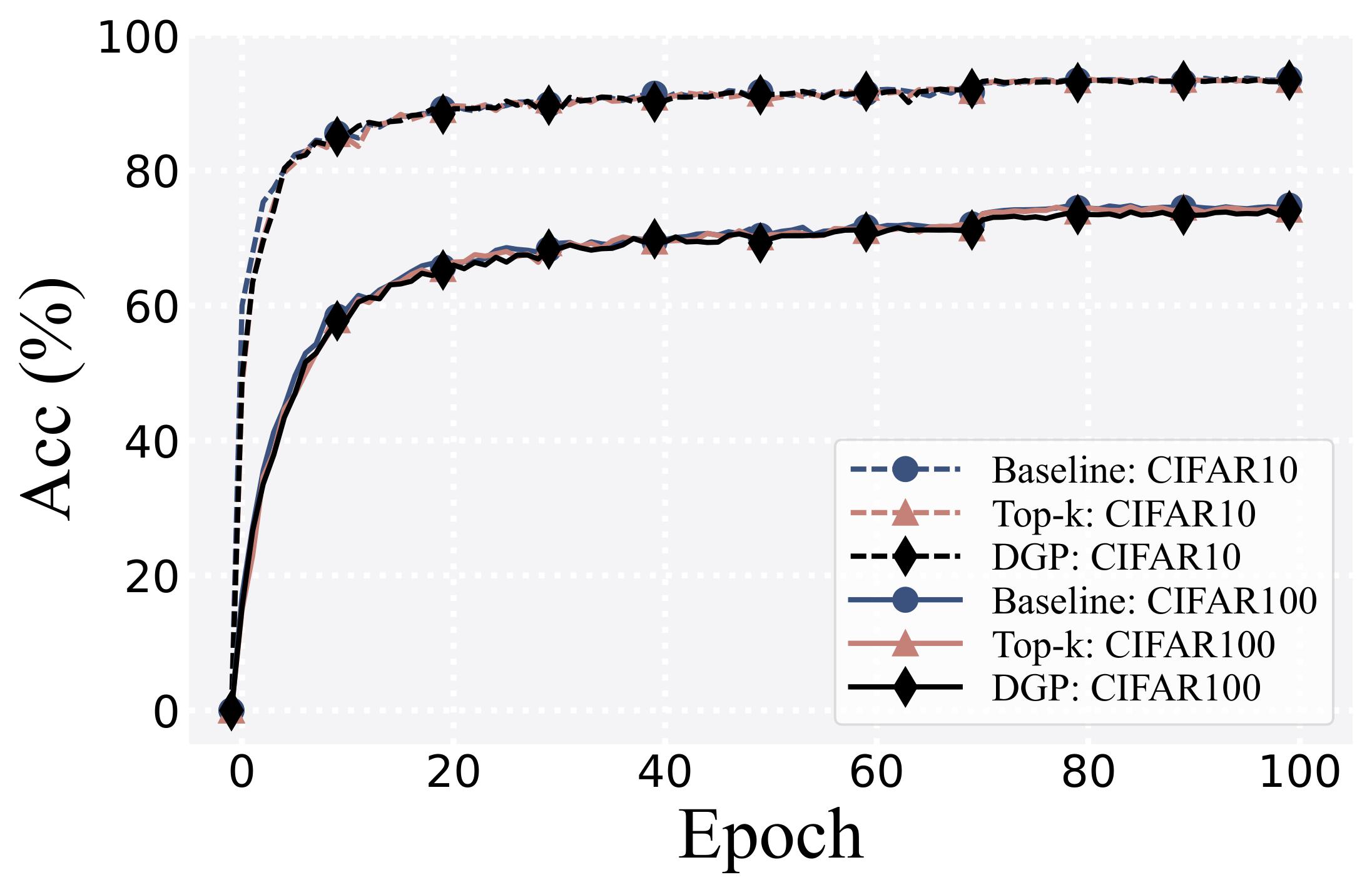} }
\caption{Comparison between Top-$k$ and DGP on privacy and accuracy 
 ($20\%$ of parameters are selected in Top-$k$).}
\label{fig:topk_ours}\end{figure}

\subsection{Dual Gradient  Pruning}
\label{sec:4.2}

Generally speaking, large gradient parameters of local model need to be removed to make the gradient distance larger, but the distance should also be appropriately bounded to maintain high model accuracy.
% Moreover, it is also necessary to delete small gradient parameters to achieve {a high pruning ratio}. 
{ Moreover, it is also necessary to delete gradient parameters to achieve a high pruning ratio, which can reduce the input information that the active server may retain on the gradient by modifying the model and improve communication efficiency. Considering the model performance, we choose to remove small gradient parameters to achieve this.}
With these observations, we propose dual gradients pruning {(DGP)}, a new parameter selection strategy for gradient pruning.
The users first {layer-wisely} sort the absolute values of 
%all $\Size{(\nabla \mathbf{W})}$ 
local gradient parameters $\nabla \mathbf{W}$ in the descending order. 
Let $\mathcal{T}_{k_1}(\mathbf{\nabla \mathbf{W}})$ represent the set of top-$k_1$ {percents} of elements of $\mathbf{\nabla \mathbf{W}}$, $\mathcal{B}_{k_2}(\mathbf{\nabla \mathbf{W}})$ represent the set of its bottom-$k_2$ {percents}. %elements.
%\red{where $k= (k_1+k_2)$ and the sparcification ratio $k/\Size{(\nabla \mathbf{W})} \in (0, 1)$.}
Then the users remove $\mathcal{T}_{k_1}(\mathbf{\nabla \mathbf{W}})$ and $\mathcal{B}_{k_2}(\mathbf{\nabla \mathbf{W}})$ from $\mathbf{\nabla \mathbf{W}}$  for gradient pruning.
{A detailed illustration of DGP is shown in Alg.~\ref{alg: defense}.} Note that we set $p=k_1/k_2$ as a hyperparameter to regulate the trade-off between privacy and accuracy. 
The authors in \cite{DGC} show that large gradient parameters are more likely to have an impact on the model's performance, hence removing these large parameters will reduce model’s accuracy. To reduce this negative impact and increase convergence speed, we introduce the error feedback mechanism \cite{EF}. In particular, at the iteration round $t$, after user $i$ obtaining his local gradient $\nabla \mathbf{W}_{t,i}$, he will combine $\nabla \mathbf{W}_{t,i}$ with an error term accumulated in the previous ($t-1$) rounds before performing the DGP. A complete illustration of our method is shown in Alg.~\ref{sls}, {and the steps from $\mathbf{e}_{t,i}$ to $\mathbf{e}_{t+1,i}$ provide the implementation details of error feedback mechanism.}
We emphasize that although such dual gradients pruning strategy is very simple, it can significantly mitigate GIAs without affecting the model accuracy. {Fig.~\ref{fig:IVG-Topk} gives an example of ResNet18 showing the privacy guarantee when $k_1=5\%, k_2=75\%$. 
Fig.~\ref{fig:ACC_topk_ours} gives a comparison of model performance.} The convergence analysis of our method is shown in Sec.~\ref{Sec:Theory}, and more experimental results can be found in Sec.~\ref{experiments}.
\begin{algorithm}[t]
\caption{Dual Gradient  Pruning (DGP).}
\label{alg: defense}
\begin{algorithmic}[1] %这个1 表示每一行都显示数字
\REQUIRE ~~\\ %算法的输入参数：Input
  Original gradient matrix $\nabla \mathbf{W }$, values of $k_1$ and $k_2$.
 \FOR{$l \leftarrow 1$ to $L$}
\STATE Search sets $\mathcal{T}_{{k_1 }}(\nabla \mathbf{W }^{l})$ and $\mathcal{B}_{{k_2 }}(\nabla \mathbf{W }^{l})$.\\
\STATE Obtain $\mathbf{g}^{l}$ by removing the parameters in $\mathcal{T}_{{k_1 }}(\nabla \mathbf{W }^{l})$ and $\mathcal{B}_{{k_2 }}(\nabla \mathbf{W }^{l})$
from $\nabla \mathbf{W }^{l}$.\\
 \ENDFOR{}
\STATE return Pruned gradient matrix $\mathbf{g}=\left\{ \mathbf{g}^i\right\}_{i=1}^L$.
\end{algorithmic}
\end{algorithm}
\begin{algorithm}[t!]
\caption{A Complete Illustration of Our Defense.}
\label{sls}
\begin{algorithmic}[1] %这个1 表示每一行都显示数字
\REQUIRE ~~\\ %算法的输入参数：Input
 Initial  model $\mathbf{W}_{0}$, value $k_1$ and $k_2$, total rounds $T$, total users $N$ .
 \STATE  Set $\mathbf{e}_{0}=0$.
 \FOR{$t \leftarrow 0$ to $T-1$}
\FOR{$i \leftarrow 1$ to $N$}
\STATE The $i$-th user generates local gradient $\nabla \mathbf{W }_{t,i}$.
\STATE  $\mathbf{P}_{t,i} =\nabla \mathbf{W }_{t,i} +\mathbf{e}_{t,i}$.
\STATE  $\mathbf{g}_{t,i} =\textnormal{DGP} (k_1 , k_2, \mathbf{P} _{t,i})$
\STATE  $\mathbf{e}_{t+1,i} =\mathbf{P} _{t,i}-\mathbf{g} _{t,i}$
 \ENDFOR{}
 \STATE Sever side aggregation:
 \STATE
 $\mathbf{W}_{t+1}=\mathbf{W}_{t}-\eta\frac{\sum_{i=1}^N \mathbf{g}_{t,i}}{N}$ 
\ENDFOR{}
\STATE return Shared global model $\mathbf{W}_{T}$.
\end{algorithmic}
\end{algorithm}

\section{Theoretical Analysis}\label{Sec:Theory}
This section presents the security analysis with regard to passive GIAs,  %(i.e., analytical and optimization attacks presented in \cref{Sec:ThreatModel}), 
as well as the {generalization} and convergence analyses of our method. %The following assumptions are necessary for the analyses. 

\subsection{Assumptions}
Following the literature studies in \cite{marginal,EF}, for a given $L$-layer centralized model, we model the first ($L-1$) layers as a robust feature extractor of any input sample. Thus, the function of this model is characterized by $f(x | \mathbf{W}) = \mathbf{W} x + \mathbf{b}$, and the optimization objective is the loss
$\ell(\mathbf{x},y)$ (such as cross-entropy). 
%for any classification task, the optimization objective is the (succinct form of the) cross-entropy loss  $\ell(\mathbf{x},y) =-\log (\frac{f(x | \mathbf{W})_y} {\sum_{j}e^{f(x | \mathbf{W})_j}})$. 
To facilitate analyses and following literature studies \cite{scalecom,hyper,EF}, the assumptions 
about the smoothness of DGP and $l$, as well as the variance of the stochastic gradient are employed.
\begin{assumption}
\label{Assumption:A3}
The pruning mechanism $\textnormal{DGP}(k_1, k_2, \cdot)$ is Lipschitz, so the following conditions  hold:
\begin{IEEEeqnarray}{rCl}
&&||\nabla \mathbf{W} -\textnormal{DGP}(k_1, k_2, \nabla \mathbf{W})  ||_2^2 
\nonumber\\
&&= ||\textnormal{DGP}(0, 0, \nabla \mathbf{W})-\textnormal{DGP}(k_1, k_2, \nabla \mathbf{W})  ||_2^2  \le \gamma_1 ||\nabla \mathbf{W}||_2^2,\nonumber
\end{IEEEeqnarray}
% where $\gamma_1$ is a constant determined by $k_1$ and $k_2$, satisfying  $(1-\sqrt{2-2k_2-3k_1} )^2\le\gamma _1 <1$.
{where $\gamma_1$ is a constant related to $k_1$ and $k_2$ and satisfies $(1-\sqrt{1-k_1*k_2} )^2<\gamma_1<1$.} %%And there must exist c satisfying the above formula.}
\end{assumption}
\begin{assumption}
The objective function $l: R^d\to R$ has a low bound $l^*$ and it is Lipschitz-smooth, \ie, for any $x_1$, $x_2$, $||\nabla l(x_1)-\nabla l(x_2)||_2\le K||x_1-x_2||_2$ and $l(x_1) \le l(x_2)
+ \langle \nabla l(x_2), x_1-x_2 \rangle 
+ \frac{K}{2}||x_1-x_2||^2_2$.
\label{Assumption:A1}
\end{assumption}
\begin{assumption}
The collaborative stochastic gradient $\nabla \mathbf{W}_{t,i}$ $(t= [0, T-1], i= [1, N])$  is bounded, \ie, $||\nabla \mathbf{W}_{t,i}||^2_2\le G^2$, {and the average aggregated gradient $\nabla \mathbf{W}_{t}$ is the expectation of  collaborative stochastic gradient $\nabla \mathbf{W}_{t,i}$, \ie, $\nabla \mathbf{W}_t=\mathbb{E}(\nabla \mathbf{W}_{t,i} )$}.
Moreover, the variance between $\nabla \mathbf{W}_{t,i}$ and $\nabla \mathbf{W}_t $ is bounded: $\mathbb{E}||\nabla \mathbf{W}_{t,i} - \nabla \mathbf{W }_t||^2_2\le \sigma ^2$.
\label{Assumption:A2}
\end{assumption}
% \label{Sec:TheoAna}
\subsection{Security Analysis}
% To show the theoretical security guarantee after DGP pruning, we analyze the lower bound of the reconstruction error with the help of Theorem~\ref{theo:errorbound}. 
% %\red{According to \cref{theo:errorbound}, we analyze  the lower bound on security that DGP can provide, as \cref{C_1} indicates below.}
% \begin{corollary}
% {By utilizing $\textnormal{DGP}(k_1, k_2, \cdot)$ to protect privacy, for any given input $\mathbf{x}$ and shared model $\mathbf{W}$, the distance between the recovered data $\mathbf{x'}$ and the real data $\mathbf{x}$ is bounded by:}
% \begin{equation}
% ||\mathbf{x} -\mathbf{x'}||_2  \ge \frac{(1-\sqrt{1-k_1})||\nabla \mathbf{W}||_2}
% {||\partial \varphi (\mathbf{x},\mathbf{W})/\partial\mathbf{x}||_2}.
% \end{equation}
% \label{C_1}
% \end{corollary}

%\label{SubSec:SecAna}
% It is noted that, for the same pruning ratio, user's uploaded gradient parameters from ADGP is generally smaller than that from DGP. Indeed, the uploaded gradient parameters from both methods are the same only when $\mathcal{T}_{k_1}  \subset \mathcal{T}_{2k}$ holds. From this observation and referring to Proposition~\cref{theo:errorbound}, DGP is the security lower bound of our design for privacy protection. So, our focus is the security analysis of DGP. 
When considering passive attacks, we prove that DGP achieves a stronger privacy protection in the sense of Definition~\ref{Def:attack}.
\begin{theorem}
\label{theorem1}
% For any $(\varepsilon, \delta)$ optimization attack, under the presence of \textnormal{DGP}, it will be degenerated to $(\varepsilon+\sqrt{\gamma_1} ||\nabla \mathbf{W}||_2,\delta)$-attack if $\mathcal{D}$ is measured by Euclidean distance,  and degenerated to $(1-(1-\sqrt{1-k_1}) (1-\varepsilon),\delta)$-attack if $\mathcal{D}$ is measured by cosine distance.
For any $(\varepsilon, \delta)$-passive attack  $\mathcal{A}$, under the presence of \textnormal{\textnormal{DGP}}, it will be degenerated to $(\varepsilon+\sqrt{\gamma_1} ||\nabla \mathbf{W}||_2,\delta)$-passive attack if $\mathcal{D}_\mathcal{A}$ is measured by Euclidean distance,  and degenerated to $(\varepsilon +(1-\varepsilon)\sqrt{\gamma_1},\delta)$-passive attack if $\mathcal{D}_\mathcal{A}$ is measured by cosine distance.
\end{theorem}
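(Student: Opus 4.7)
The idea is to push the $(\varepsilon, \delta)$-bound through a change of reference from the pruned gradient back to the true gradient. Under DGP the attacker only sees $\mathbf{g} = \mathrm{DGP}(k_1, k_2, \nabla \mathbf{W})$, so an $(\varepsilon, \delta)$-passive attack $\mathcal{A}$ only guarantees $\mathbb{P}(\mathbb{E}\,\mathcal{D}_\mathcal{A}(\mathbf{g}, \mathbf{g}^*) \le \varepsilon) \ge 1-\delta$, whereas by Proposition~\ref{theo:errorbound} reconstruction quality is actually controlled by $\mathcal{D}_\mathcal{A}(\nabla \mathbf{W}, \mathbf{g}^*)$. The whole proof therefore amounts to paying the cost of swapping $\mathbf{g}$ for $\nabla \mathbf{W}$ inside $\mathcal{D}_\mathcal{A}$, using the deterministic inequality $\|\nabla \mathbf{W} - \mathbf{g}\|_2 \le \sqrt{\gamma_1}\|\nabla \mathbf{W}\|_2$ supplied by Assumption~\ref{Assumption:A3}, and then lifting back through expectation and the $\delta$-event.

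For the Euclidean branch, I would apply the triangle inequality $\|\nabla \mathbf{W} - \mathbf{g}^*\|_2 \le \|\nabla \mathbf{W} - \mathbf{g}\|_2 + \|\mathbf{g} - \mathbf{g}^*\|_2 \le \sqrt{\gamma_1}\|\nabla \mathbf{W}\|_2 + \|\mathbf{g} - \mathbf{g}^*\|_2$, take expectation (the first term is deterministic, so it passes through unchanged), and intersect with the event $\{\mathbb{E}\|\mathbf{g}-\mathbf{g}^*\|_2 \le \varepsilon\}$, which yields the $(\varepsilon + \sqrt{\gamma_1}\|\nabla \mathbf{W}\|_2, \delta)$-bound directly. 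For the cosine branch, I would first convert Assumption~\ref{Assumption:A3} into the angular statement $d_c(\nabla \mathbf{W}, \mathbf{g}) \le \sqrt{\gamma_1}$: since DGP zeros coordinates, $\langle \nabla \mathbf{W}, \mathbf{g}\rangle = \|\mathbf{g}\|_2^2$, and combined with $\|\mathbf{g}\|_2 \ge (1-\sqrt{\gamma_1})\|\nabla \mathbf{W}\|_2$ (reverse triangle inequality) this gives $\cos\theta_{\nabla \mathbf{W}, \mathbf{g}} = \|\mathbf{g}\|_2/\|\nabla \mathbf{W}\|_2 \ge 1-\sqrt{\gamma_1}$. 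Then, using $d_c(\mathbf{g}, \mathbf{g}^*) \le \varepsilon$ together with the product-form bound $\cos\theta_{\nabla \mathbf{W}, \mathbf{g}^*} \ge \cos\theta_{\nabla \mathbf{W}, \mathbf{g}} \cdot \cos\theta_{\mathbf{g}, \mathbf{g}^*}$, one gets $d_c(\nabla \mathbf{W}, \mathbf{g}^*) \le 1 - (1-\sqrt{\gamma_1})(1-\varepsilon) = \varepsilon + (1-\varepsilon)\sqrt{\gamma_1}$, and the same expectation/probability lift as above closes the claim.

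The main obstacle I anticipate is the product-form cosine bound, because $\cos\theta_{uw} \ge \cos\theta_{uv}\cos\theta_{vw}$ is not unconditionally valid: the spherical law of cosines carries an extra $-\sin\theta_{uv}\sin\theta_{vw}$ term that, in the worst case, makes the true angle $\theta_{\nabla \mathbf{W}, \mathbf{g}^*}$ larger than the claim allows. A rigorous treatment will need either (i) to exploit the structural fact that $\mathbf{g}$ is a pure coordinate restriction of $\nabla \mathbf{W}$ (under the mild restriction that $\mathbf{g}^*$ is likewise supported on the retained indices, $\langle \nabla \mathbf{W}, \mathbf{g}^*\rangle = \langle \mathbf{g}, \mathbf{g}^*\rangle$, so the product identity holds exactly), (ii) to restrict to the small-angle regime where the $\sin$ correction is of lower order, or (iii) to replace cosine distance by the chordal metric $\sqrt{2 d_c}$, which obeys a genuine triangle inequality at the price of slightly different constants. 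I would pursue route (i) and flag any residual gap explicitly, consistent with the authors' caveat after Proposition~\ref{theo:errorbound} that a more rigorous analysis is deferred to follow-up work. The Euclidean half, by contrast, is essentially a one-line consequence of the triangle inequality plus Assumption~\ref{Assumption:A3}.
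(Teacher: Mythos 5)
Your proposal matches the paper's proof essentially step for step: the Euclidean branch is the same triangle-inequality argument via Assumption~1, and your ``route (i)'' for the cosine branch is precisely what the paper does --- it justifies the product identity by noting that $\langle \mathbf{g}^*, \nabla \mathbf{W} - \mathrm{DGP}(\nabla\mathbf{W})\rangle = 0$ because $\mathbf{g}^*$ lives on the retained coordinates, then lower-bounds $\|\mathrm{DGP}(\nabla\mathbf{W})\|_2/\|\nabla\mathbf{W}\|_2$ by $1-\sqrt{\gamma_1}$. Your explicit flagging of the support assumption on $\mathbf{g}^*$ is in fact slightly more careful than the paper's one-line justification of that step.
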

Theorem~\ref{theorem1} is based on Assumption~\ref{Assumption:A3} about DGP. It reveals that, with the same successful chance $(1-\delta)$, DGP weakens the passive attack  $\mathcal{A}$'s capability to obtain a better estimation of the true  $\nabla \mathbf{W}$. In particular, $\mathcal{A}$'s estimation of $\nabla \mathbf{W}$ is enlarged by $\sqrt{\gamma_1} ||\nabla \mathbf{W}||_2$ under Euclidean distance and enlarged by $(1-\varepsilon)\sqrt{\gamma_1}$ under cosine distance.

\subsection{Convergence Guarantee}
\label{Convergence}
 We start the convergence analysis by proving the generalization of DGP. The generalization analysis aims to quantify how the trained model performs on the test data, and it is achieved by analyzing the how DGP affects the properties of the optima reached (without gradient pruning)~\cite{EF,marginal}. 
 For ease of expression, let CL-SGD represent the training in CL with the SGD optimizer. %in collaborative learning. 
 Based on Assumptions~\ref{Assumption:A3} and \ref{Assumption:A2}, the following Lemma can be obtained.

% We start the convergence analysis by proving the generalization of DGP. 

\begin{lemma}
\label{lemma:L_EF}
Let $\mathbf{e}_t={\sum_{i=1}^{N} \mathbf{e}_{t,i}}/{N}$ be the averaged accumulated error among all users at iteration $t$, the expectation of the norm of $\mathbf{e}_t$ is bounded, i.e.,
\begin{equation}
\mathbb{E}|| \mathbf{e}_{t} ||^2_2\le\frac{3\gamma_1(2+\gamma_1)}{2(1-\gamma_1)^2} G^2.
\end{equation}
 %\mathbf{g}_t=\frac{\sum_{i=1}^{N} \mathbf{g}_t_i}{N}, 
%\mathbf{\nabla \mathbf{W }}_t=\frac{\sum_{i=1}^{N} \nabla \mathbf{W}_t_i}{N}$.
\end{lemma}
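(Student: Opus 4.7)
}
The plan is to reduce the lemma to a one-step contraction on the per-user error $\mathbf{e}_{t,i}$, solve the resulting linear recursion with a geometric series, and then pass to the average via Jensen's inequality. Concretely, lines 5--7 of Algorithm~\ref{sls} give the identity
\[
\mathbf{e}_{t+1,i} \;=\; \mathbf{P}_{t,i} - \mathbf{g}_{t,i} \;=\; \mathbf{P}_{t,i} - \textnormal{DGP}(k_1,k_2,\mathbf{P}_{t,i}),\qquad \mathbf{P}_{t,i}=\nabla \mathbf{W}_{t,i}+\mathbf{e}_{t,i}.
\]
Applying Assumption~\ref{Assumption:A3} (reading it as a uniform contractivity of $\mathbf{u}\mapsto \mathbf{u}-\textnormal{DGP}(k_1,k_2,\mathbf{u})$ with constant $\gamma_1$) to $\mathbf{u}=\mathbf{P}_{t,i}$ immediately yields $\|\mathbf{e}_{t+1,i}\|_2^2 \le \gamma_1\,\|\nabla \mathbf{W}_{t,i}+\mathbf{e}_{t,i}\|_2^2$, which is the starting point.

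Next I would decouple the gradient and error contributions using the weighted Young's inequality $\|a+b\|_2^2 \le (1+\beta)\|b\|_2^2 + (1+\beta^{-1})\|a\|_2^2$ for a free parameter $\beta>0$, take expectation, and invoke the per-user gradient bound $\mathbb{E}\|\nabla \mathbf{W}_{t,i}\|_2^2 \le G^2$ from Assumption~\ref{Assumption:A2}. The outcome is a scalar linear recursion
\[
\mathbb{E}\|\mathbf{e}_{t+1,i}\|_2^2 \;\le\; \gamma_1(1+\beta)\,\mathbb{E}\|\mathbf{e}_{t,i}\|_2^2 \;+\; \gamma_1(1+\beta^{-1})\,G^2 .
\]
Choosing $\beta$ so that the contraction factor $\rho:=\gamma_1(1+\beta)$ lies strictly below $1$ (the natural family $\beta\asymp (1-\gamma_1)/\gamma_1$ works, since Assumption~\ref{Assumption:A3} already restricts $\gamma_1<1$), and unrolling from the initial condition $\mathbf{e}_{0,i}=0$, a standard geometric-sum argument produces the uniform-in-$t$ bound $\mathbb{E}\|\mathbf{e}_{t,i}\|_2^2 \le \gamma_1(1+\beta^{-1})\,G^2/(1-\rho)$.

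Finally, since $\mathbf{e}_t=\frac{1}{N}\sum_{i=1}^N\mathbf{e}_{t,i}$, convexity of $\|\cdot\|_2^2$ gives $\|\mathbf{e}_t\|_2^2 \le \frac{1}{N}\sum_i \|\mathbf{e}_{t,i}\|_2^2$; taking expectation transfers the per-user bound to the averaged error and concludes the proof. The main obstacle is purely algebraic: tuning $\beta$ so that the prefactor lands exactly on $3\gamma_1(2+\gamma_1)/(2(1-\gamma_1)^2)$ rather than a nearby rational function of $\gamma_1$ (for instance, $\beta=(1-\gamma_1)/(2\gamma_1)$ gives the cleaner but slightly different constant $2\gamma_1(1+\gamma_1)/(1-\gamma_1)^2$). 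I expect the stated constant to come either from a two-step splitting that first handles the cross term $2\langle\nabla \mathbf{W}_{t,i},\mathbf{e}_{t,i}\rangle$ with a weighted AM--GM and then the square terms separately, or from a coarser Cauchy--Schwarz bound followed by the same geometric sum; both routes are mechanical once the recursion is in hand. A secondary subtlety to verify is that Assumption~\ref{Assumption:A3}'s Lipschitz-type inequality is applicable to an arbitrary input $\mathbf{P}_{t,i}$ (not only to a ``true'' gradient), which appears to be the intended reading of the assumption.
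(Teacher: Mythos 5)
Your proposal is correct and follows essentially the same route as the paper: apply Assumption~\ref{Assumption:A3} to $\mathbf{P}_{t,i}$, split with the weighted Young's inequality, unroll the resulting linear recursion from $\mathbf{e}_{0,i}=0$, and pass to the average by convexity of $\|\cdot\|_2^2$. The paper's specific parameter choice is $1+a=\frac{2+\gamma_1}{3\gamma_1}$ (i.e.\ $\beta=\frac{2(1-\gamma_1)}{3\gamma_1}$, squarely in the family you identified), which gives contraction factor $\frac{2+\gamma_1}{3}$ and lands exactly on the stated constant $\frac{3\gamma_1(2+\gamma_1)}{2(1-\gamma_1)^2}G^2$.
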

Note that the difference between the averaged pruned gradient $\mathbf{g}_t={\sum_{i=1}^{N} \mathbf{g}_{t,i}}/{N}$ and the averaged {collaborative} SGD gradient $\mathbf{\nabla \mathbf{W }}_t={\sum_{i=1}^{N} \nabla \mathbf{W}_{t,i}}/{N}$ is simply $||\sum_{i=0}^{T-1}(\nabla \mathbf{W}_t-\mathbf{g}_t)||^2_2 =||\mathbf{e}_{T}||^2_2$. So the lemma above indicates that the accumulated gradient difference between our algorithm and {CL-SGD} is bounded.~That said, the optima reached by DGP and the optima reached by {CL-SGD} will eventually be very close if the algorithm converge. 
Armed with Lemma~\ref{lemma:L_EF} and based on Assumptions~\ref{Assumption:A3},~\ref{Assumption:A1} and~\ref{Assumption:A2}, we demonstrate the convergence of the our algorithm.

\begin{theorem}
The averaged norm of the full gradient $\nabla l(\mathbf{W }_t)$ derived from centralized training is correlated with the our algorithm as follows: 
\begin{IEEEeqnarray}{rCl}
\frac{\sum_{t=0}^{T-1} \mathbb{E}||\nabla l(\mathbf{W }_t)||_2^2}{T} &\le&4\frac{l^0-l^*}{\eta T} +2K\eta(G^2+\sigma^2)\nonumber
\\&+& 4\eta^2K^2\frac{3\gamma_1(2+\gamma_1)}{2(1-\gamma_1)^2}G^2,
\end{IEEEeqnarray}
% \begin{align}
% \frac{\sum_{t=0}^{T-1} \mathbb{E}||\nabla l(\mathbf{W }_t)||_2^2}{T} &\le 
% 4\frac{K^0-l^*}{\eta T}+4\eta^2K^2\frac{3\gamma_1(2+\gamma_1)}{2(1-\gamma_1)^2}G^2\nonumber\\&+2K\eta(G^2+\sigma ^2),
% \end{align}
where $l^0$ is the initialization of $l$, and $\eta$ is the learning rate.
\label{theorem:T2}
\end{theorem}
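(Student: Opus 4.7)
The plan is to apply the standard error-feedback convergence template: introduce a virtual iterate that evolves like plain CL-SGD, control its gap to the actual iterate via the accumulator $\mathbf{e}_t$, and close the argument with Lemma~\ref{lemma:L_EF}. Concretely, define $\tilde{\mathbf{W}}_t := \mathbf{W}_t - \eta \mathbf{e}_t$, where $\mathbf{e}_t = \frac{1}{N}\sum_{i=1}^{N}\mathbf{e}_{t,i}$ is the averaged accumulator of Algorithm~\ref{sls}. Substituting the recursion $\mathbf{e}_{t+1,i} = \nabla \mathbf{W}_{t,i} + \mathbf{e}_{t,i} - \mathbf{g}_{t,i}$ and the aggregation step into the definition of $\tilde{\mathbf{W}}_{t+1}$, the pruned quantities cancel and one obtains the identity $\tilde{\mathbf{W}}_{t+1} = \tilde{\mathbf{W}}_t - \eta\,\bar{S}_t$, where $\bar{S}_t := \frac{1}{N}\sum_{i=1}^{N}\nabla \mathbf{W}_{t,i}$ is the uncompressed per-round average. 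Hence $\{\tilde{\mathbf{W}}_t\}$ is exactly a CL-SGD trajectory and the whole cost of compression is carried by $\|\mathbf{e}_t\|_2^2$.

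Next, I apply Assumption~\ref{Assumption:A1} between $\tilde{\mathbf{W}}_{t+1}$ and $\tilde{\mathbf{W}}_t$, taking conditional expectations to obtain $\mathbb{E}\,l(\tilde{\mathbf{W}}_{t+1}) \le \mathbb{E}\,l(\tilde{\mathbf{W}}_t) - \eta\,\mathbb{E}\langle \nabla l(\tilde{\mathbf{W}}_t), \bar{S}_t\rangle + \frac{K\eta^2}{2}\mathbb{E}\|\bar{S}_t\|_2^2$. Splitting $\nabla l(\tilde{\mathbf{W}}_t) = \nabla l(\mathbf{W}_t) + [\nabla l(\tilde{\mathbf{W}}_t) - \nabla l(\mathbf{W}_t)]$ and using unbiasedness $\mathbb{E}[\bar{S}_t\mid\mathbf{W}_t] = \nabla l(\mathbf{W}_t)$ (Assumption~\ref{Assumption:A2}) extracts a negative drift $-\eta\,\mathbb{E}\|\nabla l(\mathbf{W}_t)\|_2^2$. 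The cross term $-\eta\,\mathbb{E}\langle \nabla l(\tilde{\mathbf{W}}_t) - \nabla l(\mathbf{W}_t),\bar{S}_t\rangle$ is handled with a Young inequality of weight $\alpha = \tfrac{1}{2}$, and the Lipschitz bound converts $\|\nabla l(\tilde{\mathbf{W}}_t)-\nabla l(\mathbf{W}_t)\|_2^2$ into $K^2\eta^2\|\mathbf{e}_t\|_2^2$. The stochastic-gradient terms are then bounded via Assumption~\ref{Assumption:A2}: $\mathbb{E}\|\bar{S}_t\|_2^2 \le \mathbb{E}\|\nabla l(\mathbf{W}_t)\|_2^2 + \sigma^2$, or, when convenient, $\le G^2 + \sigma^2$, which produces the $2K\eta(G^2+\sigma^2)$-type contribution.

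Invoking Lemma~\ref{lemma:L_EF} to replace $\mathbb{E}\|\mathbf{e}_t\|_2^2$ by $\frac{3\gamma_1(2+\gamma_1)}{2(1-\gamma_1)^2}G^2$ yields the last additive term of the form $\eta^2 K^2\cdot(\cdots)G^2$. Telescoping the resulting one-step inequality from $t=0$ to $T-1$, using $\mathbf{e}_0 = 0$ so that $l(\tilde{\mathbf{W}}_0) = l^0$, lower-bounding $l(\tilde{\mathbf{W}}_T) \ge l^*$, and dividing by $T$ and by the net positive coefficient of $\mathbb{E}\|\nabla l(\mathbf{W}_t)\|_2^2$ that survives on the left, delivers the stated three terms. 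The main obstacle is arithmetic rather than structural: the numeric factors $4,\,2,\,4$ in the printed bound do not pop out of the most naive Young split, so one must calibrate the Young weight and the variance-versus-signal decomposition of $\mathbb{E}\|\bar{S}_t\|_2^2$ so that the surviving coefficient of $\mathbb{E}\|\nabla l(\mathbf{W}_t)\|_2^2$ is exactly $\eta/4$ after all positive contributions are absorbed. Matching those constants is where the most care will be needed.
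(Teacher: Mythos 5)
Your proposal is correct and follows essentially the same route as the paper's proof: a virtual iterate offset from $\mathbf{W}_t$ by $\eta\mathbf{e}_t$ that evolves as plain CL-SGD, the descent lemma of Assumption~2, the Lipschitz conversion $\|\nabla l(\tilde{\mathbf{W}}_t)-\nabla l(\mathbf{W}_t)\|_2\le K\eta\|\mathbf{e}_t\|_2$, Lemma~1 to bound $\mathbb{E}\|\mathbf{e}_t\|_2^2$, and telescoping. The only divergence is bookkeeping: the paper first bounds $\frac{1}{T}\sum_t\mathbb{E}\|\nabla l(\mathbf{V}_t)\|_2^2$ via the polarization identity and only converts to $\nabla l(\mathbf{W}_t)$ at the end through $\|\nabla l(\mathbf{W}_t)\|_2^2\le 2\|\nabla l(\mathbf{V}_t)\|_2^2+2K^2\eta^2\|\mathbf{e}_t\|_2^2$ (which is precisely where the factors $4$, $2$, $4$ arise), whereas your direct extraction of the $-\eta\,\mathbb{E}\|\nabla l(\mathbf{W}_t)\|_2^2$ drift yields a slightly tighter constant, so the stated bound follows a fortiori and the calibration of Young weights you flag as the main obstacle is not actually needed.
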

{The implication of Theorem~\ref{theorem:T2} is that, with an appropriate learning rate $\eta$, DGP converges similar to {CL-SGD} (slower by a negligible term $\mathcal{O}( \frac{1}{\sqrt{T} })$), as shown in} Corollary~\ref{C2}.
\begin{corollary}
\label{C2}
Let $\eta $=${({l^0-l^*})/{KT(G^2+\sigma^2)} }$, we have 
% \begin{equation}
% \frac{\sum_{t=0}^{T-1} \mathbb{E}||\nabla l(\mathbf{W}_t)||_2^2}{T} \le 6\sqrt{\frac{K(l^0-l^*)(\sigma^2+G^2 )}{T} }\ +\mathcal{O}(\frac{1}{T})\nonumber.
% \end{equation}
\begin{IEEEeqnarray}{rCl}
\frac{\sum_{t=0}^{T-1} \mathbb{E}||\nabla l(\mathbf{W}_t)||_2^2}{T} &\le& 6\sqrt{\frac{K(l^0-l^*)(\sigma^2+G^2 )}{T} }\nonumber\\&+&\mathcal{O}(\frac{1}{T})\nonumber.
\end{IEEEeqnarray}

\end{corollary}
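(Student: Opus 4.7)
The plan is to simply plug the stated learning rate into the bound of Theorem~\ref{theorem:T2} and rearrange. Because the right-hand side of Theorem~\ref{theorem:T2} is a sum of three terms, where the first scales like $1/\eta$, the second like $\eta$, and the third like $\eta^2$, the choice of $\eta$ is made to balance the first two so that they contribute the same $1/\sqrt{T}$ rate, while the third is a lower-order $O(1/T)$ correction. I should first note that the expression for $\eta$ printed in the statement is missing a square root: to make the substitution work one must read $\eta=\sqrt{(l^0-l^*)/(KT(G^2+\sigma^2))}$, since only with this choice is the first term $4(l^0-l^*)/(\eta T)$ of the right order.

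Concretely, I would first compute $4(l^0-l^*)/(\eta T)$. Substituting $\eta$ turns it into $4\sqrt{K(l^0-l^*)(G^2+\sigma^2)/T}$. Next I would compute $2K\eta(G^2+\sigma^2)$; pulling $K(G^2+\sigma^2)$ inside the square root gives $2\sqrt{K(l^0-l^*)(G^2+\sigma^2)/T}$. Adding these yields the advertised $6\sqrt{K(l^0-l^*)(\sigma^2+G^2)/T}$. The third step is to bound $4\eta^2 K^2 \cdot \tfrac{3\gamma_1(2+\gamma_1)}{2(1-\gamma_1)^2} G^2$; squaring $\eta$ gives $\eta^2=(l^0-l^*)/(KT(G^2+\sigma^2))$, so the term equals
\begin{equation*}
\frac{6\gamma_1(2+\gamma_1)}{(1-\gamma_1)^2}\cdot\frac{K(l^0-l^*)G^2}{T(G^2+\sigma^2)},
\end{equation*}
which is $\mathcal{O}(1/T)$ since $G^2/(G^2+\sigma^2)\le 1$ and the $\gamma_1$-dependent factor is a constant in $T$. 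Collecting the three estimates and invoking Theorem~\ref{theorem:T2} yields Corollary~\ref{C2}.

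There is no real conceptual obstacle here, so the proof is essentially a bookkeeping exercise. The only delicate point to flag is the dependence of the $\mathcal{O}(1/T)$ term on $\gamma_1$: as $\gamma_1\to 1$ (very aggressive pruning) the hidden constant $6\gamma_1(2+\gamma_1)/(1-\gamma_1)^2$ blows up, so the corollary's $1/\sqrt{T}$ dominant rate is only attained once $T$ is large compared with this constant. This matches the intuition that more aggressive DGP keeps the same asymptotic rate as CL-SGD but pays a larger constant in the finite-horizon regime. I would therefore state the corollary cleanly and, optionally, add one sentence reminding the reader that the implicit constants depend on $\gamma_1$.
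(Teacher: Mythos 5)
Your proposal is correct and follows exactly the paper's own route: the appendix proof likewise just substitutes $\eta=\sqrt{(l^0-l^*)/(KT(\sigma^2+G^2))}$ into Theorem~\ref{theorem:T2} and collects terms, confirming that the missing square root in the corollary's statement is indeed a typo. Your explicit bookkeeping of the three terms and the remark about the $\gamma_1$-dependence of the $\mathcal{O}(1/T)$ constant are both accurate and, if anything, more detailed than what the paper provides.
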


\section{Experiments}\label{experiments}
\subsection{Experimental Setup}
%\noindent\textbf{Privacy evaluation setup.}
We run the experiments with PyTorch by using one RTX 2080 Ti GPU and a 2.10 GHz CPU.
For fair comparison, we follow the setting of \cite{ATS}, using ten users with the same data distribution. {
We assess model privacy against various attacks and evaluate model performance on CIFAR10 and CIFAR100, which is a common setting used in many studies~\cite{evaluating,ATS}.}
We follow~\cite{evaluating,gradient} to quantify the privacy effect of defenses, i.e., visualizing the reconstructed data and using learned perceptual image patch similarity~(LPIPS) and structural similarity~(SSIM) to measure the quality of the recovered data.
A better defense should have larger LPIPS ($\uparrow$) and  smaller SSIM ($\downarrow$).

\noindent\textbf{Attack  methods.} 
We evaluate DGP against IG, GI, R-gap, and Rob attacks, which represent state-of-the-art passive and active GIAs, as discussed in Sec.~\ref{Sec:ThreatModel}.
{We use the following default attack settings: ResNet18 for IG , GI, Rob on CIFAR10. And we apply R-gap with CNN6~\cite{R-GAP} on CIFAR10, as this analysis attack is only suitable for models with simple structures. We provide additional attack details, more privacy evaluations (\eg more models and datasets)  and efficiency evaluation (computation costs and communication costs) in the appendix.}
% We evaluate our defense against IVG attack~\cite{IVG}, GGL attack~\cite{GGL}, R-gap attack~\cite{R-GAP}, and Rob attack~\cite{robbing}, which represent state-of-the-art passive and active GIAs, as discussed in Sec.~\ref{Sec:ThreatModel}. We implement IVG with ResNet18, R-gap with CNN6, Rob with LeNet~(Zhu) on CIFAR100, and GGL with ResNet18 on ImageNet.
% More settings for attacks are shown in the appendix.

\noindent\textbf{Defense  methods.} We compare DGP with six state-of-the-art defenses: Soteria, ATS, Precode, Outpost, DP and Top-$k$  pruning.
Besides, we set CL-SGD as the baseline that adopts no defense. 
{Note that DP provides privacy guarantee by adding noise to gradients in deep learning. We adhere to the DP settings of \cite{soteria} and use Gaussian noise with standard deviation $\sigma=10^{-2}$.}
When quantifying the defense performance of ATS, we not only evaluate the similarity between the raw images and the recovered data (ATS-T), but also evaluate the similarity between the disturbed training images (\ie, the real inputs) and the recovered data (ATS-R).
%For our scheme GradPrivacy, We evaluate DGP instead of ADGP since DGP is the lower privacy bound  under the same pruning ratio.
For Top-$k$ and DGP, we set $k=20\%$, $k_1+k_2=80\%$ with the regulation hyperparameter $p=1/15$. 
%  The appendix gives more experiments under different $p$ {and $k$}.
The rest defenses remain the original settings.
\subsection{Privacy Evaluation}
\label{sec:privacy}
% Table generated by Excel2LaTeX from sheet 'Sheet1'
\begin{table*}[htbp]
  \centering
     % \scalebox{0.5}
     {
    \fontsize{9pt}{1pt}\selectfont
    % \setstretch{1.2}  % 设置行间距为1.2倍
  \setlength{\tabcolsep}{1pt}  % 设置列间距为8pt
 \begin{tabular}{|c|c|r|c|c|r|r|r|r|r|r}
    \toprule
    \multicolumn{1}{l|}{Attack} & \multicolumn{1}{l|}{Metric} & \multicolumn{1}{l|}{Baseline} & \multicolumn{1}{l|}{ATS-R} & \multicolumn{1}{l|}{ATS-T} & \multicolumn{1}{l|}{Soteria} & \multicolumn{1}{l|}{Precode} & \multicolumn{1}{l|}{DP} & \multicolumn{1}{l|}{Top-$k$} & \multicolumn{1}{l|}{Outpost} & \multicolumn{1}{l}{DGP} \\
    \midrule
    \multicolumn{1}{c|}{\multirow{2}[2]{*}{R-gap}} & \multicolumn{1}{l|}{LPIPS} & 7.7E-4  & \multicolumn{1}{r|}{1.3E-4 } & \multicolumn{1}{r|}{0.020 } & 0.378  &   -  & 0.373  & \textbf{0.379 } & \underline{0.378}  & 0.375  \\
    \multicolumn{1}{c|}{} & \multicolumn{1}{l|}{SSIM} & 0.965  & \multicolumn{1}{r|}{0.989 } & \multicolumn{1}{r|}{0.870 } & 0.252  &   -  & 0.259  & \underline{0.249}  & 0.250  & \textbf{0.248 } \\
    \midrule
    \multicolumn{1}{c|}{\multirow{2}[2]{*}{IG}} & \multicolumn{1}{l|}{LPIPS} & 0.003  & \multicolumn{1}{r|}{4.5E-4} & \multicolumn{1}{r|}{0.108 } & 0.190  & \textbf{0.371 } & 0.268  & 0.029  & 0.088  & \underline{0.316}  \\
    \multicolumn{1}{c|}{} & \multicolumn{1}{l|}{SSIM} & 0.954  & \multicolumn{1}{r|}{0.981 } & \multicolumn{1}{r|}{0.566 } & 0.368  & \textbf{0.257} & 0.333  & 0.769  & 0.640  & \underline{0.287}  \\
    \midrule
    \multicolumn{1}{c|}{\multirow{2}[2]{*}{GI}} & \multicolumn{1}{l|}{LPIPS} & 0.004  & \multicolumn{1}{r|}{0.003 } & \multicolumn{1}{r|}{0.094 } & 0.201  & \textbf{0.453 } & 0.343  & 0.045  & 0.111  & \underline{0.382}  \\
    \multicolumn{1}{c|}{} & \multicolumn{1}{l|}{SSIM} & 0.918  & \multicolumn{1}{r|}{0.908 } & \multicolumn{1}{r|}{0.563 } & 0.362  & \underline{0.247}  & 0.305  & 0.697  & 0.612  & \textbf{0.199} \\
    \midrule
    \multicolumn{1}{c|}{\multirow{4}[2]{*}{Rob}} & \multicolumn{1}{l|}{LPIPS} & 0.023  & \multicolumn{1}{r|}{0.028} & \multicolumn{1}{r|}{0.150 } & 0.023  & 0.025  & 0.023  & \underline{0.523}  & 0.295  & \textbf{0.527} \\
    \multicolumn{1}{c|}{} & \multicolumn{1}{l|}{Min LPIPS} & 7.43E-15 & \multicolumn{1}{r|}{5.03E-15} & \multicolumn{1}{r|}{0.011 } & 7.79E-15 & 5.52E-15 & 8.79E-07 & \underline{0.231}  & 0.195  & \textbf{0.243 } \\
    \multicolumn{1}{c|}{} & \multicolumn{1}{l|}{SSIM} & 0.933  & \multicolumn{1}{r|}{0.926 } & \multicolumn{1}{r|}{0.514} & 0.933  & 0.929  & 0.899  & \textbf{0.038} & 0.221  & \underline{0.051}  \\
    \multicolumn{1}{c|}{} & \multicolumn{1}{l|}{Max SSIM} & 1.000  & \multicolumn{1}{r|}{1.000 } & \multicolumn{1}{r|}{0.931 } & 1.000  & 1.000  & 1.000  & \textbf{0.224} & \underline{0.310}  & 0.365  \\
    \midrule
    \multicolumn{2}{c|}{Final Model Acc.} & \textbf{93.62\%} & \multicolumn{2}{c|}{93.14\%} &  92.90\%     & 92.83\%
 & 76.01\% & \underline{93.44\%} & 92.96\%
 & 93.40\% \\
    \bottomrule
    \end{tabular}%
    }
    \caption{
    Evaluation of the defense performance under four attacks.}
 \label{tab:four attacks}%
\end{table*}%

Tab.~\ref{tab:four attacks} shows the defense performance  with SSIM, and LPIPS  under four attacks. For each metric, we bold the best result and underline the second best result (the same hereinafter). The results show that ATS, Soteria, Precode, DP perform poorly under Rob attack, while Top-$k$ and Outpost are vulnerable to IG attack and GI attack.
{In summary, DGP can provide excellent privacy protection under all attacks, while still retain high model accuracy.}
To perceptually demonstrate the defense performance, we also visualize the reconstructed images. Note that ATS-T refers to processed raw data, while ATS-R represents the reconstructed raw data in Fig.~\ref{fig:privacy}.
Fig.~\ref{fig:IVG} and Fig.~\ref{fig:GI} depict the recovered images under optimization attacks (\eg,~IG, GI). We can find that the attacker can still recover the outline of inputs with ATS, Top-$k$ and Outpost. Soteria, Precode, DP and DGP can make the recovered images unrecognizable. 
Fig.~\ref{fig:RGAP} shows the recovered images from the R-gap attack.  We can see that all defenses but ATS can well defend against R-gap  because ATS does not damage the gradient structure, validating that a slight perturbation on gradients can mitigate the analytical attacks easily. 
We are not able to provide the result of Precode because its VB operation destroys the model structure, making R-gap cannot be mounted.
Fig.~\ref{fig:rob} plots the recovered images from the Rob attack. It shows that ATS, Precode, and Soteria fail to work and most inputs can be reconstructed. 
{
Fig.~\ref{fig:rob} shows that DP also cannot defend against Rob. This might be because the server calculates the inputs by superimposing a large number of the malicious imprint module's gradient parameters. And the noise added to the gradient follows a normal distribution, potentially canceling out when aggregated in large numbers. 
}
However, DGP, Top-$k$, and Outposts can effectively defend against Rob attack because the gradients of all layers are pruned , including those of the malicious imprint modules. 
However, we reiterate that the main weakness of the gradient pruning based on Top-$k$ selection is its vulnerability to optimization attacks (\eg, IG, GI), as widely demonstrated in the literature~\cite{ATS, soteria}.
% {However, we reiterate that the main weakness of Top-$k$ is its vulnerability to optimization attacks (\eg, IG, GI), as widely demonstrated in the literature~\cite{ATS,Precode, soteria}.}

\subsection{Accuracy Evaluation}
\label{sec:acc}

Tab.~\ref{tab:four attacks} lists the accuracy of ResNet18 on CIFAR10 under different defenses. 
%Note that the error feedback mechanism is also applied to Soteria\footnote{Soteria optimally prunes the gradients of a single fully connected layer from the view of feature representation.} and Top-$k$ for fair comparison. 
Clearly, ATS, Soteria, Precode, Outpost, Top-$k$ and our method can achieve model accuracy similar to the unprotected baseline, while DP performs worst as expected. 
% {In addition, we evaluated more model performance with DGP, including ResNet18, VGG11, CNN6, and LeNet~(Zhu), as shown in tab.~.We further perform ablation experiments to explore the role of the error feedback mechanism.}
{Additionally, we evaluated more model performance with DGP, including ResNet18, VGG11~\cite{vgg13}, CNN6, LeNet~(Zhu)~\cite{IVG}. And we further perform ablation experiments to explore the role of the error feedback mechanism. Fig.~\ref{fig:diff_acc} shows that the model performance of DGP with error feedback is close to the baseline. However, DGP without error feedback performs poorly and even fails to converge. This is because accumulated errors result in a larger disparity between the model's update direction and the correct update direction. Notably, this effect is mitigated in structurally complex models due to the presence of numerous redundant parameters. Prior research \cite{wo_ef} indicated that even if these redundant parameters are not updated (\ie, their gradient parameters are set to 0), their impact on model performance is small.}
Our theoretical analysis and Fig.~\ref{fig:diff_acc} show that the error feedback mechanism can effectively correct the negative effects caused by gradient pruning. And Top-$k$ method can also enjoy the benefit since it is also based on pruning.  However, further experiments (see details in the appendix) validate that, to achieve a similar level of privacy protection of DGP with $80\%$ pruning, the pruning rate of Top-$k$ exceeds $95\%$ and results in inferior accuracy.
% Table generated by Excel2LaTeX from sheet 'Sheet1'
% , height=0.07\textwidth
\begin{figure}[t!]
  \centering
  \subfigure[IG,~CIFAR10]{
   \label{fig:IVG}\includegraphics[width=0.45\textwidth]{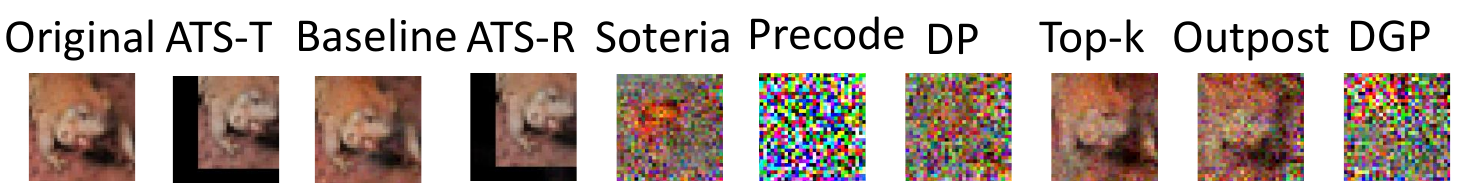}}
  \subfigure[GI,~CIFAR10]{
\label{fig:GI}\includegraphics[width=0.45\textwidth]{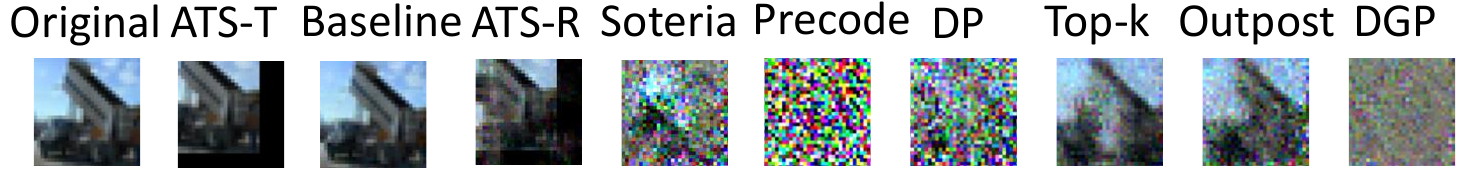}}
\subfigure[R-gap,~CIFAR10]{
\label{fig:RGAP}\includegraphics[width=0.43\textwidth]{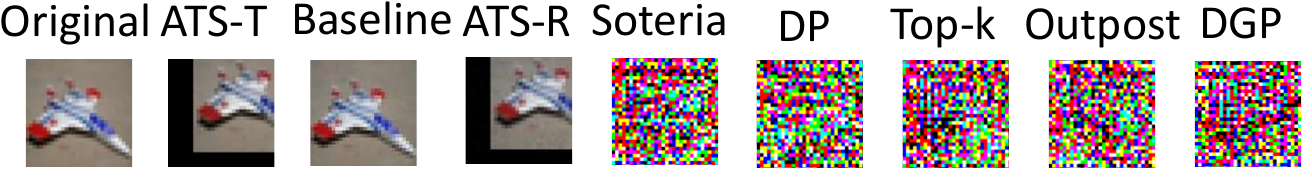}}
\subfigure[Rob,~CIFAR10]{
\label{fig:rob}\includegraphics[width=0.45\textwidth]{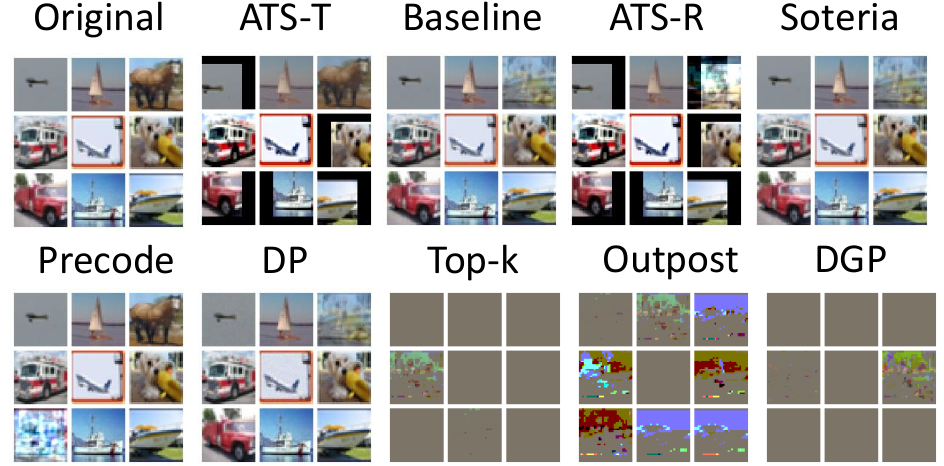}}
\caption{Data visualization on privacy evaluation by using multiple gradient inversion attacks.}
       \label{fig:privacy}
\end{figure}

% \begin{figure}[t!]
%   \centering
%   \subfigure[Recovered data under IG]{
% \label{fig:IVG-Topk}
%     \includegraphics[width=0.3\textwidth]{}}
% \caption{Evaluation of model accuracy with different datasets and models (EF denotes the error feedback).}
% \label{fig:topk_ours}\end{figure}

\begin{figure}[t!]
    \centering
    \includegraphics[width=0.33\textwidth]{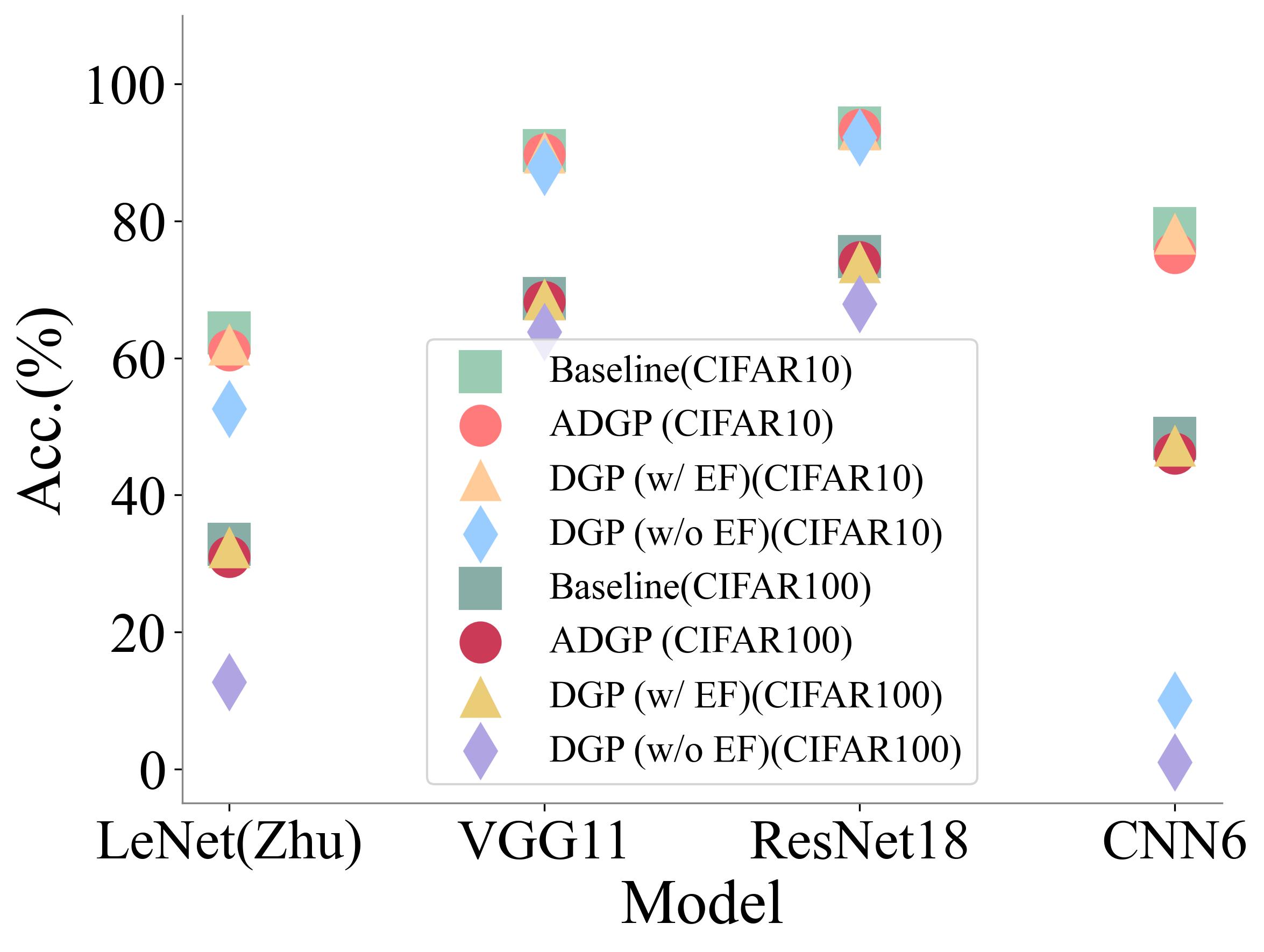}
        \centering
\caption{Evaluation of model accuracy with different datasets and models (EF denotes the error feedback).}
  \label{fig:diff_acc}
\end{figure}
% \begin{table*}[htbp]
%   \centering
%    { \fontsize{9pt}{4pt}\selectfont
%     % \setstretch{1.2}  % 设置行间距为1.2倍
%   \setlength{\tabcolsep}{2pt}  % 设置列间距为8pt    
%         \begin{tabular}{lrrrlrrrl}
%     \toprule
%     \multicolumn{1}{c}{\multirow{2}[4]{*}{Method}} & \multicolumn{4}{c}{CIFAR10}   & \multicolumn{4}{c}{CIFAR100} \\
% \cmidrule(r){2-5} \cmidrule(r){6-9}  \multicolumn{1}{c}{} & \multicolumn{1}{l}{LeNet (Zhu)} & \multicolumn{1}{l}{VGG11} & \multicolumn{1}{l}{ResNet18} & CNN6  & \multicolumn{1}{l}{LeNet (Zhu)} & \multicolumn{1}{l}{VGG11} & \multicolumn{1}{l}{ResNet18} & CNN6 \\
%     \midrule
%     \multicolumn{1}{r|}{Baseline} & 63.65\% & 90.27\% & 93.62\% & \multicolumn{1}{r|}{78.89\%} & 32.83\% & 68.73\% & 74.80\% & \multicolumn{1}{r}{48.26\%} \\
%     \multicolumn{1}{r|}{ADGP}  & 61.14\% & 89.77\% & 93.33\% & \multicolumn{1}{r|}{75.36\%} & 31.00\% & 68.26\% & 74.00\% & \multicolumn{1}{r}{46.09\%} \\
%     \multicolumn{1}{r|}{DGP (w/ EF)} & 62.04\% & 90.04\% & 93.40\% & \multicolumn{1}{r|}{78.23\%} & 32.42\% & 68.60\% & 74.04\% & \multicolumn{1}{r}{47.24\%} \\
%     \multicolumn{1}{r|}{DGP (w/o EF)} & 52.61\% & 87.92\% & 92.26\% & \multicolumn{1}{r|}{not converge}     & 12.74\% & 63.80\% & 67.89\% & \multicolumn{1}{r}{not converge} \\
%     \bottomrule
%     \end{tabular}%
%     }
%     \caption{Evaluation of model accuracy with different datasets and models (EF denotes the error feedback mechanism).}
%   \label{tab:model_acc}
% \end{table*}

\subsection{Further Discussions}
\label{sec:download}
\noindent\textbf{Choice of $k_1$, $k_2$ and $p$ for DGP.}~
According to the analysis in Sec.~\ref{Sec:rob_ivg}, active GIA is greatly impacted by $(k_1+k_2)$ and optimization GIA is greatly affected by $p=k_1/k_2$. 
In this concern, we use the Rob attack to evaluate the privacy of DGP with different $(k_1+k_2)$ and IG attack to evaluate DGP with different $p$.
As shown in Tab.~\ref{tab:different_p_k}, larger pruning rate $(k_1+k_2)$ leads to better privacy-preserving, but the model's performance suffers as a consequence. 
Furthermore, a larger $p$, i.e., more large parameters are eliminated, can better defend against optimization GIAs  but impact accuracy.
% \begin{table}
% \centering
% \subtable[Different $(k_1+k_2)$.]{
% \label{tab:different_k}

% {
%  \fontsize{9pt}{2pt}\selectfont
%     % \setstretch{1.2}  % 设置行间距为1.2倍
%   \setlength{\tabcolsep}{2pt}  % 设置列间距为8pt
% \begin{tabular}{lrrrr}
%     \toprule
%     $(k_1+k_2)$ & 0     & 48\%  & 80\%  & 96\% \\
%     \midrule
%     LPIPS & 0.023  & 0.426  & 0.527  & \textbf{0.531 } \\
%     SSIM  & 0.933  & 0.146  & 0.051  & \textbf{0.029 } \\
%     Acc. (\%) & \textbf{93.62} & 93.42 & 93.40  & 92.91 \\
%     \bottomrule
%     \end{tabular}%
% }
% }
% \subtable[Different $p$.]{        
% { \fontsize{9pt}{2pt}\selectfont
%     % \setstretch{1.2}  % 设置行间距为1.2倍
%   \setlength{\tabcolsep}{2pt}  % 设置列间距为8pt
% \begin{tabular}{lrrrr}
%     \toprule
%     $p$=$k_1/k_2$ & 0     &   1/15 & \multicolumn{1}{l}{1/7} & \multicolumn{1}{l}{1/3} \\
%     \midrule
%     LPIPS & 0.029  & 0.316  & 0.351  & \textbf{0.383 } \\
%     SSIM  & 0.769  & 0.287  & 0.250  & \textbf{0.234 } \\
%     Acc. (\%).  & \textbf{93.44} & 93.40 & 93.21 & 92.82 \\
%     \bottomrule
%     \end{tabular}%
%     }
%     \label{tab:different_p}
% }

% \caption{The impact of different parameters on DGP.}
%  \label{tab:different_p_k}
% \end{table}

\begin{table}[t!]
  \centering
  { \fontsize{9pt}{3.5pt}\selectfont
    % \setstretch{1.2}  % 设置行间距为1.2倍
  \setlength{\tabcolsep}{3pt} 
    \begin{tabular}{|l|rrr|rrr|}
    \toprule
    \multirow{2}[4]{*}{} & \multicolumn{3}{c|}{($k_1 + k_2$)} & \multicolumn{3}{c|}{$p=k_1/k_2$} \\
\cmidrule{2-7}          & 48\%  & 80\%  & 96\%  & 1/15   & 1/7 & 1/3 \\
    \midrule
    LPIPS & 0.426 & 0.527 & 0.531 & 0.316  & 0.351  & 0.383  \\
    SSIM  & 0.146 & 0.051 & 0.029 & 0.287  & 0.250  & 0.234  \\
    Acc.(\%) & 93.42  & 93.40  & 92.91  & 93.40  & 93.21  & 92.82  \\
    \bottomrule
    \end{tabular}%
    }
     \caption{The impact of different parameters on DGP.}
  \label{tab:different_p_k}%
\end{table}%

% Table generated by Excel2LaTeX from sheet 'Sheet1'
% \begin{table*}[htbp]
%   \centering
%   \caption{Add caption}
%   { \fontsize{9pt}{1pt}
%   \setlength{\tabcolsep}{1pt}
%     \begin{tabular}{lrrrr|lrrrr|}
% \cmidrule{1-9}    ((k1 + k2) & 0     & 48\%  & 80\%  & 96\%  & \$p\$=\$k\_1/k\_2\$ & 0     & 0\%   & 80\%  & \multicolumn{1}{r}{96\%} \\
% \cmidrule{1-9}    LPIPS & 0.023 & 0.426 & 0.527 & 0.531 & LPIPS & 0.023 & 0.426 & 0.527 & 0.531 \\
%     SSIM  & 0.933 & 0.146 & 0.051 & 0.029 & SSIM  & 0.933 & 0.146 & 0.051 & 0.029 \\
%     Acc.(\%) & 93.62  & 93.42  & 93.40  & 92.91  & Acc.(\%) & 93.62  & 93.42  & 93.40  & 92.91  \\
%     \bottomrule
%     \end{tabular}%
%     }
%   \label{tab:addlabel}%
% \end{table*}%

\noindent\textbf{Reducing download communication cost.}~Although DGP provides a sufficient privacy guarantee as well as reducing upload cost, users' download cost could still be expensive. 
This is because different users have different sets of $\mathcal{T}_{k_1}(\cdot)$ and $\mathcal{B}_{k_2}(\cdot)$ when pruning their own local gradients, so the global model parameters will become dense after aggregation.
We suggest aligned DGP (ADGP), an improved scheme to align the selected gradients to further reduce download cost.
Similar to DGP, for best privacy, each user will still firstly identify his top-$k_1$ gradients location set $\mathcal{T}_{k_1}$.
Different from DGP, ADGP also wants to save users' download comm. cost by ensuring that all users' uploaded pruned gradient parameters reside in the same location set. 
This is achieved by randomly selecting a user, who identifies a top-$2k$ ($k_1<k$) location set $\mathcal{T}_{2k}$ (represented with a binary location matrix $\mathcal{I}$) and broadcasts $\mathcal{I}$ to all other users. Note that $\mathcal{T}_{k_1}  \subset \mathcal{T}_{2k}$  is not necessarily true. Upon receiving $\mathcal{I}$, each user first discards gradient parameters in $\mathcal{T}_{k_1}$ and then only transmits the $k$ largest gradient parameters whose locations belong to $\mathcal{I}$. 
After aggregation, users only need to download the global gradients' parameters associated with $\mathcal{I}$. 
We give the specific comm. cost in the appendix and find that ADGP further reduces the overall comm. cost. 
Moreover, with error feedback mechanism, it can also maintain the model performance, shown in Fig.~\ref{fig:diff_acc}. To summarize, ADGP can provide better communication efficiency while maintain model performance. We leave the work of investigating the privacy-protection of ADGP as the future work. 

\section{Conclusion, Limitation, and Future}\label{conclusion}
% \red{Our work firstly reveals the risks of privacy-preserving methods that only perturb the gradients of some layers.
% Through a comprehensive analysis of gradient inversion attacks, we show that it is necessary to perturb or sparse the gradients of each layer for privacy preservation. And considering the challenge of high communication cost in collaborative learning, we propose DGP and ADGP to achieve the trade-off between privacy protection, model performance, and efficient communication, and give sufficient theoretical support. We hope that our newly proposed gradient pruning method can shed new light on addressing privacy leakage concern as well as saving communication bandwidth.}
Contrary to the traditional belief that gradient pruning is not a good choice to protect privacy, this paper proposes DGP, a gradient pruning-based defense, to achieve a better trade-off among privacy protection, model performance, and communication efficiency for collaborative learning.
This finding is built upon the analysis of how pruned gradients bound the attacker's recovery error and why large gradient parameters leak more private information and should be pruned. By dual-pruning both large and small gradients, DGP guarantees 
theoretical convergence and better privacy protection against passive attackers. 
By comparing to state-of-the-art defenses,  experimental results corroborate our theoretical analysis, as well as empirically demonstrating the advantage of DGP against active attackers. 
In terms of limitations, the success of ADGP relies on selecting a reliable user to broadcast its locations. When this user becomes malicious, the entire system will fail. 
In the future, we will provide more rigorous and more comprehensive privacy analysis, investigate the privacy property of ADGP under passive attacks, explore the applications of (A)DGP in federated learning and broaden our research to more domains like NLP.

\section*{Acknowledgements} 
Shengshan's work is supported in part by the National Natural Science Foundation of China (Grant No.U20A20177) and Hubei Province Key R\&D Technology Special Innovation Project under Grant No.2021BAA032.
Shengqing's work is supported in part by Hubei Provincial Natural Science Foundation Project (NO. 2023AFB342) and Open Program of Nuclear Medicine and Molecular Imaging Key Laboratory of Hubei Province (NO. 2022fzyx018). The work is supported by HPC Platform of Huazhong University of Science and Technology.
Shengshan Hu is the corresponding author.

\bibliography{aaai24}

\begin{thebibliography}{51}
\providecommand{\natexlab}[1]{#1}

\bibitem[{Abadi et~al.(2016)Abadi, Chu, Goodfellow, McMahan, Mironov, Talwar, and Zhang}]{dp2}
Abadi, M.; Chu, A.; Goodfellow, I.; McMahan, H.~B.; Mironov, I.; Talwar, K.; and Zhang, L. 2016.
\newblock Deep learning with differential privacy.
\newblock In \emph{Proceedings of the 2016 ACM SIGSAC conference on computer and communications security (CCS'16)}, 308--318.

\bibitem[{Alemi et~al.(2016)Alemi, Fischer, Dillon, and Murphy}]{VB}
Alemi, A.~A.; Fischer, I.; Dillon, J.~V.; and Murphy, K. 2016.
\newblock Deep variational information bottleneck.
\newblock \emph{arXiv preprint arXiv:1612.00410}.

\bibitem[{Alistarh et~al.(2018)Alistarh, Hoefler, Johansson, Konstantinov, Khirirat, and Renggli}]{Top-k}
Alistarh, D.; Hoefler, T.; Johansson, M.; Konstantinov, N.; Khirirat, S.; and Renggli, C. 2018.
\newblock The convergence of sparsified gradient methods.
\newblock In \emph{Proceedings of the 2018 Neural Information Processing Systems (NeurIPS'18)}, 5977--5987.

\bibitem[{Boenisch et~al.(2021)Boenisch, Dziedzic, Schuster, Shamsabadi, Shumailov, and Papernot}]{curious}
Boenisch, F.; Dziedzic, A.; Schuster, R.; Shamsabadi, A.~S.; Shumailov, I.; and Papernot, N. 2021.
\newblock When the Curious Abandon Honesty: Federated Learning Is Not Private.
\newblock \emph{arXiv preprint arXiv:2112.02918}.

\bibitem[{Bonawitz et~al.(2017)Bonawitz, Ivanov, Kreuter, Marcedone, McMahan, Patel, Ramage, Segal, and Seth}]{encrypt1}
Bonawitz, K.; Ivanov, V.; Kreuter, B.; Marcedone, A.; McMahan, H.~B.; Patel, S.; Ramage, D.; Segal, A.; and Seth, K. 2017.
\newblock Practical secure aggregation for privacy-preserving machine learning.
\newblock In \emph{Proceedings of the 2017 ACM SIGSAC Conference on Computer and Communications Security (CCS'17)}, 1175--1191.

\bibitem[{Chen et~al.(2020)Chen, Ni, Lu, Cui, Chen, Sun, Wang, Venkataramani, Srinivasan, Zhang et~al.}]{scalecom}
Chen, C.-Y.; Ni, J.; Lu, S.; Cui, X.; Chen, P.-Y.; Sun, X.; Wang, N.; Venkataramani, S.; Srinivasan, V.~V.; Zhang, W.; et~al. 2020.
\newblock Scalecom: Scalable sparsified gradient compression for communication-efficient distributed training.
\newblock In \emph{Proceedings of the 2020 Neural Information Processing Systems (NeurIPS'20)}, 13551--13563.

\bibitem[{Chen, Wu, and Hong(2020)}]{dp1}
Chen, X.; Wu, Z.~S.; and Hong, M. 2020.
\newblock Understanding gradient clipping in private SGD: A geometric perspective.
\newblock In \emph{Proceedings of the 2020 Neural Information Processing Systems (NeurIPS'20)}, 13773--13782.

\bibitem[{Dai et~al.(2019)Dai, Yan, Zhou, Yang, Ng, Cheng, and Fan}]{hyper}
Dai, X.; Yan, X.; Zhou, K.; Yang, H.; Ng, K.~K.; Cheng, J.; and Fan, Y. 2019.
\newblock Hyper-sphere quantization: Communication-efficient sgd for federated learning.
\newblock \emph{arXiv preprint arXiv:1911.04655}.

\bibitem[{Danner and Jelasity(2015)}]{encrypt3}
Danner, G.; and Jelasity, M. 2015.
\newblock Fully distributed privacy preserving mini-batch gradient descent learning.
\newblock In \emph{Proceedings of the 15th International conference on distributed applications and interoperable systems (IFIP'15)}, 30--44.

\bibitem[{Dwork, Roth et~al.(2014)}]{dp4}
Dwork, C.; Roth, A.; et~al. 2014.
\newblock The algorithmic foundations of differential privacy.
\newblock \emph{Found. Trends Theor. Comput. Sci.}, 9(3-4): 211--407.

\bibitem[{Fan et~al.(2020)Fan, Ng, Ju, Zhang, Liu, Chan, and Yang}]{a1}
Fan, L.; Ng, K.~W.; Ju, C.; Zhang, T.; Liu, C.; Chan, C.~S.; and Yang, Q. 2020.
\newblock Rethinking privacy preserving deep learning: How to evaluate and thwart privacy attacks.
\newblock In \emph{Federated Learning}, volume 12500, 32--50. Springer.

\bibitem[{Fowl et~al.(2021)Fowl, Geiping, Czaja, Goldblum, and Goldstein}]{robbing}
Fowl, L.; Geiping, J.; Czaja, W.; Goldblum, M.; and Goldstein, T. 2021.
\newblock Robbing the fed: Directly obtaining private data in federated learning with modified models.
\newblock \emph{arXiv preprint arXiv:2110.13057}.

\bibitem[{Gao et~al.(2021)Gao, Guo, Zhang, Qiu, Wen, and Liu}]{ATS}
Gao, W.; Guo, S.; Zhang, T.; Qiu, H.; Wen, Y.; and Liu, Y. 2021.
\newblock Privacy-preserving collaborative learning with automatic transformation search.
\newblock In \emph{Proceedings of the 2021 IEEE/CVF Conference on Computer Vision and Pattern Recognition (CVPR'21)}, 114--123.

\bibitem[{Geiping et~al.(2020)Geiping, Bauermeister, Dr{\"o}ge, and Moeller}]{}
Geiping, J.; Bauermeister, H.; Dr{\"o}ge, H.; and Moeller, M. 2020.
\newblock Inverting gradients-how easy is it to break privacy in federated learning?
\newblock In \emph{Proceedings of the 2020 Neural Information Processing Systems (NeurIPS'20)}, 16937--16947.

\bibitem[{Geyer, Klein, and Nabi(2017)}]{dp0}
Geyer, R.~C.; Klein, T.; and Nabi, M. 2017.
\newblock Differentially private federated learning: A client level perspective.
\newblock \emph{arXiv preprint arXiv:1712.07557}.

\bibitem[{Gilad-Bachrach et~al.(2019)Gilad-Bachrach, Laine, Lauter, Rindal, and Rosulek}]{encrypt0}
Gilad-Bachrach, R.; Laine, K.; Lauter, K.; Rindal, P.; and Rosulek, M. 2019.
\newblock Secure data exchange: A marketplace in the cloud.
\newblock In \emph{Proceedings of the 2019 ACM SIGSAC Conference on Cloud Computing Security Workshop (CCSW'19)}, 117--128.

\bibitem[{Hardy et~al.(2017)Hardy, Henecka, Ivey-Law, Nock, Patrini, Smith, and Thorne}]{encrypt2}
Hardy, S.; Henecka, W.; Ivey-Law, H.; Nock, R.; Patrini, G.; Smith, G.; and Thorne, B. 2017.
\newblock Private federated learning on vertically partitioned data via entity resolution and additively homomorphic encryption.
\newblock \emph{arXiv preprint arXiv:1711.10677}.

\bibitem[{He et~al.(2016)He, Zhang, Ren, and Sun}]{resnet18}
He, K.; Zhang, X.; Ren, S.; and Sun, J. 2016.
\newblock Deep residual learning for image recognition.
\newblock In \emph{Proceedings of the 2016 IEEE/CVF Conference on Computer Vision and Pattern Recognition (CVPR'16)}, 770--778.

\bibitem[{Huang et~al.(2021)Huang, Gupta, Song, Li, and Arora}]{evaluating}
Huang, Y.; Gupta, S.; Song, Z.; Li, K.; and Arora, S. 2021.
\newblock Evaluating gradient inversion attacks and defenses in federated learning.
\newblock In \emph{Proceedings of the 2021 Neural Information Processing Systems (NeurIPS'21)}, 7232--7241.

\bibitem[{Jeon et~al.(2021)Jeon, Lee, Oh, Ok et~al.}]{gradient}
Jeon, J.; Lee, K.; Oh, S.; Ok, J.; et~al. 2021.
\newblock Gradient inversion with generative image prior.
\newblock In \emph{Proceedings of the 2021 Neural Information Processing Systems (NeurIPS'21)}, 29898--29908.

\bibitem[{Kairouz et~al.(2021)Kairouz, McMahan, Avent, Bellet, Bennis, Bhagoji, Bonawitz, Charles, Cormode, Cummings et~al.}]{op}
Kairouz, P.; McMahan, H.~B.; Avent, B.; Bellet, A.; Bennis, M.; Bhagoji, A.~N.; Bonawitz, K.; Charles, Z.; Cormode, G.; Cummings, R.; et~al. 2021.
\newblock Advances and open problems in federated learning.
\newblock \emph{Foundations and Trends{\textregistered} in Machine Learning}, 14(1--2): 1--210.

\bibitem[{Karimireddy et~al.(2019)Karimireddy, Rebjock, Stich, and Jaggi}]{EF}
Karimireddy, S.~P.; Rebjock, Q.; Stich, S.; and Jaggi, M. 2019.
\newblock Error feedback fixes signsgd and other gradient compression schemes.
\newblock In \emph{Proceedings of the 36th International Conference on Machine Learning (ICML'19)}, 3252--3261.

\bibitem[{Krizhevsky, Hinton et~al.(2009)}]{cifar}
Krizhevsky, A.; Hinton, G.; et~al. 2009.
\newblock Learning multiple layers of features from tiny images.

\bibitem[{Li et~al.(2022)Li, Zhang, Liu, and Liu}]{}
Li, Z.; Zhang, J.; Liu, L.; and Liu, J. 2022.
\newblock Auditing Privacy Defenses in Federated Learning via Generative Gradient Leakage.
\newblock In \emph{Proceedings of the 2022 IEEE/CVF Conference on Computer Vision and Pattern Recognition (CVPR'22)}, 10132--10142.

\bibitem[{Lin et~al.(2017)Lin, Han, Mao, Wang, and Dally}]{DGC}
Lin, Y.; Han, S.; Mao, H.; Wang, Y.; and Dally, W.~J. 2017.
\newblock Deep gradient compression: Reducing the communication bandwidth for distributed training.
\newblock \emph{arXiv preprint arXiv:1712.01887}.

\bibitem[{Liu et~al.(2020)Liu, Cao, Chen, Guo, and Yoshikawa}]{shuffle0}
Liu, R.; Cao, Y.; Chen, H.; Guo, R.; and Yoshikawa, M. 2020.
\newblock Flame: Differentially private federated learning in the shuffle model.
\newblock \emph{arXiv preprint arXiv:2009.08063}.

\bibitem[{Mohassel and Zhang(2017)}]{encrypt4}
Mohassel, P.; and Zhang, Y. 2017.
\newblock Secureml: A system for scalable privacy-preserving machine learning.
\newblock In \emph{Proceedings of the 2017 IEEE Symposium on Security and Privacy (SP'17)}, 19--38.

\bibitem[{Molchanov et~al.(2016)Molchanov, Tyree, Karras, Aila, and Kautz}]{wo_ef}
Molchanov, P.; Tyree, S.; Karras, T.; Aila, T.; and Kautz, J. 2016.
\newblock Pruning convolutional neural networks for resource efficient inference.
\newblock \emph{arXiv preprint arXiv:1611.06440}.

\bibitem[{Pan et~al.(2022)Pan, Zhang, Yan, Zhu, and Yang}]{exploring}
Pan, X.; Zhang, M.; Yan, Y.; Zhu, J.; and Yang, Z. 2022.
\newblock Exploring the security boundary of data reconstruction via neuron exclusivity analysis.
\newblock In \emph{Proceedings of the 31st USENIX Security Symposium (Security'22)}, 3989--4006.

\bibitem[{Phong et~al.(2017)Phong, Aono, Hayashi, Wang, and Moriai}]{a0}
Phong, L.~T.; Aono, Y.; Hayashi, T.; Wang, L.; and Moriai, S. 2017.
\newblock Privacy-preserving deep learning: Revisited and enhanced.
\newblock In \emph{Proceedings of the 8th International Conference on Applications and Techniques in Information Security (ATIS'17)}, 100--110.

\bibitem[{Qian and Hansen(2020)}]{can}
Qian, J.; and Hansen, L.~K. 2020.
\newblock What can we learn from gradients?
\newblock \emph{arXiv preprint arXiv:2010.15718}.

\bibitem[{Scheliga, M{\"a}der, and Seeland(2022)}]{Precode}
Scheliga, D.; M{\"a}der, P.; and Seeland, M. 2022.
\newblock PRECODE-A Generic Model Extension to Prevent Deep Gradient Leakage.
\newblock In \emph{Proceedings of the 2022 IEEE/CVF Winter Conference on Applications of Computer Vision (WACV'22)}, 1849--1858.

\bibitem[{Shokri and Shmatikov(2015)}]{CL}
Shokri, R.; and Shmatikov, V. 2015.
\newblock Privacy-preserving deep learning.
\newblock In \emph{Proceedings of the 22nd ACM SIGSAC conference on computer and communications security (CCS'15)}, 1310--1321.

\bibitem[{Simonyan and Zisserman(2014)}]{vgg13}
Simonyan, K.; and Zisserman, A. 2014.
\newblock Very deep convolutional networks for large-scale image recognition.
\newblock \emph{arXiv preprint arXiv:1409.1556}.

\bibitem[{Sun et~al.(2021)Sun, Li, Wang, Yang, Li, and Chen}]{soteria}
Sun, J.; Li, A.; Wang, B.; Yang, H.; Li, H.; and Chen, Y. 2021.
\newblock Soteria: Provable defense against privacy leakage in federated learning from representation perspective.
\newblock In \emph{Proceedings of the 2021 IEEE/CVF Conference on Computer Vision and Pattern Recognition (CVPR'21)}, 9311--9319.

\bibitem[{Sun, Qian, and Chen(2021)}]{shuffle1}
Sun, L.; Qian, J.; and Chen, X. 2021.
\newblock Ldp-fl: Practical private aggregation in federated learning with local differential privacy.
\newblock In \emph{Proceedings of the Thirtieth International Joint Conference on Artificial Intelligence (IJCAI'21)}, 1571--1578.

\bibitem[{Wang, Hugh, and Li(2023)}]{outpost}
Wang, F.; Hugh, E.; and Li, B. 2023.
\newblock More than Enough is Too Much: Adaptive Defenses against Gradient Leakage in Production Federated Learning.
\newblock In \emph{Proceedings of the International Conference on Computer Communications (Infocom' 23)}.

\bibitem[{Wang et~al.(2020)Wang, Deng, Guo, Wang, Meng, Liu, Ding, and Rajasekaran}]{sapag}
Wang, Y.; Deng, J.; Guo, D.; Wang, C.; Meng, X.; Liu, H.; Ding, C.; and Rajasekaran, S. 2020.
\newblock Sapag: A self-adaptive privacy attack from gradients.
\newblock \emph{arXiv preprint arXiv:2009.06228}.

\bibitem[{Wang et~al.(2004)Wang, Bovik, Sheikh, and Simoncelli}]{ssim}
Wang, Z.; Bovik, A.~C.; Sheikh, H.~R.; and Simoncelli, E.~P. 2004.
\newblock Image quality assessment: from error visibility to structural similarity.
\newblock \emph{IEEE transactions on image processing}, 13(4): 600--612.

\bibitem[{Wang et~al.(2019)Wang, Song, Zhang, Song, Wang, and Qi}]{dlg0}
Wang, Z.; Song, M.; Zhang, Z.; Song, Y.; Wang, Q.; and Qi, H. 2019.
\newblock Beyond inferring class representatives: User-level privacy leakage from federated learning.
\newblock In \emph{Proceedings of the 2019 IEEE Conference on Computer Communications (INFOCOM'19)}, 2512--2520.

\bibitem[{Wei et~al.(2020{\natexlab{a}})Wei, Li, Ding, Ma, Yang, Farokhi, Jin, Quek, and Poor}]{dp5}
Wei, K.; Li, J.; Ding, M.; Ma, C.; Yang, H.~H.; Farokhi, F.; Jin, S.; Quek, T.~Q.; and Poor, H.~V. 2020{\natexlab{a}}.
\newblock Federated learning with differential privacy: Algorithms and performance analysis.
\newblock \emph{IEEE Transactions on Information Forensics and Security}, 15: 3454--3469.

\bibitem[{Wei et~al.(2020{\natexlab{b}})Wei, Liu, Loper, Chow, Gursoy, Truex, and Wu}]{framework}
Wei, W.; Liu, L.; Loper, M.; Chow, K.-H.; Gursoy, M.~E.; Truex, S.; and Wu, Y. 2020{\natexlab{b}}.
\newblock A framework for evaluating gradient leakage attacks in federated learning.
\newblock \emph{arXiv preprint arXiv:2004.10397}.

\bibitem[{Wen et~al.(2022)Wen, Geiping, Fowl, Goldblum, and Goldstein}]{fishing}
Wen, Y.; Geiping, J.; Fowl, L.; Goldblum, M.; and Goldstein, T. 2022.
\newblock Fishing for User Data in Large-Batch Federated Learning via Gradient Magnification.
\newblock \emph{arXiv preprint arXiv:2202.00580}.

\bibitem[{Wilson et~al.(2017)Wilson, Roelofs, Stern, Srebro, and Recht}]{marginal}
Wilson, A.~C.; Roelofs, R.; Stern, M.; Srebro, N.; and Recht, B. 2017.
\newblock The marginal value of adaptive gradient methods in machine learning.
\newblock In \emph{Proceedings of the 2017 Neural Information Processing Systems (NeurIPS'17)}, 4148--4158.

\bibitem[{Yin et~al.(2021)Yin, Mallya, Vahdat, Alvarez, Kautz, and Molchanov}]{}
Yin, H.; Mallya, A.; Vahdat, A.; Alvarez, J.~M.; Kautz, J.; and Molchanov, P. 2021.
\newblock See through gradients: Image batch recovery via gradinversion.
\newblock In \emph{Proceedings of the 2021 IEEE/CVF Conference on Computer Vision and Pattern Recognition (CVPR'21)}, 16337--16346.

\bibitem[{Yin et~al.(2020)Yin, Molchanov, Alvarez, Li, Mallya, Hoiem, Jha, and Kautz}]{deep}
Yin, H.; Molchanov, P.; Alvarez, J.~M.; Li, Z.; Mallya, A.; Hoiem, D.; Jha, N.~K.; and Kautz, J. 2020.
\newblock Dreaming to distill: Data-free knowledge transfer via deepinversion.
\newblock In \emph{Proceedings of the 2020 IEEE/CVF Conference on Computer Vision and Pattern Recognition}, 8715--8724.

\bibitem[{Yu et~al.(2019)Yu, Liu, Pu, Gursoy, and Truex}]{dp3}
Yu, L.; Liu, L.; Pu, C.; Gursoy, M.~E.; and Truex, S. 2019.
\newblock Differentially private model publishing for deep learning.
\newblock In \emph{Proceedings of the 2019 IEEE Symposium on Security and Privacy (SP'19)}, 332--349.

\bibitem[{Zhang et~al.(2018)Zhang, Isola, Efros, Shechtman, and Wang}]{lpips}
Zhang, R.; Isola, P.; Efros, A.~A.; Shechtman, E.; and Wang, O. 2018.
\newblock The unreasonable effectiveness of deep features as a perceptual metric.
\newblock In \emph{Proceedings of the 2018 IEEE conference on computer vision and pattern recognition (CVPR'18)}, 586--595.

\bibitem[{Zhao, Mopuri, and Bilen(2020)}]{idlg}
Zhao, B.; Mopuri, K.~R.; and Bilen, H. 2020.
\newblock idlg: Improved deep leakage from gradients.
\newblock \emph{arXiv preprint arXiv:2001.02610}.

\bibitem[{Zhu and Blaschko(2020)}]{R-GAP}
Zhu, J.; and Blaschko, M. 2020.
\newblock R-gap: Recursive gradient attack on privacy.
\newblock \emph{arXiv preprint arXiv:2010.07733}.

\bibitem[{Zhu, Liu, and Han(2019)}]{dlg}
Zhu, L.; Liu, Z.; and Han, S. 2019.
\newblock Deep leakage from gradients.
\newblock In \emph{Proceedings of the 2019 Neural Information Processing Systems (NeurIPS'19)}, 14747--14756.

\end{thebibliography}
\newpage
\title{Appendix for "Revisiting Gradient Pruning: A Dual Realization for Defending Against Gradient Attacks."}
\maketitle
\thispagestyle{empty}
\appendix

%%%%%%%%% BODY TEXT - ENTER YOUR RESPONSE BELOW

% And easy to find:
% \begin{IEEEeqnarray}{rCl}
%     &||&\nabla \mathbf{W}-\textnormal{DGP}(k_1,k_2,\nabla \mathbf{W})||_2\nonumber\\&=&||top[k_1](\nabla \mathbf{W})+bottom[k_2](\nabla \mathbf{W})||_2 \nonumber\\&\ge& ||bottom[k_1+k_2](\nabla \mathbf{W})||_2\nonumber\\&=& ||\nabla \mathbf{W}-top[1-(k_1+k_2)](\nabla \mathbf{W})||_2.
% \end{IEEEeqnarray}
% Combined with formula~(1), it is true that 
% there exists $\gamma_1$ s.t., $\gamma_1 \ge (k_1+k_2)$.
% Moreover, even if the difference is large, i.e., almost all parameters are removed and $\gamma_1$  approaches 1, and the assumption still holds. 

% And easy to find:
% \begin{IEEEeqnarray}{rCl}
%     &||&\nabla \mathbf{W}-\textnormal{DGP}(k_1,k_2,\nabla \mathbf{W})||_2\nonumber\\&=&||top[k_1](\nabla \mathbf{W})+bottom[k_2](\nabla \mathbf{W})||_2 \nonumber\\&\ge& ||bottom[k_1+k_2](\nabla \mathbf{W})||_2\nonumber\\&\ge& ||\nabla \mathbf{W}-top[1-(k_1+k_2)](\nabla \mathbf{W})||_2.
% \end{IEEEeqnarray}
% Combined with formula~(2), it is true that 
% there exists $\gamma_1$ s.t., $\gamma_1 \ge (k_1+k_2)$.
% Moreover, even if the difference is large, i.e., almost all parameters are removed and $\gamma_1$  approaches 1, and the assumption still holds. 

\maketitle
\section{Theoretical proof}
This section presents all the missing theoretical analyses appeared in the manuscript orderly. 

\begin{proposition}
\label{theo:errorbound}
For any given input $\mathbf{x}$ and shared model $\mathbf{W}$, the distance between the recovered data $\mathbf{x}'$ and the real data $\mathbf{x}$ is bounded by:
\begin{equation*}
% ||\mathbf{x} -\mathbf{x}^*||_2  \ge \frac{||\nabla \mathbf{W}- \mathbf{g}||_2}
% {||\mathbf{W}||_2 \cdot
% ||\partial \varphi (\mathbf{Wx}+\mathbf{b})/\partial\mathbf{b}||_2},
||\mathbf{x} -\mathbf{x}'||_2  \ge \frac{||\varphi (\mathbf{x},\mathbf{W})- \varphi (\mathbf{x}',\mathbf{W})||_2}
{||\partial \varphi (\mathbf{x},\mathbf{W})/\partial\mathbf{x}||_2},
\end{equation*}
\end{proposition}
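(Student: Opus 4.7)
The plan is to follow the first-order Taylor expansion approach referenced from Soteria's Lemma 1. I would first fix a reference point by expanding $\varphi(\mathbf{x}',\mathbf{W})$ around the true input $\mathbf{x}$ (with $\mathbf{W}$ held constant), namely
\begin{equation*}
\varphi(\mathbf{x}',\mathbf{W}) \;=\; \varphi(\mathbf{x},\mathbf{W}) + \frac{\partial \varphi(\mathbf{x},\mathbf{W})}{\partial \mathbf{x}}(\mathbf{x}'-\mathbf{x}) + R(\mathbf{x},\mathbf{x}'),
\end{equation*}
and, in the spirit of a first-order (non-rigorous) analysis, drop the remainder term $R(\mathbf{x},\mathbf{x}')$. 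Rearranging yields $\varphi(\mathbf{x}',\mathbf{W}) - \varphi(\mathbf{x},\mathbf{W}) \approx \frac{\partial \varphi(\mathbf{x},\mathbf{W})}{\partial \mathbf{x}}(\mathbf{x}'-\mathbf{x})$.

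Next, I would take $\ell_2$ norms on both sides and apply the standard operator-norm inequality $\|Av\|_2 \le \|A\|_2\,\|v\|_2$ to the Jacobian-vector product, giving
\begin{equation*}
\|\varphi(\mathbf{x},\mathbf{W}) - \varphi(\mathbf{x}',\mathbf{W})\|_2 \;\le\; \left\|\frac{\partial \varphi(\mathbf{x},\mathbf{W})}{\partial \mathbf{x}}\right\|_2 \cdot \|\mathbf{x}-\mathbf{x}'\|_2.
\end{equation*}
Dividing both sides by $\|\partial \varphi(\mathbf{x},\mathbf{W})/\partial \mathbf{x}\|_2$ (which is assumed non-zero; otherwise the bound is vacuous) produces exactly the claimed inequality. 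Finally, since $\varphi(\mathbf{x},\mathbf{W})=\nabla\mathbf{W}$ and $\varphi(\mathbf{x}',\mathbf{W})=\mathbf{g}$ in the paper's notation, the numerator is precisely $\|\nabla\mathbf{W}-\mathbf{g}\|_2$, which matches the ``reconstruction quality is limited by\ldots'' statement in the proposition.

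The main obstacle, and the reason the authors flag this result as non-rigorous, is the discarded Taylor remainder $R(\mathbf{x},\mathbf{x}')$: $\varphi$ is a composition of the forward pass and the loss gradient, so it is generally not globally linear, and $\mathbf{x}$ and $\mathbf{x}'$ need not be close enough for a linearization to be tight. A fully rigorous version would require either (i) a smoothness/Lipschitz-Jacobian assumption on $\varphi$ in $\mathbf{x}$ together with a mean-value-type argument (replacing $\partial\varphi(\mathbf{x},\mathbf{W})/\partial\mathbf{x}$ by $\sup_{\xi\in[\mathbf{x},\mathbf{x}']}\|\partial\varphi(\xi,\mathbf{W})/\partial\mathbf{x}\|_2$), or (ii) an explicit higher-order bound on $R$. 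I would flag this in the write-up and defer the rigorous treatment to the promised follow-up work, keeping the proof itself as the three short steps above: Taylor expand, take norms, and divide.
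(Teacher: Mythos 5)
Your proposal matches the paper's proof exactly: the paper also applies a first-order Taylor expansion of $\varphi(\cdot,\mathbf{W})$ around $\mathbf{x}$, drops the remainder, takes $\ell_2$ norms with the operator-norm inequality $\|Av\|_2 \le \|A\|_2\|v\|_2$, and divides by $\|\partial\varphi(\mathbf{x},\mathbf{W})/\partial\mathbf{x}\|_2$. Your additional remarks on the discarded remainder correctly identify why the authors themselves label the result non-rigorous, so nothing is missing.
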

\begin{proof}
Apply the first-order Taylor expansion to $(\varphi (\mathbf{x},\mathbf{W})- \varphi (\mathbf{x'},\mathbf{W}))$, it is easy to find
\begin{IEEEeqnarray}{rCL}
     &||&\varphi (\textbf{x},\textbf{W})-\varphi (\textbf{x}',\textbf{W})||_2\nonumber\\
     &\approx& ||(\partial \varphi (\mathbf{x},\mathbf{W})/\partial\mathbf{x})(\textbf{x}-\textbf{x}')||_2 \nonumber \\
     &\le& ||(\partial \varphi (\mathbf{x},\mathbf{W})/\partial\mathbf{x})||_2||(\textbf{x}-\textbf{x}')||_2. \nonumber
\end{IEEEeqnarray}
Hence, we have
\begin{align}
&||\mathbf{x} -\mathbf{x}'||_2  \ge \frac{||\varphi (\mathbf{x},\mathbf{W})- \varphi (\mathbf{x}',\mathbf{W})||_2}
{||\partial \varphi (\mathbf{x},\mathbf{W})/\partial\mathbf{x}||_2}.
\end{align}
\end{proof}

\begin{theorem}
\label{theorem1}
For any $(\varepsilon, \delta)$-passive attack  $\mathcal{A}$, under the presence of \textnormal{\textnormal{DGP}}, it will be degenerated to $(\varepsilon+\sqrt{\gamma_1} ||\nabla \mathbf{W}||_2,\delta)$-passive attack if $\mathcal{D}_\mathcal{A}$ is measured by Euclidean distance,  and degenerated to $(\varepsilon +(1-\varepsilon)\sqrt{\gamma_1},\delta)$-passive attack if $\mathcal{D}_\mathcal{A}$ is measured by cosine distance.
\end{theorem}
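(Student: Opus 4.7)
The plan is to treat DGP as an additional perturbation inserted between the true gradient $\nabla\mathbf{W}$ and the attacker's output $\mathbf{g}^{*}$, and then propagate the original $(\varepsilon,\delta)$-attack guarantee through this perturbation for each distance function separately. Concretely, under DGP the attacker only ever sees $\mathbf{g}=\mathrm{DGP}(k_1,k_2,\nabla\mathbf{W})$, so by hypothesis $\mathbb{P}\!\left(\mathbb{E}\,\mathcal{D}_{\mathcal{A}}(\mathbf{g},\mathbf{g}^{*})\le\varepsilon\right)\ge 1-\delta$; the task is to replace $\mathbf{g}$ with $\nabla\mathbf{W}$ inside the distance while controlling the extra slack via Assumption~\ref{Assumption:A3}, which yields $\|\nabla\mathbf{W}-\mathbf{g}\|_{2}\le\sqrt{\gamma_{1}}\,\|\nabla\mathbf{W}\|_{2}$.

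For the Euclidean branch, I would apply the triangle inequality $\|\nabla\mathbf{W}-\mathbf{g}^{*}\|_{2}\le\|\nabla\mathbf{W}-\mathbf{g}\|_{2}+\|\mathbf{g}-\mathbf{g}^{*}\|_{2}$, substitute the Assumption~\ref{Assumption:A3} bound into the first term, take expectations using linearity (the DGP-induced term is deterministic given $\nabla\mathbf{W}$), and then push the $\ge 1-\delta$ probability through the resulting inequality. This directly gives the $(\varepsilon+\sqrt{\gamma_{1}}\|\nabla\mathbf{W}\|_{2},\delta)$ claim.

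For the cosine branch I would split $\nabla\mathbf{W}\cdot\mathbf{g}^{*}=\mathbf{g}\cdot\mathbf{g}^{*}+(\nabla\mathbf{W}-\mathbf{g})\cdot\mathbf{g}^{*}$ and divide by $\|\nabla\mathbf{W}\|_{2}\|\mathbf{g}^{*}\|_{2}$, giving
\begin{equation*}
\cos\theta_{\nabla\mathbf{W},\mathbf{g}^{*}}=\frac{\|\mathbf{g}\|_{2}}{\|\nabla\mathbf{W}\|_{2}}\cos\theta_{\mathbf{g},\mathbf{g}^{*}}+\frac{(\nabla\mathbf{W}-\mathbf{g})\cdot\mathbf{g}^{*}}{\|\nabla\mathbf{W}\|_{2}\|\mathbf{g}^{*}\|_{2}}.
\end{equation*}
Because $\mathbf{g}$ is obtained by zeroing out a subset of coordinates of $\nabla\mathbf{W}$, the residual $\nabla\mathbf{W}-\mathbf{g}$ is supported on the pruned indices and is orthogonal to $\mathbf{g}$ itself; combining this with the reverse triangle inequality $\|\mathbf{g}\|_{2}\ge\|\nabla\mathbf{W}\|_{2}-\|\nabla\mathbf{W}-\mathbf{g}\|_{2}\ge(1-\sqrt{\gamma_{1}})\|\nabla\mathbf{W}\|_{2}$ bounds the first term from below by $(1-\sqrt{\gamma_{1}})(1-\varepsilon)$. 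Invoking the original $(\varepsilon,\delta)$ guarantee gives $\cos\theta_{\mathbf{g},\mathbf{g}^{*}}\ge 1-\varepsilon$ with probability $\ge 1-\delta$, and rearranging $1-\cos\theta_{\nabla\mathbf{W},\mathbf{g}^{*}}\le 1-(1-\sqrt{\gamma_{1}})(1-\varepsilon)=\varepsilon+(1-\varepsilon)\sqrt{\gamma_{1}}$ yields the stated bound.

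The main obstacle is the cosine branch, because cosine distance does not satisfy a clean triangle inequality, so one cannot just plug Assumption~\ref{Assumption:A3} into a one-line chain as in the Euclidean case. The cross term $(\nabla\mathbf{W}-\mathbf{g})\cdot\mathbf{g}^{*}$ is the delicate piece: handling it cleanly requires leveraging the coordinate-disjoint structure of DGP (residual supported on pruned coordinates, $\mathbf{g}^{*}$ effectively fit to the retained coordinates) rather than a generic Cauchy--Schwarz bound, which would give a strictly weaker constant than $(1-\varepsilon)\sqrt{\gamma_{1}}$. Given the paper's own acknowledgement that a more rigorous analysis is deferred to follow-up work, I would present this step as an approximation justified by the sparsity pattern of DGP, and note that the Euclidean half of the theorem is fully rigorous via triangle inequality plus Assumption~\ref{Assumption:A3}.
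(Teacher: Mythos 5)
Your proposal follows essentially the same route as the paper's own proof: the Euclidean branch is the identical triangle-inequality decomposition through $\mathrm{DGP}(\nabla\mathbf{W})$ plus Assumption~1, and the cosine branch uses the same inner-product split, the same claim that the cross term $(\nabla\mathbf{W}-\mathbf{g})\cdot\mathbf{g}^{*}$ vanishes by the disjoint support of the pruned residual, and the same reverse-triangle bound $\|\mathbf{g}\|_{2}\ge(1-\sqrt{\gamma_{1}})\|\nabla\mathbf{W}\|_{2}$ to reach $(1-\sqrt{\gamma_{1}})(1-\varepsilon)$. You correctly identify the cross-term step as the delicate (and only loosely justified) point, which is exactly where the paper's proof also relies on an informal support argument.
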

\begin{proof}
If $\mathcal{D}_\mathcal{A}$ is measured by Euclidean distance, by the definition of $(\varepsilon,\delta)$-attack, the attacker can achieve the following estimation 
\begin{align*}
&\mathbb{E}|| \mathbf{g}^*-\nabla \textbf{W}||_2\le\varepsilon,
%\label{Equ:DLG}
\end{align*}
where $ \mathbf{g}^*$ is the attacker's optimized gradients of the ground-truth gradients $\textbf{W}$.
When \textnormal{\textnormal{DGP}} is used, from the bi-Lipschitz assumption (i.e., Assumption 1), we know
\begin{align}
    &  || \textnormal{\textnormal{DGP}}(\nabla \textbf{W})-\nabla \textbf{W}||_2 
    {\le} \sqrt{\gamma_1 } ||\nabla \textbf{W}||_2.
\label{Eq:fromassumption1}
\end{align}
%where the last part of Eq.~(\ref{Eq:fromassumption1}) is based on Assumption 1. 
%
Then, when central aggregation is protected by \textnormal{DGP}, the attacker's optimized gradients is based on the observation of $\textnormal{\textnormal{DGP}}(\nabla \textbf{W})$ and this modified observation will degrade the attacker's capability in optimizing $\nabla \textbf{W}$ because
\begin{IEEEeqnarray}{rCL}
 &\mathbb{E}&|| \mathbf{g}^*-\nabla \textbf{W}||_2 \nonumber  
\\ & = &\mathbb{E}|| \mathbf{g}^*-\textnormal{\textnormal{DGP}}(\nabla \textbf{W})+\textnormal{\textnormal{DGP}}(\nabla \textbf{W})-\nabla \textbf{W}||_2 \nonumber \\
 %&\overset{(\ref{Equ:DLG})}
 &{\le }& \varepsilon +||\textnormal{\textnormal{DGP}}(\nabla \textbf{W})-\nabla \textbf{W}||_2 \nonumber \\
 & \le&  \varepsilon +\sqrt{\gamma_1 } ||\nabla \textbf{W}||_2. \nonumber
\end{IEEEeqnarray}
Hence, the first part of this theorem is true.

Similarly, when $\mathcal{D}_\mathcal{A}$ is measured by cosine distance, the definition of $(\varepsilon,\delta)$-attack reveals
\begin{align*}
    \mathbb{E}\left[ 1- \frac{< \mathbf{g}^*, \nabla \textbf{W} >} 
    {|| \mathbf{g}^*||_2||, \nabla \textbf{W}||_2} \right]
    \le \varepsilon.
%\label{Equ:IVG}
\end{align*}
{Then, we can obtain}

\begin{IEEEeqnarray}{rCL}
&\mathbb{E}&\left [ \frac{< \mathbf{g}^*,\nabla \textbf{W}> }{|| \mathbf{g}^*||_2||\nabla \textbf{W}||_2} \right ] \nonumber \\
&=&\mathbb{E}\left [ \frac{< \mathbf{g}^*,\nabla \textbf{W}-\textnormal{DGP}(\nabla \textbf{W})+\textnormal{DGP}(\nabla \textbf{W})> }{|| \mathbf{g}^*||_2||\nabla \textbf{W}||_2} \right ]\nonumber \\
&\overset{(c)}{=}& \mathbb{E}\left [ \frac{< \mathbf{g}^*,\textnormal{DGP}(\nabla \textbf{W})> }{|| \mathbf{g}^*||_2||\nabla \textbf{W}||_2} \right ] \nonumber \\
&=& \mathbb{E}\left [ \frac{< \mathbf{g}^*,\textnormal{DGP}(\nabla \textbf{W})> }{|| \mathbf{g}^*||_2||\textnormal{DGP}(\nabla \textbf{W})||_2}\frac{||\textnormal{DGP}(\nabla \textbf{W})||_2}{||\nabla \textbf{W}||_2}  \right ] \nonumber \\
&\overset{(d)}{\ge}&  (1-\sqrt{\gamma_1})\mathbb{E}\left [ \frac{< \mathbf{g}^*,\textnormal{DGP}(\nabla \textbf{W})> }{|| \mathbf{g}^*||_2||\textnormal{DGP}(\nabla \textbf{W})||_2}\right ] \nonumber \\
&\ge& (1-\sqrt{\gamma_1})(1-\varepsilon)=1+\varepsilon\sqrt{\gamma_1}-\sqrt{\gamma_1}-\varepsilon.
\label{Eq:cosineEstimation}
\end{IEEEeqnarray}\\
where (c) is based on the fact that the all non-zero elements of $(\nabla \textbf{W}-\textnormal{DGP}(\nabla \textbf{W}))$ are pruned in DGP so $\mathbb{E}( \mathbf{g}^*, (\nabla \textbf{W}-\textnormal{DGP}(\nabla \textbf{W}))) = 0$, and (d) is the direct application of Eq.~(\ref{Eq:fromassumption1}). 
Based on Eq.~(\ref{Eq:cosineEstimation}), it is easy to conclude
\begin{IEEEeqnarray}{rCL}
&\mathbb{E}\left [1- \frac{< \mathbf{g}^*,\nabla \textbf{W}> }{|| \mathbf{g}^*||_2||\nabla \textbf{W}||_2} \right ] \le \varepsilon +(1-\varepsilon)\sqrt{\gamma_1},\nonumber
\end{IEEEeqnarray}
which completes the proof. 
\end{proof}

\begin{lemma}
\label{lemma:L_EF}
Let $\mathbf{e}_t={\sum_{i=1}^{N} \mathbf{e}_{t,i}}/{N}$ be the averaged accumulated error among all users at iteration $t$, the expectation of the norm of $\mathbf{e}_t$ is bounded, i.e.,
\begin{equation*}
\mathbb{E}|| \mathbf{e}_{t} ||^2_2\le  \frac{3\gamma_1(2+\gamma_1)}{2(1-\gamma_1)^2}G^2.
\end{equation*}
\end{lemma}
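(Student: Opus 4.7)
\textbf{Proof proposal for Lemma~\ref{lemma:L_EF}.} The plan is to first bound the per-user error $\|\mathbf{e}_{t,i}\|_2^2$ by a geometric recursion, and then use convexity of the squared norm (Jensen) to pass to the averaged quantity $\mathbf{e}_t$. The starting point is the defining identity of the error feedback loop in Alg.~\ref{sls}:
\begin{equation*}
\mathbf{e}_{t+1,i} \;=\; \mathbf{P}_{t,i} - \textnormal{DGP}(k_1,k_2,\mathbf{P}_{t,i}), \qquad \mathbf{P}_{t,i} = \nabla\mathbf{W}_{t,i} + \mathbf{e}_{t,i}.
\end{equation*}
Applying Assumption~\ref{Assumption:A3} directly to $\mathbf{P}_{t,i}$ yields the contraction
$\|\mathbf{e}_{t+1,i}\|_2^2 \le \gamma_1 \|\nabla\mathbf{W}_{t,i}+\mathbf{e}_{t,i}\|_2^2.$
This is the only place where the pruning structure enters.

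Next, I would use the standard algebraic inequality $\|a+b\|_2^2 \le (1+\beta)\|a\|_2^2 + (1+\tfrac{1}{\beta})\|b\|_2^2$ (Young's inequality) with $a=\mathbf{e}_{t,i}$ and $b=\nabla\mathbf{W}_{t,i}$, and pick $\beta$ so that the coefficient $\gamma_1(1+\beta)$ in front of $\|\mathbf{e}_{t,i}\|_2^2$ is strictly less than one. A concrete convenient choice is $\beta = (1-\gamma_1)/(2\gamma_1)$, which gives $\gamma_1(1+\beta)=(1+\gamma_1)/2<1$ and $\gamma_1(1+1/\beta) = \gamma_1(1+\gamma_1)/(1-\gamma_1)$. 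Substituting Assumption~\ref{Assumption:A2} ($\|\nabla\mathbf{W}_{t,i}\|_2^2 \le G^2$) yields a one-step recursion of the form
\begin{equation*}
\mathbb{E}\|\mathbf{e}_{t+1,i}\|_2^2 \;\le\; \tfrac{1+\gamma_1}{2}\,\mathbb{E}\|\mathbf{e}_{t,i}\|_2^2 \;+\; C(\gamma_1)\,G^2,
\end{equation*}
which, together with $\mathbf{e}_{0,i}=0$, can be unrolled into a geometric series whose sum is bounded by $\frac{2C(\gamma_1)}{1-\gamma_1}G^2$, matching (up to slack) the form $\frac{3\gamma_1(2+\gamma_1)}{2(1-\gamma_1)^2}G^2$ claimed in the lemma.

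Finally, to transfer the bound to $\mathbf{e}_t = \tfrac{1}{N}\sum_{i=1}^N \mathbf{e}_{t,i}$, I would invoke Jensen's inequality (equivalently, convexity of $\|\cdot\|_2^2$):
\begin{equation*}
\mathbb{E}\|\mathbf{e}_t\|_2^2 \;=\; \mathbb{E}\Bigl\|\tfrac{1}{N}\sum_{i=1}^N \mathbf{e}_{t,i}\Bigr\|_2^2 \;\le\; \tfrac{1}{N}\sum_{i=1}^N \mathbb{E}\|\mathbf{e}_{t,i}\|_2^2,
\end{equation*}
so the per-user bound automatically transfers to the averaged error, independent of $N$.

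The main obstacle I anticipate is \emph{pinning down the precise constant} $\tfrac{3(2+\gamma_1)}{2(1-\gamma_1)^2}$: different choices of the Young parameter $\beta$ and different ways of splitting the cross term lead to different rational functions of $\gamma_1$, and only one specific algebraic accounting (presumably the one the authors use) reproduces the stated constant. The overall rate $\mathcal{O}\!\left(\gamma_1/(1-\gamma_1)^2\right)G^2$, however, is robust to this choice and is essentially forced by the geometric contraction $\gamma_1<1$ in Assumption~\ref{Assumption:A3} combined with the uniform gradient bound $G^2$.
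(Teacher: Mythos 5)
Your proposal is correct and follows essentially the same route as the paper: apply Assumption~\ref{Assumption:A3} to the error-feedback recursion, split the resulting term with Young's inequality to get a geometric contraction, unroll using $\mathbf{e}_{0,i}=0$ and the bound $G^2$, and pass to the average via Jensen. The only difference is the Young parameter: the paper takes $1+a=\tfrac{2+\gamma_1}{3\gamma_1}$ (contraction factor $\tfrac{2+\gamma_1}{3}$), which reproduces the stated constant exactly, whereas your choice $\beta=\tfrac{1-\gamma_1}{2\gamma_1}$ yields $\tfrac{2\gamma_1(1+\gamma_1)}{(1-\gamma_1)^2}G^2$, which is in fact slightly tighter and hence still implies the lemma.
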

\begin{proof}
To use the theoretical tools of SGD, we set up the following dummy matrix $\textbf{V}$: 
\begin{equation*}
\textbf{V}_{t+1}=\textbf{V}_t-\eta \nabla \textbf{W}_t.
\end{equation*}
Since $\textbf{W}^0=\textbf{V}^0$, $\textbf{e}^0=0$, it is easy to find
\begin{equation}
\textbf{V}_{t}-\textbf{W}_t=\eta \textbf{e}_t.
\label{equation:e}
\end{equation}
Under Assumption~1, we have
\begin{align}
\label{equation:A31}
&||\textbf{X}-\textnormal{DGP}(\textbf{X})||_2^2 \le \gamma_1 ||\textbf{X}||_2^2
% %\label{equation:A32}
% &||\textbf{X}-\textnormal{DGP}(\textbf{X})||_2^2 \ge \gamma_1 ||\textbf{X}||_2^2. \nonumber
\end{align}
Under Assumption~3, we have
\begin{align}
\label{equation:A21}
\mathbb{E}||\nabla \textbf{W}_{t,i}||_2^2 \le G^2
\end{align}
\begin{align}
\label{equation:A22}
\mathbb{E}||\nabla \textbf{W}_t||_2^2\le G^2+\sigma^2.
\end{align}
By definition of $\textbf{e}_{t}$, we know 
\begin{IEEEeqnarray}{rCL}
||\textbf{e}_{t}||_2^2 &\le& \frac{\sum_{i=1}^{N} ||\textbf{e}_{t,i}||_2^2}{N}, \nonumber  
\end{IEEEeqnarray}
and the $||\textbf{e}_{t,i}||_2^2$ is also bounded because
\begin{IEEEeqnarray}{rCL}
||\textbf{e}_{t,i}||_2^2 
&=&||\nabla \textbf{W}_{t-1,i}+\textbf{e}_{t-1,i}-\textnormal{DGP}(\nabla \textbf{W}_{t-1,i}+\textbf{e}_{t-1,i})||_2^2\nonumber\\
&\overset{(\ref{equation:A31})}{\le}&  \gamma_1 ||\nabla \textbf{W}_{t-1,i}+\textbf{e}_{t-1,i}||_2^2 \nonumber\\ 
&\overset{(e)}{\le}& \gamma_1\left( (1+\frac{1}{a} )||\nabla \textbf{W}_{t-1,i}||_2^2+(1+a)||\textbf{e}_{t-1,i}||_2^2 \right). \nonumber
\end{IEEEeqnarray}
where (e) is based on the variant of Young’s inequality  $||x+y||_2^2\le(1+a)||x||_2^2+(1+\frac{1}{a} )||y||_2^2$. 
Set $1+a=\frac{2+\gamma_1 }{3\gamma_1 }$, it is concluded that
\begin{align}
&\mathbb{E}||\textbf{e}_{t,i}||_2^2 
\overset{(\ref{equation:A21})}{\le } \frac{3\gamma_1(2+\gamma_1)}{2(1-\gamma_1)^2}G^2, \\ 
\label{Inequality:EF}
&\mathbb{E}||\textbf{e}_{t}||_2^2\le  \frac{3\gamma_1(2+\gamma_1)}{2(1-\gamma_1)^2}G^2.
\end{align}
\end{proof}    
\begin{theorem}
The averaged norm of the full gradient $\nabla l(\mathbf{W }_t)$ derived from centralized training is correlated with the our algorithm as follows: 
\begin{align}
\frac{\sum_{t=0}^{T-1} \mathbb{E}||\nabla l(\mathbf{W }_t)||_2^2}{T} &\le 
4\frac{K^0-l^*}{\eta T}+4\eta^2K^2\frac{3\gamma_1(2+\gamma_1)}{2(1-\gamma_1)^2}G^2\nonumber\\&+2K\eta(G^2+\sigma ^2),
\end{align}
% \frac{3\gamma_1(2+\gamma_1)}{2(1-\gamma_1)^2}G^2
% \frac{\gamma_1}{2}(\frac{2+\gamma_1}{1-\gamma_1})^2 G^2
where $l^0$ is the initialization of the objective $l$, and $\eta$ is the learning rate.
\label{theorem:T2}
\end{theorem}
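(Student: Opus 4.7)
The plan is to reuse the virtual iterate $\mathbf{V}_t$ introduced in the proof of Lemma~\ref{lemma:L_EF}, which evolves as the noiseless update $\mathbf{V}_{t+1}=\mathbf{V}_t-\eta\nabla\mathbf{W}_t$ and satisfies $\mathbf{V}_t-\mathbf{W}_t=\eta\mathbf{e}_t$. Because $\mathbf{V}_t$ is driven by the true aggregated gradient, a classical SGD-style descent argument applies cleanly to $l(\mathbf{V}_t)$, and the accumulated pruning error enters only through the displacement $\mathbf{V}_t-\mathbf{W}_t$, whose squared norm is already controlled by Lemma~\ref{lemma:L_EF}. The task then reduces to translating a bound on $l(\mathbf{V}_t)$ into one on $\nabla l(\mathbf{W}_t)$ via smoothness.

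Concretely, I would apply the $K$-smoothness part of Assumption~\ref{Assumption:A1} at the pair $(\mathbf{V}_{t+1},\mathbf{V}_t)$ to obtain $l(\mathbf{V}_{t+1})\le l(\mathbf{V}_t)-\eta\langle\nabla l(\mathbf{V}_t),\nabla\mathbf{W}_t\rangle+\frac{K\eta^2}{2}||\nabla\mathbf{W}_t||_2^2$. Taking expectation and using the unbiasedness $\nabla\mathbf{W}_t=\mathbb{E}[\nabla\mathbf{W}_{t,i}]$ from Assumption~\ref{Assumption:A2}, the inner product becomes $\mathbb{E}\langle\nabla l(\mathbf{V}_t),\nabla l(\mathbf{W}_t)\rangle$. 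I would then decompose $\nabla l(\mathbf{V}_t)=\nabla l(\mathbf{W}_t)+(\nabla l(\mathbf{V}_t)-\nabla l(\mathbf{W}_t))$ and apply Young's inequality to the cross term, reducing the analysis to controlling the drift $||\nabla l(\mathbf{V}_t)-\nabla l(\mathbf{W}_t)||_2^2$; by a second application of smoothness this drift is at most $K^2||\mathbf{V}_t-\mathbf{W}_t||_2^2=K^2\eta^2||\mathbf{e}_t||_2^2$, whose expectation Lemma~\ref{lemma:L_EF} bounds by $K^2\eta^2\cdot\frac{3\gamma_1(2+\gamma_1)}{2(1-\gamma_1)^2}G^2$.

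Next I would bound $\mathbb{E}||\nabla\mathbf{W}_t||_2^2\le G^2+\sigma^2$ by combining the per-user norm bound $||\nabla\mathbf{W}_{t,i}||_2^2\le G^2$ with the variance bound $\mathbb{E}||\nabla\mathbf{W}_{t,i}-\nabla\mathbf{W}_t||_2^2\le\sigma^2$ from Assumption~\ref{Assumption:A2} (by writing $\nabla\mathbf{W}_{t,i}=\nabla\mathbf{W}_t+(\nabla\mathbf{W}_{t,i}-\nabla\mathbf{W}_t)$ and taking expectation). Choosing the Young parameter so that a positive coefficient of order $\eta$ survives in front of $\mathbb{E}||\nabla l(\mathbf{W}_t)||_2^2$ on the left-hand side yields a per-step recursion of the form $(\text{coeff})\cdot\mathbb{E}||\nabla l(\mathbf{W}_t)||_2^2\le\mathbb{E}[l(\mathbf{V}_t)-l(\mathbf{V}_{t+1})]+\frac{K\eta^2}{2}(G^2+\sigma^2)+\eta^3K^2\cdot\frac{3\gamma_1(2+\gamma_1)}{2(1-\gamma_1)^2}G^2$. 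Telescoping over $t=0,\ldots,T-1$, using $\mathbf{V}_0=\mathbf{W}_0$ so that $l(\mathbf{V}_0)=l^0$ and $l(\mathbf{V}_T)\ge l^*$, and then dividing through by $T$ times the chosen coefficient produces the three prefactors $4(l^0-l^*)/(\eta T)$, $2K\eta(G^2+\sigma^2)$, and $4\eta^2K^2\cdot\frac{3\gamma_1(2+\gamma_1)}{2(1-\gamma_1)^2}G^2$ appearing in the statement.

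The main obstacle is the coefficient bookkeeping. The exact constants $4$ and $2$ in the theorem pin down the Young parameter, and the $\eta^2K^2$ scaling on the error-feedback term requires that I carefully track the powers of $\eta$ and $K$ through the cascade smoothness descent $\to$ cross-term split $\to$ smoothness drift $\to$ Lemma~\ref{lemma:L_EF}, so that one factor of $\eta$ drops out when dividing by the left-hand coefficient. Once this balancing is correctly handled, the rest of the argument is a routine substitution of Lemma~\ref{lemma:L_EF} together with Assumptions~\ref{Assumption:A3}, \ref{Assumption:A1}, and~\ref{Assumption:A2}.
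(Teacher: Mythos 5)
Your proposal is correct and follows essentially the same route as the paper: the virtual iterate $\mathbf{V}_{t+1}=\mathbf{V}_t-\eta\nabla\mathbf{W}_t$ with $\mathbf{V}_t-\mathbf{W}_t=\eta\mathbf{e}_t$, the $K$-smoothness descent step, the drift bound $\|\nabla l(\mathbf{V}_t)-\nabla l(\mathbf{W}_t)\|_2^2\le K^2\eta^2\|\mathbf{e}_t\|_2^2$ controlled by Lemma~\ref{lemma:L_EF}, the bound $\mathbb{E}\|\nabla\mathbf{W}_t\|_2^2\le G^2+\sigma^2$, and telescoping. The only organizational difference is where the passage from $\nabla l(\mathbf{V}_t)$ to $\nabla l(\mathbf{W}_t)$ happens: the paper uses the polarization identity $\langle x,y\rangle=\tfrac{1}{2}(\|x\|^2+\|y\|^2-\|x-y\|^2)$ to retain $-\tfrac{\eta}{2}\|\nabla l(\mathbf{V}_t)\|_2^2$, telescopes on $\mathbf{V}_t$, and only at the end converts via $\|\nabla l(\mathbf{W}_t)\|_2^2\le 2K^2\|\mathbf{V}_t-\mathbf{W}_t\|_2^2+2\|\nabla l(\mathbf{V}_t)\|_2^2$ (which is exactly where its factors of $4$ and $2$ arise), whereas you extract $-c\eta\|\nabla l(\mathbf{W}_t)\|_2^2$ directly inside the per-step recursion via the cross-term split. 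Your variant is valid and, with the natural choice $c=\tfrac12$, actually yields constants no larger than those stated (prefactors $2$, $1$, $1$ in place of $4$, $2$, $4$), so the claimed inequality follows a fortiori; the only inaccuracy in your write-up is the suggestion that your telescoping reproduces the paper's constants exactly, which it need not and does not.
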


\begin{proof}
Under Assumption~3, we have
\begin{align}
\label{Inequality:I2}
    &||\nabla l(\textbf{V}_t) -\nabla l(\textbf{W}_t)||\le K||\textbf{V}_t-\textbf{W}_t||,
\end{align}
and
\begin{IEEEeqnarray}{rCL}
l(\textbf{V}_{t+1}) &\le& l(\textbf{V}_t)+<\nabla l(\textbf{V}_t),\textbf{V}_{t+1}-\textbf{V}_t>
+\frac{K}{2}||\textbf{V}_{t+1}-\textbf{V}_t||_2^2
\nonumber\\ 
&= & l(\textbf{V}_t)-\eta <\nabla l(\textbf{V}_t),\nabla \textbf{W}_t>+\frac{K\eta ^2}{2}||\nabla \textbf{W}_t||^2_2.
\label{Inequality:I1}
\end{IEEEeqnarray}
Taking expectation on both sides of Eq.~(\ref{Inequality:I1}), we can get
\begin{IEEEeqnarray}{rCL}
&\mathbb{E}&(l(\textbf{V}_{t+1}))
\le \mathbb{E}(l(\textbf{V}_t))-\eta \mathbb{E}{(<\nabla l(\textbf{V}_t), \nabla l(\textbf{W}_t)>)} \nonumber\\&+&\frac{K\eta ^2}{2}\mathbb{E}||\nabla \textbf{W}_t||^2_2  \nonumber \\
&\overset{(f)}{=}& \mathbb{E}(l(\textbf{V}_t))-\frac{\eta}{2} ( \mathbb{E}(||\nabla l(\textbf{V}_t)||_2^2+||\nabla l(\textbf{W}_t)||_2^2) \nonumber\\&+& \frac{\eta}{2}\mathbb{E}||\nabla l(\textbf{V}_t)-\nabla l(\textbf{W}_t)||_2^2
 +\frac{K\eta ^2}{2}\mathbb{E}||\nabla \textbf{\textbf{W}}_t||^2_2  \nonumber \\
&\le& \mathbb{E}(l(\textbf{V}_t))-\frac{\eta}{2}\mathbb{E}(||\nabla l(\textbf{V}_t)||_2^2)  \nonumber+\frac{K\eta ^2}{2}\mathbb{E}||\nabla \textbf{\textbf{W}}_t||^2_2 \\&+&\frac{\eta}{2}\mathbb{E}||\nabla l(\textbf{V}_t)-\nabla l(\textbf{W}_t)||_2^2  \nonumber \\
&\overset{(\ref{Inequality:I2})}{\le}& \mathbb{E}(l(\textbf{V}_t))-\frac{\eta}{2}(\mathbb{E}||\nabla l(\textbf{V}_t)||_2^2)\nonumber+\frac{K\eta ^2}{2}\mathbb{E}||\nabla {\textbf{W}}_t||^2_2\\&+&\frac{K^{2}\eta}{2}\mathbb{E}||\textbf{V}_t-\textbf{W}_t||_2^2 \nonumber \\
&\overset{(\ref{equation:e})}{\le}& \mathbb{E}(l(\textbf{V}_t))-\frac{\eta}{2}(\mathbb{E}||\nabla l(\textbf{V}_t)||_2^2)  +\frac{\eta^3 K^2}{2}\mathbb{E}||\textbf{e}_{t}||_2^2\nonumber\\&+&\frac{K\eta ^2}{2}\mathbb{E}||\nabla {\textbf{W}}_t||^2_2  \nonumber \\
&\overset{(\ref{equation:A22})}{\le}& \mathbb{E}(l(\textbf{V}_t))-\frac{\eta}{2}(\mathbb{E}||\nabla l(\textbf{V}_t)||_2^2)  \nonumber+\frac{\eta^3 K^2}{2}\mathbb{E}||\textbf{e}_{t}||_2^2\nonumber\\&+&\frac{K\eta ^2}{2}(G^2+{\sigma ^2 }),
\end{IEEEeqnarray}
{where (f) is based on the fact $<x, y> = \frac{1}{2}(||x||^2 + ||y||^2 -||x-y||^2)$.} Base on the deduction above, we can further calculate 
\begin{IEEEeqnarray}{rCL}
\frac{\eta}{2}(\mathbb{E}||\nabla l(\textbf{V}_t)||_2^2) 
&\le& \mathbb{E}(l(\textbf{V}_t))- \mathbb{E}(l(\textbf{V}_{t+1}))  +\frac{\eta^3 K^2}{2}\mathbb{E}||\textbf{e}_{t}||_2^2\nonumber\\&+&\frac{K\eta ^2}{2}(G^2+\sigma ^2 ),
\end{IEEEeqnarray}
\begin{IEEEeqnarray}{rCL}
(\frac{\sum_{0}^{T-1} \mathbb{E}||\nabla l(\textbf{V}_t)||_2^2}{T} ) 
&\le& \frac{ 2(l^0- l^*)}{\eta T}   +\eta^2 K^2\mathbb{E}||\textbf{e}_{t}||_2^2\nonumber\\&+&K\eta(G^2+\sigma ^2 )\label{E_V}.
\end{IEEEeqnarray}
According to~Eq.~(\ref{Inequality:I2}), it can be found that
\begin{align}
&||\nabla l(\textbf{W}_t)||\le K||\textbf{V}_t-\textbf{W}_t||+||\nabla l(\textbf{V}_t) ||,\nonumber\\
&||\nabla l(\textbf{W}_t)||_2^2\le 2 K^2||\textbf{V}_t-\textbf{W}_t||_2^2+2||\nabla l(\textbf{V}_t) ||_2^2\label{V_W}.
\end{align}
Combining~Eq.~(\ref{Inequality:EF}),~Eq.~(\ref{E_V}) and ~Eq.~(\ref{V_W}), it is concluded
\begin{IEEEeqnarray}{rCL}
&\mathbb{E}&||\nabla l(\textbf{W}_t)||_2^2
\le \frac{ 4(l^0- l^*)}{\eta T}  +4\eta^2 K^2\mathbb{E}||\textbf{\textbf{e}}_{t}||_2^2\nonumber\\&+&2K\eta(G^2+\sigma ^2) \nonumber \\
&\le& \frac{ 4(l^0- l^*)}{\eta T}+4\eta^2K^2\frac{3\gamma_1(2+\gamma_1)}{2(1-\gamma_1)^2}G^2 +2K\eta(G^2+\sigma ^2 ). \nonumber
\end{IEEEeqnarray}
Set $\eta=\sqrt{\frac{l^0-l^*}{KT(\sigma ^2+G^2 )} }$, we have
\begin{align}
&\frac{\sum_{0}^{T-1} \mathbb{E}||\nabla l(\textbf{W}_t)||_2^2}{T} 
\le 6\sqrt{\frac{K(l^0-l^*)(\sigma ^2+G^2 )}{T} }\ +\mathcal{O}(\frac{1}{T}). \nonumber
\end{align}
Hence, the theorem is true. 
\end{proof} 
\section{Analysis of Assumption}
Since Assumption 2 and Assumption 3 are common assumptions in many works~\cite{EF, marginal}, this section focus on analyzing the feasibility of Assumption 1.

\begin{assumption}
The pruning mechanism $\textnormal{DGP}(k_1, k_2, \cdot)$ is Lipschitz, so the following conditions  hold:
\begin{IEEEeqnarray}{rCl}
&||&\nabla \mathbf{W} -\textnormal{DGP}(k_1, k_2, \nabla \mathbf{W})  ||_2^2 
\nonumber\\
&=& ||\textnormal{DGP}(0, 0, \nabla \mathbf{W})-\textnormal{DGP}(k_1, k_2, \nabla \mathbf{W})  ||_2^2  \le \gamma_1 ||\nabla \mathbf{W}||_2^2,\nonumber
\end{IEEEeqnarray}
% where $\gamma_1$ is a constant determined by $k_1$ and $k_2$, satisfying  $(1-\sqrt{2-2k_2-3k_1} )^2\le\gamma _1 <1$.
where $\gamma_1$ is a constant related to $k_1$ and $k_2$ and satisfies $(1-\sqrt{1-k_1*k_2} )^2<\gamma_1<1$.
\end{assumption}

To simplify the expression, we use $\textnormal{DGP} (\nabla \mathbf{W})$ to denote $\textnormal{DGP} (\nabla \mathbf{W},k_1,k_2)$ and $||\cdot||$ to denote $||\cdot||_2$.~\cite{Top-k} states the following property of top$[l](\nabla \mathbf{W})$ (i.e., retain the top $l$-ratio of $\nabla \mathbf{W}$):
\begin{equation}
||\nabla \mathbf{W} - \textnormal{top}[l](\nabla \mathbf{W})|| \leq \sqrt{1-l} ||\nabla \mathbf{W}||.
\label{topk:equ0}
\end{equation}
According to formula \ref{topk:equ0}, it is easy to obtain formula \ref{topk:equ01}:
\begin{equation}
||\textnormal{top}[l](|\nabla \mathbf{W})|| \ge(1-\sqrt{1-l} )||\nabla \mathbf{W}||.
\label{topk:equ01}
\end{equation}

And easy to find:
\begin{IEEEeqnarray}{rCl}
    &||&\nabla \mathbf{W}-\textnormal{DGP}(k_1,k_2,\nabla \mathbf{W})||_2\nonumber\\&=&||top[k_1](\nabla \mathbf{W})+bottom[k_2](\nabla \mathbf{W})||_2 \nonumber\\&> & ||top[k_1*k_2](\nabla \mathbf{W})||_2\nonumber.
\end{IEEEeqnarray}
Combined with formula \ref{topk:equ01}, it is true that $\gamma_1 > (1-\sqrt{1-k_1*k_2} )^2$.
Moreover, even if the difference is large, i.e., almost all parameters are removed and $\gamma_1$  approaches 1, the assumption still holds.

\section{More experimental results}
This section presents the related experimental setup and more experimental results.

\subsection{Experimental setup}
\textbf{Privacy experiment setup.}~~  About IG and GI, follow \cite{ATS}, we set the number of iterations to 2500, and we set the optimal learning rate to 0.1. For rob attack, we set the attack batchsize=9.
For IG, GI and R-GAP attacks, we set attack bachsize=1, randomly select 20 data points for attack, and calculate the average metrics of the attack results. 
In addition, there are some settings about defenses. We set the pruning rate of Soteria to 80, set the Outpost hyperparameters as $\lambda$= 0.8, $\varphi$=40, $\beta$= 0.1, $\rho$= 80, which are the originally experimental setting. 
In addition, we configured an LPIPS object employing AlexNet as the perceptual model, considering the spatial structure of images for comparison.

% And due to time and device constraints, we use the same ResNet18 ATS strategy under IVG attack and GGL attack. And to achieve a ideal attack effect, we run GGL using BIGGAN on ImageNet and use the cifar dataset on other attacks.

\textbf{Accuracy experiment setup.}~~ 
We train models with batchsize=32. We use SGD optimizer with momentum of 0.9 and set epoch=100. To ensure a good performance of the baseline, we set the following learning rates. For LeNet~(Zhu),  we set the learning rate $\eta$=0.1 if epoch $\leq $ 50, $\eta$=0.01 if epoch $>$50, and $\eta$=0.005 if epoch $>$70. For the rest of the training settings, we set the learning rate $\eta$=0.01 if epoch $\leq$ 70, $\eta$=0.005 if epoch $>$70. Considering the error feedback is designed for improving model accuracy for gradient pruning~\cite{EF}, we also apply error feedback to Top-$k$. 

\subsection{Efficiency Evaluation}
\textbf{Communication cost.} We measure the communication cost of the defenses for all trained models and list the average result of one epoch in Tab.~\ref{tab:comm}. DP and ATS do not affect the number of transmitted parameters, their results are the same as the baseline. Clearly, DGP and Top-$k$ save about $40\%$ bandwidth when comparing to the baseline. Outpost perturbs the gradient parameters based on Top-k pruning, so the result is consistent with Top-$k$.
Note that this advantage is free since gradient pruning incurs a negligible computation burden (detailed evaluation is presented in the Tab.~\ref{tab:comp}).
\begin{table}[htbp]
  \centering

  \scalebox{0.73}{
  % Table generated by Excel2LaTeX from sheet 'Sheet1'
     \begin{tabular}{lrrrrrr}
    \toprule
           Model & \multicolumn{1}{l}{Baseline} & \multicolumn{1}{l}{Soteria} & \multicolumn{1}{l}{Precode} & \multicolumn{1}{l}{Top-$k$} & \multicolumn{1}{l}{DGP} & \multicolumn{1}{l}{ADGP} \\
    \midrule
    LeNet~(Zhu) & 0.121  & 0.098  & 4.631  & 0.074  & 0.074  & \textbf{0.038}  \\
    VGG11 & 70.428  & 70.413  & 73.436  & 43.137  & 43.137  & \textbf{22.449}  \\
    ResNet18 & 85.251  & 85.235  & 88.258  & 52.216  & 52.216  & \textbf{27.174}  \\
    CNN6  & 1.177  & 0.959  & 19.955  & 0.721  & 0.721  & \textbf{0.375}  \\
    \bottomrule
    \end{tabular}%
    }
      \caption{Average overall comm. cost in one epoch (MB).}
  \label{tab:comm}%

\end{table}%

\textbf{Computation cost.} Tab.~\ref{tab:comp} shows the computation cost comparison of gradient parameter searching for one epoch.
Although the average computation cost of ADGP is slightly higher than DGP because ADGP needs to load the binary matrix $\mathcal{I} $, this computation cost is trivial considering the reduced communication cost. And our method is obviously better than Soteria, because Soteria requires a lot of computation on gradients, which leads to expensive computation cost.
\begin{table}[htbp]
    \vspace{-0.5em}
  \centering

 \scalebox{0.6}{
        \begin{tabular}{lrrrr}
    \toprule
          & \multicolumn{1}{l}{Soteria} & \multicolumn{1}{l}{Top-$k$} & \multicolumn{1}{l}{DGP} & \multicolumn{1}{l}{ADGP} \\
    \midrule
    LeNet~(Zhu) & 52.460  & 0.113  & 0.146  & 0.167  \\
    VGG11 & 331.379  & 0.419  & 0.764  & 0.842  \\
    ResNet18 & 862.567  & 0.967  & 1.803  & 2.587  \\
    CNN6  & 291.419  & 0.100  & 0.237  & 0.256  \\
    \bottomrule
    \end{tabular}%
    }
    \caption{Average comp. cost in one epoch (s)}
        \vspace{-1em}
  \label{tab:comp}%
\end{table}%

\subsection{Privacy evaluation with more models.}
In this section, We evaluate privacy with multiple models and datasets. We choose the state-of-the-art active attack Rob and the state-of-the-art passive attack IG for privacy evaluation. As shown in Tab. \ref{tab:privacy_cifar10} and Tab. \ref{tab:privacy_cifar100}, DGP can play an effective privacy protection.

\begin{table*}[htbp]
  \centering
    \scalebox{0.8}{
    \begin{tabular}{|c|l|rr|rr|rr|rr|}
    \toprule
          &       & \multicolumn{2}{c|}{CNN6} & \multicolumn{2}{c|}{LeNet (Zhu)} & \multicolumn{2}{c|}{ResNet18} & \multicolumn{2}{c|}{VGG11} \\
    \midrule
    \multicolumn{1}{|l|}{Attack} & Metric  & \multicolumn{1}{l}{Baseline} & \multicolumn{1}{l|}{DGP} & \multicolumn{1}{l}{Baseline} & \multicolumn{1}{l|}{DGP} & \multicolumn{1}{l}{Baseline} & \multicolumn{1}{l|}{DGP} & \multicolumn{1}{l}{Baseline} & \multicolumn{1}{l|}{DGP} \\
    \midrule
    \multirow{2}[2]{*}{Rob attack} & LPIPS & 0.0216  & \textbf{0.4421 } & 0.0226  & \textbf{0.3031 } & 0.0231  & \textbf{0.5273 } & 0.0272  & \textbf{0.5079 } \\
          & SSIM  & 0.9273  & \textbf{0.1335 } & 0.9293  & \textbf{0.2364 } & 0.9328  & \textbf{0.0511 } & 0.9328  & \textbf{0.0464 } \\
    \midrule
    \multirow{2}[2]{*}{IG attack} & LPIPS & 0.0308  & \textbf{0.2349 } & 0.0788  & \textbf{0.2537 } & 0.0028  & \textbf{0.3163 } & 0.0303  & \textbf{0.2845 } \\
          & SSIM  & 0.7735  & \textbf{0.4375 } & 0.6745  & \textbf{0.3785 } & 0.9539  & \textbf{0.2866 } & 0.8084  & \textbf{0.3269 } \\
    \bottomrule
    \end{tabular}%
}
 \caption{Privacy Evaluation on CIFAR10.}
  \label{tab:privacy_cifar10}%
\end{table*}%
\begin{table*}[htbp]
  \centering
    \scalebox{0.8}{
    \begin{tabular}{|c|l|rr|rr|rr|rr|}
    \toprule
          &       & \multicolumn{2}{c|}{CNN6} & \multicolumn{2}{c|}{LeNet (Zhu)} & \multicolumn{2}{c|}{ResNet18} & \multicolumn{2}{c|}{VGG11} \\
    \midrule
    \multicolumn{1}{|l|}{Attack} & Metric  & \multicolumn{1}{l}{Baseline} & \multicolumn{1}{l|}{DGP} & \multicolumn{1}{l}{Baseline} & \multicolumn{1}{l|}{DGP} & \multicolumn{1}{l}{Baseline} & \multicolumn{1}{l|}{DGP} & \multicolumn{1}{l}{Baseline} & \multicolumn{1}{l|}{DGP} \\
    \midrule
    \multirow{2}[2]{*}{Rob attack} & LPIPS & 0.0212  & \textbf{0.4726 } & 0.0322  & \textbf{0.3782 } & 0.0297  & \textbf{0.4643 } & 0.0248  & \textbf{0.4677 } \\
          & SSIM  & 0.9298  & \textbf{0.1177 } & 0.9273  & \textbf{0.2257 } & 0.9322  & \textbf{0.1474 } & 0.9157  & \textbf{0.1209 } \\
    \midrule
    \multirow{2}[2]{*}{IG attack} & LPIPS & 0.0521  & \textbf{0.2996 } & 0.1042  & \textbf{0.3163 } & 0.0021  & \textbf{0.3747 } & 0.0297  & \textbf{0.3237 } \\
          & SSIM  & 0.7551  & \textbf{0.3792 } & 0.6849  & \textbf{0.3781 } & 0.9522  & \textbf{0.2550 } & 0.8057  & \textbf{0.2991 } \\
    \bottomrule
    \end{tabular}%
}
 \caption{Privacy Evaluation on CIFAR100.}
  \label{tab:privacy_cifar100}%
\end{table*}%

% Table generated by Excel2LaTeX from sheet 'Sheet1'
% \begin{table}[htbp]
%   \centering
%   \caption{Add caption}
%     \begin{tabular}{|c|l|rr|rr|rr|rr|}
%     \toprule
%     \multicolumn{1}{r}{} &       & \multicolumn{2}{c|}{CNN6} & \multicolumn{2}{c|}{LeNet （Zhu)} & \multicolumn{2}{c|}{ResNet18} & \multicolumn{2}{c|}{VGG11} \\
%     \midrule
%     \multicolumn{1}{|l|}{Attack} & Metric  & \multicolumn{1}{l}{baseline} & \multicolumn{1}{l|}{sparse} & \multicolumn{1}{l}{baseline} & \multicolumn{1}{l|}{sparse} & \multicolumn{1}{l}{baseline} & \multicolumn{1}{l|}{sparse} & \multicolumn{1}{l}{baseline} & \multicolumn{1}{l|}{sparse} \\
%     \midrule
%     \multirow{2}[2]{*}{Rob attack} & LPIPS & 0.021583 & \textbf{0.442061} & 0.022614 & \textbf{0.303079} & 0.02308 & \textbf{0.52732} & 0.027166 & \textbf{0.507918} \\
%           & SSIM  & 0.927267 & \textbf{0.133461} & 0.929271 & \textbf{0.236403} & 0.93275 & \textbf{0.05106} & 0.932798 & \textbf{0.046425} \\
%     \midrule
%     \multirow{2}[2]{*}{IG attack} & LPIPS & 0.03076 & \textbf{0.234911} & 0.078836 & \textbf{0.253678} & 0.00282 & \textbf{0.31632} & 0.030342 & \textbf{0.284514} \\
%           & SSIM  & 0.773492 & \textbf{0.437514} & 0.674513 & \textbf{0.378531} & 0.95391 & \textbf{0.28664} & 0.808434 & \textbf{0.326926} \\
%     \bottomrule
%     \end{tabular}%
%   \label{tab:addlabel}%
% \end{table}%

% Table generated by Excel2LaTeX from sheet 'Sheet1'

\subsection{Defenses under attacks with generative GIAs.}
For a comprehensive privacy evaluation, we assess existing defenses against GAN-based generative GIAs. We select GGL, the state-of-the-art generative GIA, and maintain its original strongest configuration, utilizing ResNet18 on ImageNet. 
Fig.~\ref{fig:GGL} shows that DGP has a better visualization effect. This is because {pruning Top-$k_1$ gradient elements 
(in DGP) will} confuse GGL's inference of some data labels, making the GAN-generated relevant data differ from the original data significantly.
\begin{figure}[t!]
    \centering
    \includegraphics[width=0.45\textwidth]{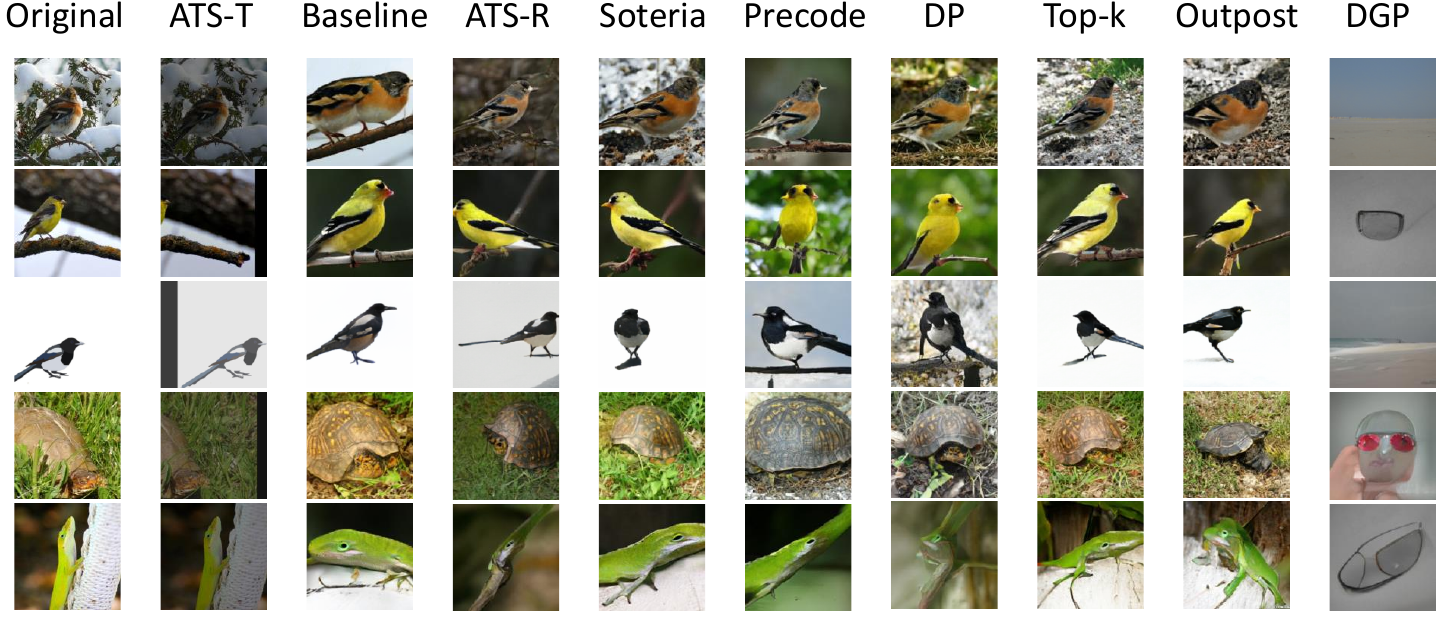}
        \centering
\caption{Reconstruction data visualization under GGL attack on ImageNet.}
  \label{fig:GGL}
\end{figure}
% \begin{table}[htbp]
%   \centering
 
%    \scalebox{0.47}{
%     \begin{tabular}{lrrrrrrrrr}
%     \toprule
%     Metric & \multicolumn{1}{l}{Baseline} & \multicolumn{1}{l}{ATS-R} & \multicolumn{1}{l}{ATS-T} & \multicolumn{1}{l}{Soteria} & \multicolumn{1}{l}{Precode} & \multicolumn{1}{l}{DP} & \multicolumn{1}{l}{Top-k} & \multicolumn{1}{l}{Outpost} & \multicolumn{1}{l}{DGP} \\
%     \midrule
%     LPIPS & 0.581015 & 0.585655 & 0.613341 & 0.570499 & 0.636409 & 0.626687 & 0.554835 & 0.636409 & \textbf{0.794588} \\
%     SSIM  & 0.313271 & 0.299581 & 0.262496 & 0.325816 & 0.27427 & 0.290153 & 0.318816 & 0.27427 & \textbf{0.176041} \\
%     \bottomrule
%     \end{tabular}%
  
%   }
%     \caption{Evaluation of the defense performance under GGL.}
%     \label{tab:ggl}%
% \end{table}%

\subsection{Evaluation of the trade-off between privacy and accuracy for high-pruning rate Top-$k$}
Set Top-$k$ pruning rate for 95\%, DGP pruning rate for 80\%,  we compare their privacy-accuracy trade-offs.
% By using Top-$k$ with 95\% pruning rate and DGP with 80\% pruning rate, we compare the privacy-accuracy trade-offs of Top-$k$ and DGP.
As shown in Fig.~\ref{fig:topkDGP}, Fig.~\ref{fig:rob_topkDGP} and Tab.~\ref{tab:topkDGP}, with similar privacy protection, Top-$k$ is more likely to cause model performance degradation.
\begin{figure}[t!]
    \centering
    \includegraphics[width=0.4\textwidth]{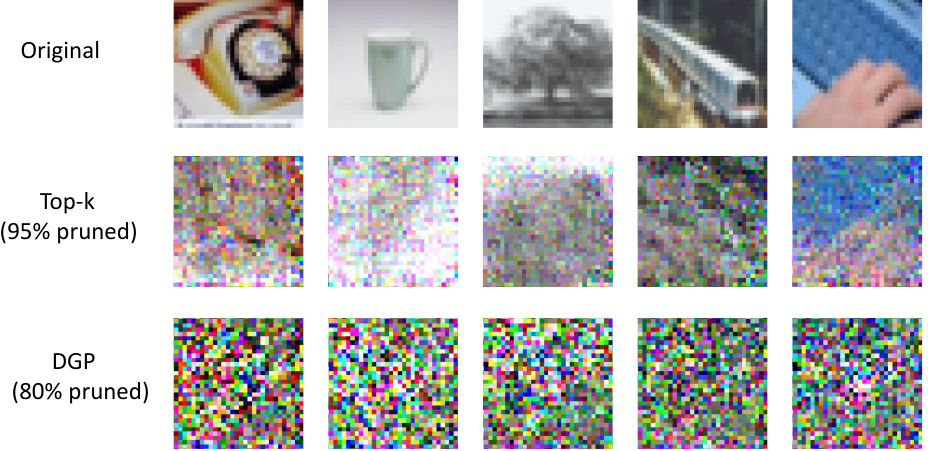}
        \centering
\caption{Reconstruction data visualization under IG attack on CIFAR100.}
  \label{fig:topkDGP}
\end{figure}
\begin{figure}[t!]
    \centering
    \includegraphics[width=0.45\textwidth]{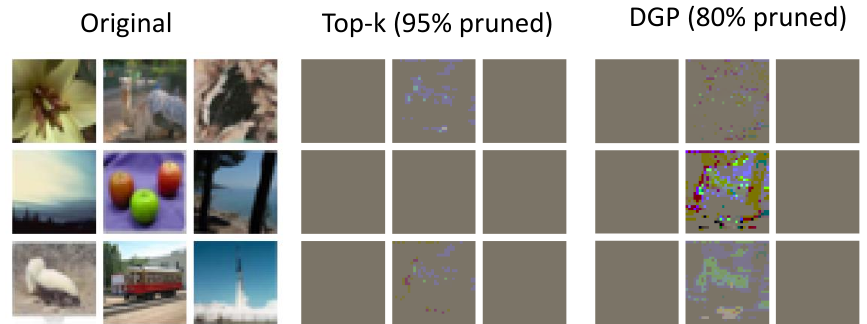}
        \centering
\caption{Reconstruction data visualization under Rob attack on CIFAR100.}
  \label{fig:rob_topkDGP}
\end{figure}
\begin{figure}[t!]
  \centering
  \subfigure[IG attack, batchsize=2]{
   \includegraphics[width=0.43\textwidth]{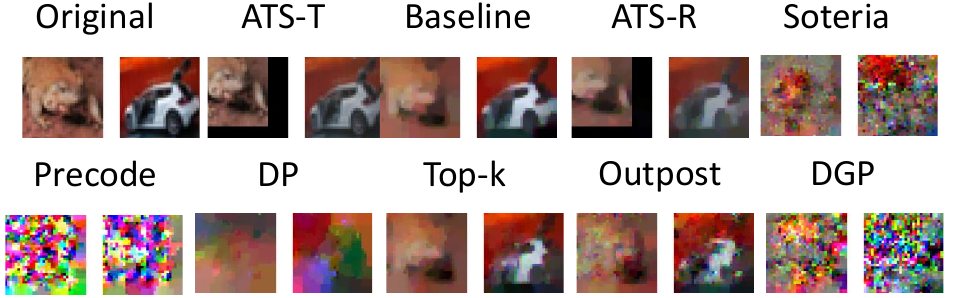} }
  \subfigure[IG attack, batchsize=4]{
\includegraphics[width=0.43\textwidth]{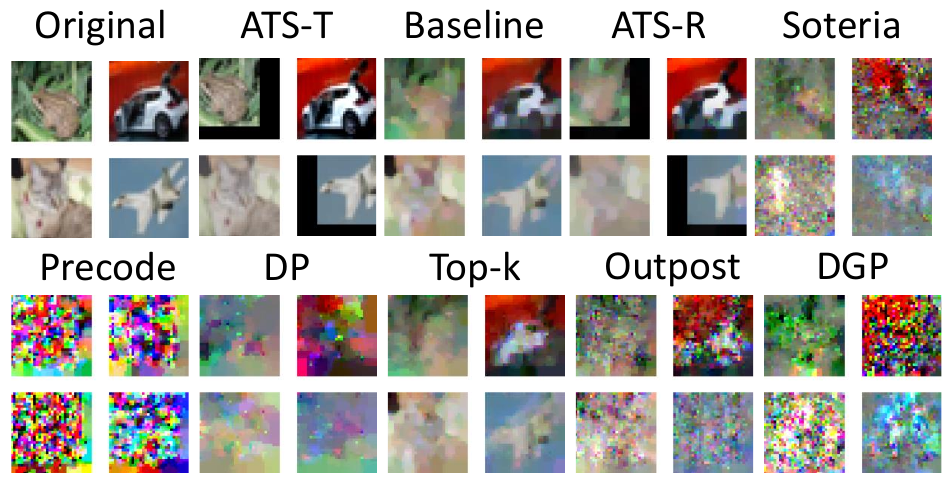}}
\caption{Reconstruction data visualization under IG attack with different batchsizes on  CIFAR10.}
     \label{fig:2_batchsize}
\end{figure}
\begin{figure} [t!] {
\centering
\subfigure{Rob attack, batchsize=2}{
\includegraphics[width=0.43\textwidth]{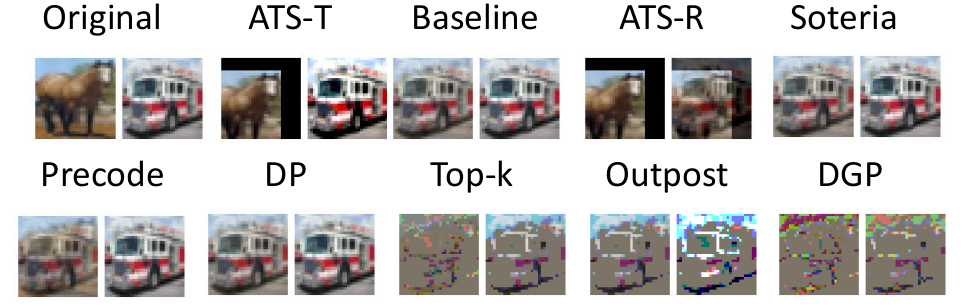} }
\centering
\subfigure{Rob attack, batchsize=4}{
\includegraphics[width=0.43\textwidth]{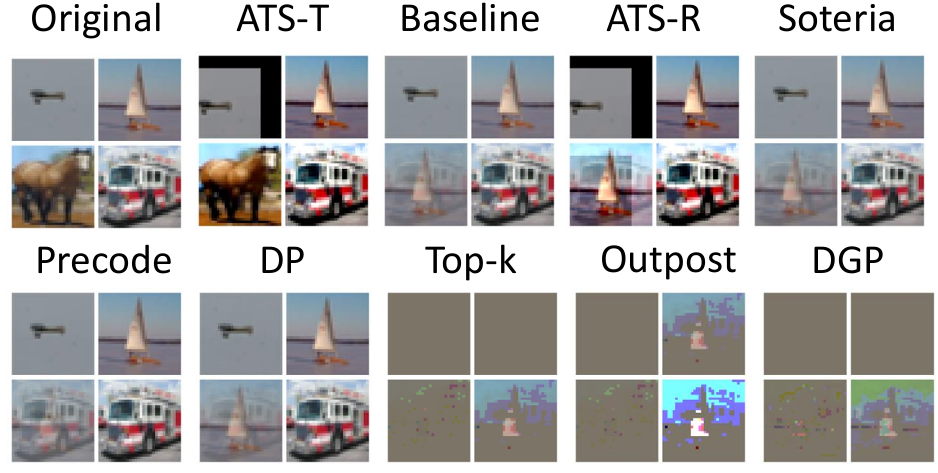} }
\caption{Reconstruction data visualization under Rob attack with different batchsizes on  CIFAR10.}
\label{fig:4_batchsize}
}
\end{figure}
\begin{table}[htbp]
    \vspace{-1em}
  \centering
   \scalebox{0.8}{
    \begin{tabular}{lrrrr}
    \toprule
          & \multicolumn{1}{l}{ResNet18} & \multicolumn{1}{l}{LeNet~(Zhu)} & \multicolumn{1}{l}{CNN6} & \multicolumn{1}{l}{VGG11} \\
    \midrule
    Top-$k$~(95\%) & 73.68\% & 26.75\% & 45.19\% & 68.07\% \\
    DGP~(80\%) & \textbf{74.04}\% & \textbf{32.42}\% & \textbf{47.24}\% & \textbf{68.60}\% \\
    \bottomrule
    \end{tabular}%
    }
     \caption{Model performance on CIFAR100.}
    \vspace{-1em}
  \label{tab:topkDGP}%
\end{table}%

\subsection{Defenses under attacks with different batch sizes}
To better evaluate privacy protection, we implement IG attack and Rob attack with different batchsizes. Fig.~\ref{fig:2_batchsize} and Fig.~\ref{fig:4_batchsize} show that our method protect privacy against IG and Rob attacks better than recent works. In particular, our method can comprehensively defend against gradient inversion attacks, while Top-$k$ and Outpost offer limited privacy protection against IG attack, and Soteria, ATS, Precode cannot defend against Rob attack.

\section{The framework of ADGP}
As shown in Fig.~\ref{fig:frame}, ADGP is achieved by randomly selecting a user, who broadcasts binary matrix $\mathcal{I}$ to all other users. Each user then only transmits gradient parameters whose locations belong to $\mathcal{I}$. 
\begin{figure}[htbp]
    \centering
    \includegraphics[width=8cm]{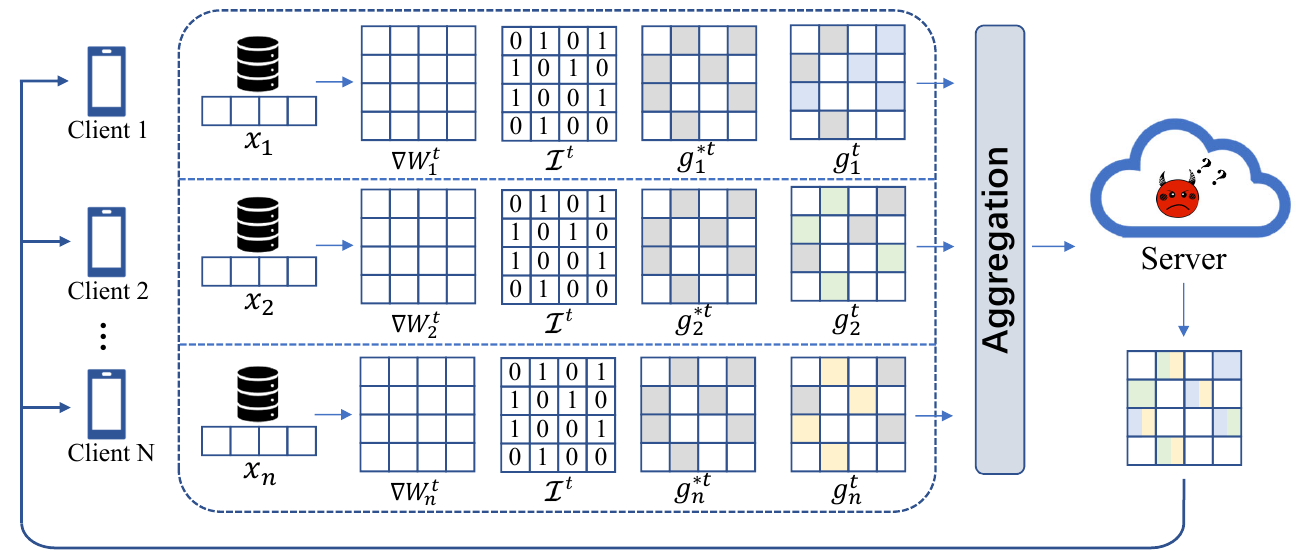}
\caption{The $t$-th iteration model update process, where $\mathbf{g}^{*t}$ represents the gradient parameters whose position belong to $\mathcal{I}$ in $t$-th iteration.}
    \label{fig:frame}
\end{figure}

\end{document}